\documentclass{article}
\usepackage{graphicx}

\usepackage{
amsmath,amssymb,amsfonts,amsthm} 
\usepackage{bm}
\usepackage{colonequals}
\usepackage{comment}
\usepackage{epsfig} \usepackage{latexsym,nicefrac,bbm}
\usepackage
{xspace}
\usepackage{color,fancybox,graphicx,subfigure,fullpage}
\usepackage[top=1in, bottom=1in, left=1in, right=1in]{geometry}
\usepackage{tabularx} 
\usepackage{hyperref} 
\usepackage{cleveref}
\usepackage{pdfsync}
\usepackage[boxruled,linesnumbered]{algorithm2e}
\usepackage{siunitx}

\usepackage{multicol}
\usepackage{enumitem}
\usepackage{float}
\usepackage{tikz}
\usepackage{tabularray}
\usepackage{framed}

\makeatletter
\newcommand{\vast}{\bBigg@{4}}
\newcommand{\Vast}{\bBigg@{5}}
\makeatother

\newtheorem{theorem}{Theorem}[section]
\newtheorem{fact}{Fact}[section]
\newtheorem{conjecture}[theorem]{Conjecture}
\newtheorem{definition}[theorem]{Definition}
\newtheorem{lemma}[theorem]{Lemma}
\newtheorem{remark}[theorem]{Remark}
\newtheorem{proposition}[theorem]{Proposition}

\newtheorem{question}[theorem]{Question}%

\newcommand{\Sym}{\mathrm{Sym}}
\newcommand{\val}{\mathrm{val}}
\newcommand{\EE}{\mathbb{E}}
\newcommand{\Unif}{\mathrm{Unif}}
\newcommand{\avg}{\mathrm{avg}}

\newcommand{\NN}{\mathbb{N}}

\newcommand{\one}{\mathbbm{1}}

\newcommand{\R}{\mathbb{R}}
\newcommand{\RR}{\mathbb{R}}

\newcommand{\QQ}{\mathbb{Q}}
\newcommand{\PP}{\mathbb{P}}

\newcommand{\sO}{\mathcal{O}}

\newcommand{\sI}{\mathcal{I}}
\newcommand{\sJ}{\mathcal{J}}

\newcommand{\ord}{\mathrm{ord}}

\newcommand{\Id}{\mathrm{Id}}
\newcommand{\opt}{\mathrm{opt}}

\newcommand{\eps}{\varepsilon}
\renewcommand{\epsilon}{\varepsilon}

\newcommand{\Var}{\mathrm{Var}}

\newcommand\numberthis{\addtocounter{equation}{1}\tag{\theequation}}

\DeclareMathOperator{\Adv}{Adv}
\DeclareMathOperator{\poly}{poly}

\usepackage{authblk}

\title{Strong Refutation of Random Ordering CSPs}
\author{Xifan Yu\thanks{Email: \texttt{xifan.yu@yale.edu}. Partially supported by ONR Award N00014-24-1-2611.}\,}

\affil{Department of Computer Science, Yale University}

\begin{document}

\maketitle

\begin{abstract}
    In this work, we initiate the study of strongly refuting the satisfiability of random ordering constraint satisfaction problems. We show that there is a polynomial-time $\eps$-refutation algorithm for random ordering CSP with predicate $P$ when the number of clauses is above the threshold $\tilde{\Omega}\left(n^{d/2}/\eps^2\right)$, where $d$ is the coordinate degree of the predicate $P$. We further give a smooth three-way tradeoff between the running time, the clause density, and the refutation strength $\eps$ using the Kikuchi method. Finally, we complement our algorithmic results with a computational lower bound based on the class of low coordinate degree algorithms, providing evidence that the established three-way tradeoff is near optimal.
\end{abstract}

\allowdisplaybreaks{
\section{Introduction}

Constraint satisfaction problems (CSPs) have been a central object in theoretical computer science, with notable examples such as $3$-SAT, MAX-CUT, and graph coloring. There is a long line of research on CSPs, ranging from worst-case approximability \cite{raghavendra2008optimal}, algebraic properties of CSP predicates \cite{bulatov2005classifying, bulatov2017dichotomy, zhuk2020proof}, and fine-grained complexity \cite{impagliazzo2001complexity}.

Beyond worst-case analysis, there is also a vast literature studying the average-case complexity of CSPs, which has received a lot of attention due to its connection to hardness of approximation \cite{feige2002relations}, proof complexity \cite{ben2002gap}, learning theory \cite{daniely2014average}, cryptography \cite{applebaum2010public}, and statistical physics \cite{achlioptas2003threshold}.

\paragraph{CSP Refutation.}

A natural average-case problem associated to random CSPs is the refutation task. Statistically, a random $k$-CSP is known to be unsatisfiable with high probability once its clause density $\frac{m}{n}$ exceeds a large constant depending on $k$. For $k$-SAT, it is moreover conjectured that there exists a constant $\alpha_k$ such that a random $k$-SAT is unsatisfiable with high probability above the clause density threshold $\frac{m}{n} > \alpha_k$, and satisfiable with high probability below the clause density threshold $\frac{m}{n} < \alpha_k$. This conjecture has been settled by \cite{ding2015proof} for all large enough $k$.

However, given a random CSP instance with large enough clause density, can we efficiently refute its satisfiability, i.e., to certify that the instance is unsatisfiable or to declare failure when we cannot produce such a certificate efficiently? Importantly, a refutation algorithm cannot ``cheat'' by always outputting UNSAT, because there is a tiny chance that a random CSP instance with high clause density is still satisfiable and the algorithm should not make any mistakes on such instances. The goal of a refutation algorithm is therefore to efficiently provide a certificate of unsatisfiability with high probability over the distribution of random CSP. A closely related problem is called the strong refutation task, where the goal is not only to with high probability refute the satisfiability of a random CSP, but also to certify that no assignments can satisfy many more clauses than what a random assignment achieves in expectation.

We remark that the strong refutation task is in stark contrast to the search/optimization task, since a strong refutation algorithm needs to certify that any assignment cannot satisfy too many clauses whereas the search task only needs to find one good assignment.

For $k$-XOR, there is a natural spectral algorithm which can strongly refute a random $k$-XOR instance efficiently above the density $m \ge \Omega(n^{k/2}\poly\log(n))$ \cite{allen2015refute}. When the clause density is below the spectral threshold $n \ll m \ll n^{k/2}$, even though no polynomial-time strong refutation algorithm is known (and it is conjectured that efficient strong refutation is impossible below the threshold $n^{k/2}$), there are subexponential-time algorithms \cite{raghavendra2017strongly}, which run in time $\exp(n^\delta)$ for some $\delta < 1$ depending on $m$ and are faster than the brute-force algorithm which iterates over all the potential assignments, that achieve strong refutation below the spectral threshold.  Recently, \cite{chan2026strongly} generalized the theory and studied the refutation problem for random CSP without literals.

Besides the average-case setting, CSP refutation has also been studied in the semirandom settings. \cite{guruswami2022algorithms} studied a smoothed setting of $k$-XOR where the clause pattern is chosen as a worst-case hypergraph instance and the literals in each clause first receive a worst case sign, and are then flipped independently with some small probability. They showed that in the smoothed setting, efficient strong refutation is still achievable above the spectral threshold $m \ge \Omega(n^{k/2}\poly\log(n))$, and moreover they match the time-density tradeoffs in the smoothed case as in the random case of \cite{raghavendra2017strongly}. \cite{hsieh2023simple} further simplified the analysis and obtained better logarithmic dependencies in the density tradeoff for the strong refutation problem in the smoothed setting. \cite{d2022ihara} studied a different semirandom model of $k$-CSP where the clause pattern is sampled as a random hypergraph but the literals within each clause receive a worst-case sign, and they are able to remove the extra logarithmic factors and match the spectral threshold $n^{k/2}$.

\paragraph{Ordering CSPs.}
Parallel to the theory of CSPs, there is also a line of work studying ordering constraint satisfaction problems (OCSP). In a $k$-OCSP, each clause is specified by a $k$-subset $S$ of variables and an ordering predicate on $S$. The question is to understand how many clauses (local ordering constraints) can be satisfied by a global ordering $\pi \in \Sym([n])$. Similar to CSP, there are also a number of natural problems that can be phrased as ordering CSPs, including the maximum acyclic subgraph, higher-order rank aggregation problem, and scheduling problems with local precedence requirements.

There have been a number of works studying ordering CSPs in the worst-case setting. \cite{bodirsky2010complexity} provided a dichotomy between polynomial-time solvability and NP-completeness for ordering predicates. \cite{guruswami2011beating} showed that under the Unique Games Conjecture, ordering CSPs are approximation resistant. \cite{makarychev2015satisfiability} showed a fixed-parameter tractability result for ordering CSPs. \cite{guruswami2012approximating, makarychev2012local} investigated the approximability of bounded occurrence ordering CSPs. \cite{singer2021streaming, saxena2023streaming} studied the hardness of approximating ordering CSPs in the streaming setting with limited space. \cite{makarychev2026approximation} studied the approximation of nearly satisfiable ordering CSPs. 

In the average-case setting, a few random models of ordering CSPs have been considered in the literature. \cite{goerdt2009random} established upper and lower bounds on the satisfiability threshold for random ordering CSPs with cyclic ordering predicates or betweenness predicates. The noisy sorting problem \cite{braverman2009sorting} that has been extensively studied can be viewed as a model of random $2$-OCSP. \cite{kunisky2025statistical} studied the maximum acyclic subgraph problem on a model of random directed graph, which is also a model for random $2$-OCSP.

Unlike the setting of CSP, to the best of our knowledge, there have been little research on the refutation of random ordering CSPs. In this work, we attempt to extend the well-established theory of CSP refutation to the setting of general ordering CSP.

\paragraph{Statistical-to-Computational Gaps.}

In many high-dimensional statistical problems, the phenomenon of statistical-to-computational gaps has been observed, where there is a regime in which the problem is statistically solvable but polynomial-time methods seem to require a much larger signal strength to work. Strong refutation of CSP is one such problem where a statistical-to-computational gap has long been observed. As discussed earlier, a random $k$-SAT instance is known to be unsatisfiable with high probability above a constant clause density and is thus refutable by the brute-force method, whereas it has been observed that no polynomial-time method seems to achieve strong refutation below the spectral threshold $m \approx n^{k/2}$.

It has been a main focus in the average-case complexity community to provide evidence for these statistical-to-computation gaps. One popular method is to prove failure of special classes of powerful algorithms, including the Sum-of-Squares (SoS) hierarchy, low degree polynomial algorithms, and statistical-query algorithms. For CSP refutation, there are SoS lower bounds \cite{kothari2017sum} matching the conjectural algorithmic threshold at the spectral threshold.

The class of low degree polynomial algorithms has received a lot of attention due to the Low Degree Conjecture, proposed in \cite{hopkins2018statistical}, and it has since then been successful in predicting the computational threshold for many problems. See \cite{kunisky2019notes}. In this work, we follow this route and base our computational lower bounds for refutation of ordering CSPs on the class of low (coordinate) degree polynomial algorithms.

\subsection{Our Contribution}

Motivated by the series of works studying refutation of CSPs, we initiate the study of strong refutation of random ordering CSPs.
\begin{question}
    For a random $k$-OCSP with ordering predicate $P$, at what clause density does there exist an efficient strong refutation algorithm?
\end{question}

We answer this question by showing that conditioned on the Low Degree Conjecture, the threshold of the critical density for polynomial-time refutation is around $m \approx n^{d/2}$, where $d = D(P)$ is what we term as the coordinate degree (see Definition~\ref{def:coordinate-degree}) of the ordering predicate $P$. We furthermore obtain a three-way tradeoff between the running time, the clause density, and the refutation strength (Theorem~\ref{thm:kikuchi}), and provide evidence that the tradeoff is optimal up to a polylogarithmic factor (Theorem~\ref{thm:low-deg-hardness}).

\section{Preliminaries}

\subsection{Notations}

For $n \in \NN$, we will denote $[n] = \{1, 2, \dots, n\}$. Let $S$ be a finite set. We use $|S|$ to denote the cardinality of $S$. For a finite set $S$ with $|S| = s$, we use $\Sym(S)$ to denote the set of orderings on $S$ \footnote{Note that this is different from the symmetric group on $S$, commonly denoted as $\Sym(S)$, which is defined as the group of bijections from $S$ to itself. We overload the notation of $\Sym(S)$ in this work.},
\begin{align*}
    \Sym(S) \colonequals \{\mu = (\mu_1, \dots, \mu_s): \mu_i \text{ distinct}, \{\mu_1, \dots, \mu_s\} = S\},
\end{align*}
i.e., the set of tuples corresponding to ordered list of distinct elements of $S$. For $t \le s$, we use $\sO_t(S)$ to denote the set of ordered list of distinct elements of $S$ of length $t$,
\begin{align*}
    \sO_t(S) \colonequals \{\mu = (\mu_1, \dots, \mu_t): \mu_i \text{ distinct}, \{\mu_1, \dots, \mu_t\} \subseteq S\}.
\end{align*}
For a tuple $\mu = (\mu_1, \dots, \mu_s) \in \Sym(S)$, we will often treat $\mu$ as a function from $[s]$ to $S$, and let $\mu(i) = \mu_i$ for $1\le i \le s$ by a slight abuse of notation.

For $\mu = (\mu_1, \dots, \mu_s) \in \Sym(S)$ and $T \subseteq S$, the induced ordering $\mu[T] \in \Sym(T)$ is the tuple obtained from $(\mu_1, \dots, \mu_s)$ by deleting the elements $\mu_i \in S \setminus T$. For $k \in \NN$, we will use $\Id \in \Sym([k])$ to denote the identity ordering: $\Id = (1, 2, \dots, k)$.

When the dimension is clear, we will use $I \in \RR^{n \times n}$ to denote the identity matrix, and $J \in \RR^{n \times n}$ to denote the all-ones matrix. We will use subscript for $I_S$ ($J_S$, resp.) to denote the corresponding identity matrix (all-ones matrix resp.) supported on rows and columns indexed by $S$. For a real symmetric matrix $A$, we will use $\lambda_{\max}(A)$ to denote its largest eigenvalue, $\|A\|_2$ to denote its spectral norm, and $\|A\|_{\text{Fr}}$ to denote its Frobenius norm.

We will use standard asymptotic notations $O(\cdot), o(\cdot), \Omega(\cdot), \omega(\cdot)$. All the ``with high probability'' statements are understood as taking $n \to \infty$.

\subsection{Probability Tools}

We review a list of concentration inequalities that we will use.

\begin{theorem}[Bernstein Inequality]\label{thm:bernstein}
    Suppose $X_1, \dots, X_N$ are independent zero-mean random variables. Suppose $|X_i| \le M$ for all $i \in [N]$. Then, for all $s \ge 0$,
    \begin{align*}
        \Pr\left(\sum_{i=1}^N X_i \ge s\right) \le \exp\left(-\frac{\frac{1}{2}s^2}{\sum_{i=1}^N \EE[X_i^2] + \frac{1}{3}Ms}\right).
    \end{align*}
\end{theorem}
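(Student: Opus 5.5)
We prove the Bernstein inequality by the standard exponential-moment (Chernoff) method, with a Bennett-type bound on each moment generating function.

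For any $\lambda > 0$, Markov's inequality applied to $e^{\lambda \sum_i X_i}$, together with independence, gives
\begin{align*}
    \Pr\left(\sum_{i=1}^N X_i \ge s\right) \le e^{-\lambda s}\prod_{i=1}^N \EE\left[e^{\lambda X_i}\right].
\end{align*}
Next we bound each factor. Since $\EE[X_i]=0$, expanding the exponential gives
\begin{align*}
    \EE[e^{\lambda X_i}] = 1 + \sum_{q\ge 2}\frac{\lambda^q \EE[X_i^q]}{q!}.
\end{align*}
Using $|X_i|\le M$, we have $|\EE[X_i^q]| \le M^{q-2}\EE[X_i^2]$ for $q \ge 2$. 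Combining this with $q! \ge 2\cdot 3^{q-2}$ and summing the geometric series, for $0<\lambda<3/M$ we get
\begin{align*}
    \EE[e^{\lambda X_i}] \le 1 + \frac{\lambda^2\EE[X_i^2]}{2}\sum_{q\ge 2}\left(\frac{\lambda M}{3}\right)^{q-2} = 1+\frac{\lambda^2 \EE[X_i^2]}{2(1-\lambda M/3)} \le \exp\left(\frac{\lambda^2 \EE[X_i^2]}{2(1-\lambda M/3)}\right),
\end{align*}
where the last step uses $1+x\le e^x$.

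Write $\sigma^2 = \sum_i \EE[X_i^2]$. Multiplying the factor bounds, the tail is at most
\begin{align*}
    \exp\left(-\lambda s + \frac{\lambda^2\sigma^2}{2(1-\lambda M/3)}\right).
\end{align*}
We choose $\lambda = s/(\sigma^2 + Ms/3)$. This satisfies $\lambda M/3 < 1$, and a short computation gives $1-\lambda M/3 = \sigma^2/(\sigma^2+Ms/3)$. Substituting, the exponent becomes
\begin{align*}
    -\frac{s^2}{\sigma^2+Ms/3} + \frac{s^2}{2(\sigma^2+Ms/3)} = -\frac{\tfrac12 s^2}{\sigma^2 + \tfrac13 Ms},
\end{align*}
which is exactly the claimed bound.

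The degenerate cases are handled separately:
\begin{itemize}
    \item If $s=0$, the stated bound is $1$ and holds trivially.
    \item If $\sigma^2=0$, then every $X_i=0$ almost surely. For $s>0$ the probability is $0$, which is again at most the stated bound.
\end{itemize}

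The main point requiring care is the moment comparison $q!\ge 2\cdot 3^{q-2}$. This inequality is what produces the constant $\tfrac13$ in front of $Ms$, and it holds for every $q\ge2$ by induction on $q$. The rest of the argument is routine algebra.
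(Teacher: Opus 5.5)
Your proof is correct. It is the standard Chernoff-method proof of Bernstein's inequality: you bound each moment generating function using $|\EE[X_i^q]|\le M^{q-2}\EE[X_i^2]$ and $q!\ge 2\cdot 3^{q-2}$, then choose $\lambda = s/(\sigma^2+Ms/3)$, which satisfies $\lambda M/3<1$ whenever $\sigma^2>0$. The paper gives no proof of this statement (it is quoted as a standard concentration tool), so there is no argument to compare yours against. The only thing to flag is that at $s=\sigma^2=0$ the bound reads $\exp(-0/0)$, and your claim that it equals $1$ there is a convention rather than a computation.
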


\begin{theorem}[Matrix Bernstein Inequality]\label{thm:matrix-bernstein}
    Suppose $X_1, \dots, X_N$ are independent zero-mean random real symmetric matrices of dimension $n \times n$. Suppose $\lambda_{\max}(X_i) \le M$ for all $i \in [N]$. Then, for all $s \ge 0$,
    \begin{align*}
        \Pr\left(\lambda_{\max}\left(\sum_{i=1}^N X_i\right) \ge s\right) \le n\cdot\exp\left(-\frac{s^2}{\sigma^2 + \frac{1}{3}Ms}\right),
    \end{align*}
    where
    \begin{align*}
        \sigma^2 \colonequals \|\sum_{i=1}^N \EE[X_i^2]\|_2.
    \end{align*}
\end{theorem}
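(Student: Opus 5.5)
The plan is the matrix Laplace-transform method of Ahlswede--Winter and Tropp, with the constants tracked against the form stated in Theorem~\ref{thm:matrix-bernstein}. Let $Y = \sum_{i=1}^N X_i$ and fix $\theta \in (0, 3/M)$. The spectral mapping theorem gives $e^{\theta \lambda_{\max}(Y)} = \lambda_{\max}(e^{\theta Y}) \le \mathrm{tr}\, e^{\theta Y}$. Markov's inequality then gives
\begin{align*}
    \Pr\left(\lambda_{\max}(Y) \ge s\right) \le e^{-\theta s}\, \EE\, \mathrm{tr}\, e^{\theta Y}.
\end{align*}
To decouple the summands I would use Lieb's concavity theorem, via Tropp's corollary. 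For independent symmetric $X_i$ it gives
\begin{align*}
    \EE\, \mathrm{tr}\, \exp\Big(\sum_i \theta X_i\Big) \le \mathrm{tr}\, \exp\Big(\sum_i \log \EE\, e^{\theta X_i}\Big).
\end{align*}
This inequality is obtained by conditioning on $X_1,\dots,X_{N-1}$, applying Jensen to the concave map $A \mapsto \mathrm{tr}\exp(H + \log A)$, and iterating over the summands.

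Next comes the one-summand bound. Set $g(\theta) = (e^{\theta} - \theta - 1)/\theta^2$, which is increasing. For a symmetric matrix $X$ with $\lambda_{\max}(X) \le M$, the transfer rule gives $e^{\theta X} \preceq I + \theta X + g(\theta M)\,\theta^2 X^2$. Taking expectations, $\EE X_i = 0$, and operator monotonicity of $\log$ (with $\log(I+A) \preceq A$) yield $\log \EE e^{\theta X_i} \preceq g(\theta M)\theta^2\, \EE X_i^2$. Since $\mathrm{tr}\exp$ is monotone, and $\mathrm{tr}\, e^{A} \le n e^{\lambda_{\max}(A)}$ for an $n\times n$ matrix,
\begin{align*}
    \Pr\left(\lambda_{\max}(Y) \ge s\right) \le n \exp\left(-\theta s + g(\theta M)\theta^2 \sigma^2\right).
\end{align*}
The Taylor estimate $g(x) \le \frac{1/2}{1 - x/3}$ for $0 \le x < 3$ then leaves an optimization over $\theta$. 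Taking $\theta = s/(\sigma^2 + Ms/3)$ gives $n\exp\big(-\tfrac{s^2/2}{\sigma^2 + Ms/3}\big)$.

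The main obstacle is exactly the missing factor $\tfrac12$, and I expect it to be insurmountable rather than merely technical. The statement as literally written is false, so no choice of $\theta$ or sharper estimate of $g$ can close the gap. A counterexample already exists for $n = 1$. Take $X_i = \pm \eta$ Rademacher with $M = \eta$ and $\sigma^2 = N\eta^2$. By the central limit theorem, $\Pr(Y \ge t\sigma) \to \Pr(Z \ge t) \approx e^{-t^2/2}/(t\sqrt{2\pi})$ for a standard Gaussian $Z$. The claimed bound at $s = t\sigma$ is instead $\exp\big(-t^2/(1 + \eta t/(3\sigma))\big) \to e^{-t^2}$ as $\eta/\sigma = N^{-1/2} \to 0$. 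For fixed moderate $t$ (say $t = 3$) the true tail exceeds the claimed bound.

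So I would prove the statement in the corrected form $n\exp\big(-\tfrac{s^2/2}{\sigma^2 + Ms/3}\big)$ derived above, which is the true Matrix Bernstein inequality. I would then note that every use in the paper only needs $\Pr(\lambda_{\max}(Y) \ge s)$ to be polynomially small for $s \asymp \sigma\sqrt{\log n} + M\log n$. That requirement survives the factor $2$ loss by enlarging the absolute constant in $s$. The same remark applies to the scalar Theorem~\ref{thm:bernstein}, which is already stated with the correct $\tfrac12$.
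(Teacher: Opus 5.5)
Your proposal is correct, and so is your central observation: the inequality as printed is false. It is Tropp's matrix Bernstein bound with the factor $\tfrac12$ dropped from the numerator of the exponent; the paper's scalar Theorem~\ref{thm:bernstein} keeps that factor. The paper gives no proof of this statement, since it is quoted as a standard tool, so there is no argument of the paper's to compare against.

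Your route is exactly the standard proof: the Laplace transform, Lieb's theorem via Tropp's subadditivity of matrix cumulant generating functions, the transfer-rule bound $\log\EE e^{\theta X_i}\preceq g(\theta M)\theta^2\EE X_i^2$, and then $\theta=s/(\sigma^2+Ms/3)$. The constants check out, including $g(x)\le\frac{1/2}{1-x/3}$, which follows from $(j+2)!\ge 2\cdot 3^j$. The only loose end is the degenerate case $\sigma^2=0$, where your choice of $\theta$ gives $\theta M=3$. In that case every $\EE X_i^2=0$, so all $X_i=0$ almost surely and the bound is trivial.

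Your CLT counterexample is valid, but a one-line one suffices. Take $n=N=1$ and $X_1$ uniform on $\{\pm1\}$, so $M=\sigma^2=1$. At $s=1$ the true probability is $\tfrac12$, while the printed bound is $e^{-3/4}\approx 0.472$.

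Your assessment of the downstream damage is also right. The only use is in Proposition~\ref{prop:G-concentration}, which needs $2n\exp(\cdot)=o(1)$ at $h=C\sqrt{(\overline{m}/n)\log n}$ under $\overline{m}\ge c_0\, n\log n$. Halving the exponent is absorbed by enlarging $C$ in terms of $c_0$ and $t$.
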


\begin{theorem}[McDiarmid Inequality]\label{thm:mcdiarmid}
    Let $f: \mathcal{X}_1 \times \dots \times \mathcal{X}_N \to \RR$ be a function. We say $f$ satisfies the bounded difference property with parameters $c_1, \dots, c_N$ if for all $(x_1, \dots, x_N) \in \mathcal{X}_1 \times \dots \times \mathcal{X}_N$, and all $i \in [N]$,
    \begin{align*}
        \sup_{x_i' \in \mathcal{X}_i}\left|f(x_1, \dots, x_{i-1}, x_i, x_{i+1}, \dots, x_N) - f(x_1, \dots, x_{i-1}, x_i', x_{i+1}, \dots, x_N) \right| \le c_i.
    \end{align*}

    Let $X_1, \dots, X_N$ be independent random variables with $X_i$ taking values in $\mathcal{X}_i$. If $f$ satisfies the bounded difference property with parameters $c_1, \dots, c_N$, then
    \begin{align*}
        \Pr\left(f(X_1, \dots, X_N) - \EE[f(X_1, \dots, X_N)] \ge s\right) \le \exp\left(-\frac{2s^2}{\sum_{i=1}^N c_i^2}\right).
    \end{align*}
\end{theorem}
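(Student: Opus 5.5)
We prove the McDiarmid inequality by the standard Doob martingale argument combined with Hoeffding's lemma. Write $X = (X_1, \dots, X_N)$ and define the Doob martingale $Z_i \colonequals \EE[f(X) \mid X_1, \dots, X_i]$ for $0 \le i \le N$, so that $Z_0 = \EE[f(X)]$ and $Z_N = f(X)$. The deviation then telescopes as $f(X) - \EE f(X) = \sum_{i=1}^N D_i$ with increments $D_i \colonequals Z_i - Z_{i-1}$, and $\EE[D_i \mid X_1, \dots, X_{i-1}] = 0$.

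The key step is to show that, conditionally on $X_1, \dots, X_{i-1}$, each increment $D_i$ lies in an interval of length at most $c_i$. By independence of the $X_j$, we can write $Z_i = g_i(X_1, \dots, X_i)$, where
\begin{align*}
    g_i(x_1, \dots, x_i) \colonequals \EE\left[f(x_1, \dots, x_i, X_{i+1}, \dots, X_N)\right].
\end{align*}
Set
\begin{align*}
    L_i \colonequals \inf_{x} g_i(X_1, \dots, X_{i-1}, x) - Z_{i-1}, \qquad U_i \colonequals \sup_{x} g_i(X_1, \dots, X_{i-1}, x) - Z_{i-1}.
\end{align*}
Then $L_i \le D_i \le U_i$, and both $L_i$ and $U_i$ are measurable with respect to $X_1, \dots, X_{i-1}$. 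Moreover, $U_i - L_i \le c_i$, because for any $x, x'$ the bounded difference property bounds $|f(\dots, x, \dots) - f(\dots, x', \dots)| \le c_i$ pointwise in the remaining coordinates $X_{i+1}, \dots, X_N$, and this bound survives taking the expectation over those coordinates. Independence is used exactly here: it lets the conditional expectation be computed by integrating the later coordinates against their unconditional law, regardless of the value substituted in coordinate $i$.

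With this in hand, we apply Hoeffding's lemma conditionally: a random variable with mean zero supported in an interval of length $c$ satisfies $\EE e^{\lambda D} \le e^{\lambda^2 c^2/8}$. This gives $\EE[e^{\lambda D_i} \mid X_1, \dots, X_{i-1}] \le e^{\lambda^2 c_i^2/8}$ for every $\lambda > 0$. Iterating the tower property from $i = N$ down to $i = 1$ yields
\begin{align*}
    \EE\, e^{\lambda (f(X) - \EE f(X))} \le \exp\left(\frac{\lambda^2}{8} \sum_{i=1}^N c_i^2\right).
\end{align*}
Markov's inequality then gives
\begin{align*}
    \Pr\left(f(X) - \EE f(X) \ge s\right) \le \exp\left(-\lambda s + \frac{\lambda^2}{8}\sum_{i=1}^N c_i^2\right),
\end{align*}
and choosing $\lambda = 4s / \sum_{i=1}^N c_i^2$ produces the claimed bound $\exp\left(-2s^2/\sum_{i=1}^N c_i^2\right)$.

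The main obstacle is the conditional range bound $U_i - L_i \le c_i$. One must check that the endpoints $L_i$ and $U_i$ depend only on the past coordinates, not on $X_i$, since this is what makes the conditional Hoeffding lemma applicable. It also requires the measurability and integrability of $g_i$, which we take for granted since $f$ has bounded differences and is therefore bounded in each coordinate. Hoeffding's lemma itself is standard: one bounds $e^{\lambda x}$ on the interval by its chord and then analyzes the logarithm of the resulting expression by a second-order Taylor bound.
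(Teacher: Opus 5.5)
Your argument is correct: it is the standard Doob-martingale proof. The conditional range bound $\sup_x g_i - \inf_x g_i \le c_i$, Hoeffding's lemma, and the choice $\lambda = 4s/\sum_i c_i^2$ give exactly the stated exponent. The paper does not prove this inequality at all; it is quoted as a standard tool in the preliminaries, so there is no competing route to compare against.

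One minor tightening concerns $L_i$ and $U_i$. You do not need them to be measurable, which is fortunate, because an infimum or supremum over uncountably many $x \in \mathcal{X}_i$ need not be. By independence, the conditional law of $D_i$ given $X_1 = x_1, \dots, X_{i-1} = x_{i-1}$ is that of $g_i(x_1, \dots, x_{i-1}, X_i) - \EE[g_i(x_1, \dots, x_{i-1}, X_i)]$. This is centered with range in an interval of length $c_i$, so Hoeffding's lemma applies for each fixed value of the past.

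What you do need is that $f$ is measurable. Bounded differences gives boundedness, hence integrability, but not measurability, so this has to be assumed separately.
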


\subsection{Ordering Constraint Satisfaction Problem}

An ordering constraint satisfaction problem (OCSP) on a set of variables $x_1, ..., x_n$ is specified by a collection of clauses $C_1, \dots, C_m$, where each clause $C_t$ is specified by a subset of the variables $x_{S_t} \colonequals \{x_i: i\in S_t\}$ indexed by $S_t$ and an ordering constraint $P_t: \Sym(S) \to \{0,1\}$. A clause on $x_S$ is satisfied by some ordering in $\Sym([n])$ if the induced ordering in $\Sym(S)$ satisfies the clause constraint. The goal is to find an ordering of the variables in $\Sym([n])$ that maximizes the number of satisfied clauses. If every clause of an OCSP instance is supported on $k$ variables, it is a $k$-OCSP instance.

In this paper, we consider the following model of random OCSP.
\begin{definition}[$p$-Random OCSP]
    Let $k \ge 2$ and $P: \Sym([k]) \to \{0,1\}$ be an ordering predicate. A $p$-random $k$-OCSP with predicate $P$ is generated in the following way:
    \begin{itemize}
        \item For every $k$-subset $S \in \binom{[n]}{k}$, sample $S$ independently with probability $p$.
        \item For every sampled $S \in \binom{[n]}{k}$, sample a uniformly random ordering $\nu \in \Sym(S)$, and add a clause $(S, P \circ \nu^{-1})$.
    \end{itemize}
\end{definition}

\begin{remark}
    It is also natural to consider other random models of OCSP. One alternative random model has a fixed number of clauses, with each clause drawn independently from the uniform distribution over all clauses. We expect many such models to behave similarly.
\end{remark}

\subsection{Strong Refutation of Random OCSP}

\begin{definition}[Value of OCSP]
Let $k \ge 2$ and $P: \Sym([k]) \to \{0,1\}$ be an ordering predicate. Let $\sI = \{(S_i, P \circ \nu_i^{-1})\}_{i=1}^m$ be a $k$-OCSP instance with predicate $P$ on $m$ clauses. The value of $\sI$ achieved by some $\pi \in \Sym([n])$ is defined as
\begin{align*}
    \val(\sI; \pi) = \frac{1}{m} \sum_{i=1}^m P(\nu_i^{-1}(\pi[S_i])),
\end{align*}
where $\pi[S_i]$ is the induced ordering of $\pi$ on $S_i$. The value of $\sI$ is defined as the maximum value achieved by any $\pi \in \Sym([n])$
\begin{align*}
    \val(\sI) = \max_{\pi \in \Sym([n])} \val(\sI; \pi).
\end{align*}
\end{definition}

Note that the expected value of $\sI$ achieved by a uniform random permutation $\pi \in \Sym([n])$ is
\begin{align*}
    \underset{\pi \sim \Unif(\Sym([n]))}{\EE}[\val(\sI; \pi)] = \underset{\mu \sim \Unif(\Sym([k]))}{\EE}[P(\mu)] = \frac{\left|P^{-1}(1)\right|}{k!} \equalscolon \avg(P).
\end{align*}

Observe that the number of clauses $m$ follows a binomial distribution $\text{Bin}(\binom{n}{k}, p)$ with mean $\overline{m} \colonequals p\binom{n}{k}$. We have the following concentration result for $m$.
\begin{lemma}\label{lem:m-concentration}
    Let $m \sim \text{Bin}(\binom{n}{k}, p)$ and $\overline{m} \colonequals p\binom{n}{k}$. For any $1 > \eps > 0$, we have
    \begin{align*}
        \Pr\left(|m - \overline{m}| \ge \eps \overline{m}\right) \le 2 \exp\left(-\frac{3\eps^2 \overline{m}}{8}\right). 
    \end{align*}
\end{lemma}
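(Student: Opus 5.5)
We prove each tail separately and combine them with a union bound. Write $m = \sum_{S \in \binom{[n]}{k}} Y_S$, where the $Y_S \sim \text{Bernoulli}(p)$ are independent. Set $X_S = Y_S - p$ and $N = \binom{n}{k}$. Then the $X_S$ are independent and zero-mean, with $|X_S| \le 1$ and $\EE[X_S^2] = p(1-p) \le p$. Summing, $\sum_S \EE[X_S^2] \le pN = \overline{m}$.

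For the upper tail we apply Theorem~\ref{thm:bernstein} with $M = 1$ and $s = \eps\overline{m}$:
\begin{align*}
    \Pr\left(m - \overline{m} \ge \eps\overline{m}\right) \le \exp\left(-\frac{\frac{1}{2}\eps^2\overline{m}^2}{\overline{m} + \frac{1}{3}\eps\overline{m}}\right) = \exp\left(-\frac{\eps^2\overline{m}}{2(1+\eps/3)}\right).
\end{align*}
Since $0 < \eps < 1$, we have $1 + \eps/3 < 4/3$. Hence $2(1+\eps/3) < 8/3$, and the exponent is at most $-\frac{3\eps^2\overline{m}}{8}$.

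For the lower tail we apply the same inequality to the variables $-X_S$. These are again independent and zero-mean, with $|{-X_S}| \le 1$ and the same variances, so the identical bound holds for $\Pr(\overline{m} - m \ge \eps\overline{m})$.

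The event $|m - \overline{m}| \ge \eps\overline{m}$ is the union of the two tail events, so a union bound gives the factor $2$ and completes the proof. The only step requiring any care is checking that the constant $3/8$ follows from $\eps < 1$ through the term $\frac{1}{3}Ms$ in Theorem~\ref{thm:bernstein}. This is immediate, as shown above, so there is no real obstacle.
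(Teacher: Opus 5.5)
Your proposal is correct and follows the same route as the paper: Bernstein's inequality applied to the centered Bernoulli indicators with $M=1$, the variance sum bounded by $\overline{m}$, and $\eps<1$ used to turn $\frac{1}{2(1+\eps/3)}$ into $\frac{3}{8}$. You also spell out the two tails and the union bound that the paper folds into the factor $2$.
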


\begin{proof}
    By Bernstein inequality (Theorem~\ref{thm:bernstein}),
    \begin{align*}
        \Pr\left(|m - \overline{m}| \ge \eps \overline{m}\right) &\le 2 \exp\left(-\frac{\frac{1}{2}\eps^2 \overline{m}^2}{\binom{n}{k}p(1-p)+\frac{1}{3}\eps \overline{m}}\right)\\
        &\le 2 \exp\left(-\frac{\frac{1}{2}\eps^2 \overline{m}^2}{\overline{m}+\frac{1}{3}\eps \overline{m}}\right)\\
        &\le 2 \exp\left(-\frac{3\eps^2 \overline{m}}{8}\right).
    \end{align*}
\end{proof}

Suppose that the predicate $P$ is not trivial, i.e., $0 < \avg(P) < 1$. By a simple union bound over all $\pi \in \Sym([n])$, we have the following result on the typical value of a random OCSP instance.
\begin{proposition}
    Let $k \ge 2$ and $P: \Sym([k]) \to \{0,1\}$ be a non-trivial ordering predicate. There exists a constant $C > 0$ such that for any $\eps = \eps(n) > 0$, if $p = p(n) > 0$ satisfies 
    $$\overline{m} \ge \frac{C}{\eps^2}n\log n$$
    where $\overline{m} \colonequals p \binom{n}{k}$, then a $p$-random $k$-OCSP instance $\sI$ with predicate $P$ satisfies
    \begin{align*}
        \Pr\left(\left|\val(\sI) - \avg(P)\right| \ge \eps\right) = o(1). 
    \end{align*}
\end{proposition}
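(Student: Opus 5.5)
The plan is to condition on the random hypergraph, bound the deviation for a single fixed ordering $\pi$ by a concentration inequality over the random orderings $\nu_S$, and then take a union bound over all $n! \le e^{n\log n}$ orderings $\pi \in \Sym([n])$.

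\textbf{Step 1: concentration of $m$.} By Lemma~\ref{lem:m-concentration} applied with $\eps$ replaced by $1/2$, we have $m \ge \overline{m}/2$ with probability $1 - 2\exp(-3\overline{m}/32) = 1-o(1)$. This uses that $\overline{m} \ge C n\log n/\eps^2 \to \infty$; note that $\eps \le 1$ may be assumed, since values lie in $[0,1]$ and the case $\eps>1$ is trivial. We then condition on the set of sampled $k$-subsets $S_1,\dots,S_m$ with $m \ge \overline{m}/2$.

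\textbf{Step 2: a fixed ordering.} Fix $\pi \in \Sym([n])$. The orderings $\nu_i \in \Sym(S_i)$ are independent and uniform. For any fixed $\pi$, the map $\nu_i \mapsto \nu_i^{-1}(\pi[S_i]) \in \Sym([k])$ is a bijection, so $\nu_i^{-1}(\pi[S_i])$ is uniform on $\Sym([k])$. Consequently the variables $X_i = P(\nu_i^{-1}(\pi[S_i]))$ are independent Bernoulli$(\avg(P))$, and $m\,\val(\sI;\pi) = \sum_i X_i$. Hoeffding's inequality (or McDiarmid, Theorem~\ref{thm:mcdiarmid}, with $c_i = 1/m$ applied to $\val(\sI;\pi)$ as a function of $\nu_1,\dots,\nu_m$) gives
\begin{align*}
\Pr\left(\left|\val(\sI;\pi) - \avg(P)\right| \ge \eps\right) \le 2\exp(-2\eps^2 m) \le 2\exp(-\eps^2\overline{m}).
\end{align*}

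\textbf{Step 3: union bound.} Taking a union bound over the $n!$ orderings, the probability that some $\pi$ deviates by at least $\eps$ is at most
\begin{align*}
2\exp(n\log n - \eps^2\overline{m}) \le 2\exp\bigl(-(C-1)n\log n\bigr) = o(1)
\end{align*}
for, say, $C=2$. On the complementary event, every $\pi$ satisfies $|\val(\sI;\pi) - \avg(P)| < \eps$. Hence $\val(\sI) < \avg(P)+\eps$, and also $\val(\sI) \ge \val(\sI;\pi) > \avg(P)-\eps$ for any fixed $\pi$. Together these give $|\val(\sI)-\avg(P)|<\eps$.

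The only point needing care is the uniformity claim in Step 2. It is an exact statement for each fixed $\pi$ and does not depend on the hypergraph, which is exactly why the union bound over $\pi$ only pays the $n\log n$ entropy factor. Non-triviality of $P$ is not actually needed for the bound itself; it only makes the statement meaningful.
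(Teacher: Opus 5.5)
Your proof is correct and follows the paper's strategy: concentration of $\val(\sI;\pi)$ for each fixed $\pi$, a union bound over the $n! \le e^{n\log n}$ orderings, and Lemma~\ref{lem:m-concentration} to guarantee $m \ge \overline{m}/2$. The differences are minor. You condition on the hypergraph and apply Hoeffding to the $m$ i.i.d.\ Bernoulli$(\avg(P))$ variables, whereas the paper applies Bernstein unconditionally to $\sum_{S} I_S\bigl(P(\nu_S^{-1}(\pi[S]))-\avg(P)\bigr)$ over all $\binom{n}{k}$ subsets and union-bounds the $m$-concentration event along with it; as a result your argument works with $C=2$ rather than the paper's $C>32/3$.
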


\begin{proof}
    Fix an arbitrary $\pi \in \Sym([n])$. Consider a $p$-random $k$-OCSP instance $\sI = \{(S_i, P \circ \nu_i^{-1})\}_{i=1}^m$ with predicate $P$ on $m$ clauses. Note that $m \sim \text{Bin}\left(\binom{n}{k}, p\right)$ is a binomial random variable. We have
    \begin{align*}
        \val(\sI; \pi) = \frac{1}{m} \sum_{i=1}^m P(\nu_i^{-1}(\pi[S_i])).
    \end{align*}

    For any $S \in \binom{[n]}{k}$, let $I_{S}$ denote the indicator variable that $S \in \binom{[n]}{k}$ is sampled by the $p$-random instance, and let $\nu_S \sim \Unif(\Sym(S))$ so that whenever $S$ is sampled, the clause $(S, P \circ \nu_S^{-1})$ is included in the instance. Then, the random variables $I_{S}$ are i.i.d.~$\text{Bern}(p)$ variables, and we may write the numerator and denominator of $\val(\sI; \pi)$ as
    \begin{align*}
        m &= \sum_{\substack{S \in \binom{[n]}{k}}} I_{S},\\
        \sum_{i=1}^m P(\nu_i^{-1}(\pi[S_i])) &= \sum_{S \in \binom{[n]}{k},} I_{S}\cdot P(\nu_S^{-1}(\pi[S])).
    \end{align*}
    Note that both $I_{S}$ and $I_{S} \cdot P(\nu_S^{-1}(\pi[S]))$ are Bernoulli random variables. Moreover, we have
    \begin{align*}
        \Var(I_{S}) &= p(1-p),\\
        \Var\left(I_{S}\left( P(\nu_S^{-1}(\pi[S])) - \avg(P)\right)\right) &\le p(1-p),\\
        \EE[m] &= p \binom{n}{k},\\
        \EE\left[\sum_{S \in \binom{[n]}{k}} I_{S} \left(P(\nu_S^{-1}(\pi[S])) - \avg(P)\right)\right] &= p \sum_{S \in \binom{[n]}{k}} \underset{\nu_S \sim \Unif(\Sym(S))}{\EE} [P(\nu_S^{-1}(\pi[S])) - \avg(P)]\\
        &= 0.
    \end{align*}
    Denote $\overline{m} \colonequals p \binom{n}{k}$.
    By Lemma~\ref{lem:m-concentration},
    \begin{align*}
        \Pr\left(\overline{m} - m \ge \frac{1}{2} \overline{m} \right) &\le  2\exp\left(-\frac{3\overline{m}}{32} \right).
    \end{align*}
    By Bernstein inequality (Theorem~\ref{thm:bernstein}), we get
    \begin{align*}
        \Pr\left(\left|\sum_{S \in \binom{[n]}{k}} I_{S} \left(P(\nu_S^{-1}(\pi[S])) - \avg(P)\right)\right| \ge \frac{1}{2}\eps\overline{m} \right) &\le 2\exp\left(- \frac{\frac{1}{8}\eps^2 \overline{m}^2}{\binom{n}{k}p(1-p)  + \frac{1}{6} \eps \overline{m} }\right)\\
        &\le 2\exp\left(- \frac{\eps^2 \overline{m}^2}{8\overline{m}(1 + \frac{1}{6}\eps)}\right)\\
        &\le 2\exp\left(- \frac{3\eps^2 \overline{m}}{28}\right).
    \end{align*}
    Finally, since
    \begin{align*}
        m &= \sum_{S \in \binom{[n]}{k}} I_{S},\\
        \sum_{i=1}^m \left(P(\nu_i^{-1}(\pi[S_i])) - \avg(P)\right) &=\sum_{S \in \binom{[n]}{k}} I_{S} \left(P(\nu_S^{-1}(\pi[S])) - \avg(P)\right),
    \end{align*}
    if both $m > \frac{1}{2}\overline{m}$ and $|\sum I_{S} \left(P(\nu_S^{-1}(\pi[S])) - \avg(P)\right)| < \frac{1}{2}\eps\overline{m}$, then
    \begin{align*}
        |\val(\sI; \pi) - \avg(P)| &= \left|\frac{1}{m} \sum_{i=1}^m P(\nu_i^{-1}(\pi[S_i])) - \avg(P)\right|\\
        &= \frac{1}{m}\left|\sum_{S \in \binom{[n]}{k}} I_{S} \left(P(\nu_S^{-1}(\pi[S])) - \avg(P)\right)\right|\\
        &< \frac{\frac{1}{2}\eps \overline{m}}{\frac{1}{2}\overline{m}}\\
        &= \eps. 
    \end{align*}
    We therefore conclude that for fixed $\pi \in \Sym([n])$,
    \begin{align*}
        \Pr(|\val(\sI; \pi) - \avg(P)| \ge \eps) \le 2\exp\left(- \frac{3\eps^2 \overline{m}}{28}\right) + 2\exp\left(-\frac{3\overline{m}}{32} \right) \le 4\exp\left(- \frac{3\eps^2 \overline{m}}{32}\right).
    \end{align*}
    Taking a union bound over all $\pi \in \Sym([n])$, we have
    \begin{align*}
        \Pr\left(\left|\val(\sI) - \avg(P)\right| \ge \eps\right) &\le n!\cdot 4\exp\left(- \frac{3\eps^2 \overline{m}}{32}\right)\\
        &\le 4\exp\left(n\log n-\frac{3\eps^2 \overline{m}}{32}\right).
    \end{align*}
    When $\overline{m} \ge \frac{C}{\eps^2} n\log n$ for some constant $C > \frac{32}{3}$, we get the desired bound.
\end{proof}

While once $\overline{m} \ge \tilde{\Omega}(n)$, we know that with high probability the random $k$-OCSP satisfies $\val(\sI) \le \avg(P) + \eps$ for a fixed constant $\eps > 0$, algorithmically certifying this bound is non-trivial. In this paper, we are interested in the following task of strong refutation of OCSP satisfiability when a random OCSP instance is far from being satisfiable.
\begin{definition}[Strong Refutation of $k$-OCSP]
    Let $k \ge 2$ and $P: \Sym([k]) \to \{0,1\}$ be a non-trivial ordering predicate. Let $\eps = \eps(n) > 0$. An algorithm is said to achieve $\eps$-refutation of a random $k$-OCSP instance $\sI$ with predicate $P$ if with high probability, the algorithm returns a certificate that $\val(\sI) \le \avg(P) + \eps$.
\end{definition}
The naive strong refutation algorithm proceeds by computing the value of the random instance on each of the $n!$ orderings in $\Sym([n])$, but it takes time $\exp(O(n\log n))$. The goal of this work is to investigate the clause density of OCSP above which efficient strong refutation algorithms exist, and the three-way tradeoff between the running time of the refutation algorithm, the clause density $p$, and the refutation strength $\eps$.

\subsection{Efron-Stein Decomposition}

Before we state our main results, let us give an overview of the Efron-Stein Decomposition, a useful tool to study OCSP, which has previously been used in \cite{makarychev2015satisfiability} in the context of fixed-parameter tractability of the number of satisfiable clauses in OCSP, and in \cite{chan2026strongly} in the context of refutation of CSP without literals.

To set up the Efron-Stein decomposition for OCSP, we follow the idea in \cite{makarychev2015satisfiability} and extend the domain of OCSP from the set of permutations $\Sym([n])$ to a product domain $[0, 1]^n$. Then, for any $x = (x_1, \dots, x_n) \in [0,1]^n$, we define the value of an OCSP on $x$ to be its value on the induced ordering $\ord(x)$ defined by ordering $x_1, \dots, x_n \in [0,1]$ under a fixed tie-breaking rule. Note that for a uniform random $x \sim \Unif([0,1]^n)$, the induced permutation $\ord(x)$ follows the uniform distribution $\Unif(\Sym([n]))$.

At a high level, the Efron-Stein decomposition decomposes any function $f \in L^2([0,1]^n)$ into mutually orthogonal pieces indexed by subsets of coordinates, similar to how Boolean Fourier analysis decomposes a Boolean function into the orthonormal Walsh-Fourier basis. 

\begin{theorem}[Efron-Stein Decomposition (see {\cite[Section 3.3]{makarychev2015satisfiability}})] \label{thm:EF}
    Let $n \in \NN$ and $f\in L^2([0,1]^n)$. The Efron-Stein decomposition of $f$ is
    \begin{align*}
        f(x_1, \dots, x_n) &= \sum_{S \subseteq [n]} f_S(x_S),
    \end{align*}
    where $x_S = (x_i)_{i \in S}$ is the collection of variables indexed by $S$, and 
    \begin{align*}
        f_S(x_S) \colonequals \sum_{T \subseteq S} (-1)^{|S| - |T|} \underset{x \sim \Unif([0,1]^n)}{\EE}[f(x) \vert x_T].
    \end{align*}
\end{theorem}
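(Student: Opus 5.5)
The plan is to verify directly the two properties implicit in the statement of Theorem~\ref{thm:EF}: the pieces $f_S$ sum to $f$, and they are mutually orthogonal in $L^2([0,1]^n)$ under the uniform measure. Throughout, write $E_T f \colonequals \EE[f(x)\mid x_T]$ for the conditional expectation that integrates out the coordinates outside $T$.

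\textbf{Step 1: the pieces sum to $f$.} This is Möbius inversion on the Boolean lattice. Summing the defining formula over all $S \subseteq [n]$ gives
\begin{align*}
\sum_{S\subseteq[n]} f_S = \sum_{T\subseteq[n]} E_T f \sum_{S\,:\,T\subseteq S\subseteq[n]} (-1)^{|S|-|T|}.
\end{align*}
The inner alternating sum vanishes unless $T=[n]$, in which case it equals $1$. Since $E_{[n]}f = f$, the right-hand side is $f$.

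\textbf{Step 2: structural properties of the pieces.} First, $f_S$ depends only on $x_S$, because each $E_T f$ with $T\subseteq S$ does. Second, I claim that $E_U f_S = 0$ whenever $S\not\subseteq U$. Since the coordinates of $x$ are independent, conditional expectations compose as $E_U E_T = E_{U\cap T}$. Pick $i \in S\setminus U$ and pair each $T \subseteq S$ not containing $i$ with $T\cup\{i\}$. These two sets have the same intersection with $U$ and carry opposite signs in the defining sum, so after applying $E_U$ their contributions cancel. Summing over all such pairs gives $E_U f_S = 0$.

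\textbf{Step 3: orthogonality.} Take $S\neq S'$ and assume without loss of generality that $S'\not\subseteq S$. Because $f_S$ is a function of $x_S$ alone, we can pull it out of the inner conditional expectation:
\begin{align*}
\EE[f_S f_{S'}] = \EE\bigl[f_S \, E_S f_{S'}\bigr].
\end{align*}
By Step 2 applied with $U = S$, we have $E_S f_{S'} = 0$, so the inner product vanishes. Square-integrability of every $f_S$ follows because conditional expectation is a contraction on $L^2$, so all products above are integrable.

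The only step needing care is the composition identity $E_U E_T = E_{U\cap T}$ used in Step 2. It relies on the product structure of $\Unif([0,1]^n)$: by Fubini, computing $E_U E_T f$ amounts to integrating out the coordinates outside $T$ and then those outside $U$, which is the same as integrating out everything outside $U\cap T$. I expect this to be the main, albeit mild, obstacle; once it is in place, the cancellation argument is routine. Uniqueness of the decomposition (not needed here) would follow from the same orthogonality argument.
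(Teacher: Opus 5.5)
Your proof is correct. The paper does not prove Theorem~\ref{thm:EF} at all: it cites the decomposition as a standard fact and cites orthogonality separately as Fact~\ref{fact:ES-orthogonal}, so your argument is a self-contained replacement rather than a reconstruction. Your Step~1 already gives everything the statement literally asserts, namely that the pieces so defined sum to $f$. Steps~2--3 go further and prove Fact~\ref{fact:ES-orthogonal}.

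Both combinatorial devices you use reappear in the paper's proof of Proposition~\ref{prop:ordering-predicate-decom}. One is the identity $\sum_{T\supseteq U}(-1)^{|T|-|U|}=\one\{U=[k]\}$. The other is the cancellation obtained by pairing a set with that set plus one coordinate the summand cannot see.

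What your route adds is the intermediate lemma $E_U f_S=0$ for $S\not\subseteq U$. This is the cleanest justification of the step in Proposition~\ref{prop:adv-summand-bound} where a degree-$d$ component is declared orthogonal to a $(d-1)$-junta. If $g$ depends only on $x_U$ with $|U|<|S|$, then $\EE[f_S\, g]=\EE[g\,E_U f_S]=0$ directly, with no detour through Fact~\ref{fact:junta-degree}. The only analytic input, $E_UE_T=E_{U\cap T}$, is correctly justified by Fubini (almost everywhere, using $f\in L^2\subseteq L^1$ on a probability space).

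One small caveat concerns your closing aside on uniqueness. Locality plus pairwise orthogonality does not by itself force uniqueness. Take $n=2$ and $f=x_1+x_2-\tfrac12$. The pieces $0$, $x_1-\tfrac12$, $x_2$, $0$ (indexed by $\emptyset,\{1\},\{2\},\{1,2\}$) are local, pairwise orthogonal, and sum to $f$. Yet they differ from the Efron--Stein pieces $\tfrac12$, $x_1-\tfrac12$, $x_2-\tfrac12$, $0$. Uniqueness holds only within the class of pieces satisfying your Step~2 property $E_U g_S=0$ for $S\not\subseteq U$.
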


The Efron-Stein decomposition satisfies the following property.
\begin{fact}[Orthogonality of Efron-Stein decomposition {\cite[Proposition A.4]{kunisky2025low}}]\label{fact:ES-orthogonal}
    For $f, g \in L^2([0,1]^n)$, let
    \begin{align*}
        f(x) &= \sum_{S \subseteq [n]} f_S(x_S),\\
        g(x) &= \sum_{S \subseteq [n]} g_S(x_S),
    \end{align*}
    be their Efron-Stein decompositions. Then, for any $S \ne T \subseteq [n]$,
    \begin{align*}
        \underset{x \sim \Unif([0,1]^n)}{\EE}[f_S(x_S)g_T(x_T)] = 0.
    \end{align*}
\end{fact}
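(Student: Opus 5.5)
The plan is to prove a slightly stronger structural property first and then deduce orthogonality from it. The property is that each Efron-Stein component $f_S$ has vanishing conditional expectation whenever we integrate out one of its own coordinates: for every $i \in S$ and almost every fixed value of $x_{S\setminus\{i\}}$,
\begin{align*}
    \underset{x_i \sim \Unif([0,1])}{\EE}\left[f_S(x_S)\right] = 0 .
\end{align*}
Equivalently, $\EE[f_S(x_S) \vert x_U] = 0$ for every $U \subseteq [n]$ with $S \not\subseteq U$. The only structural fact about the measure that I will use is that $\Unif([0,1]^n)$ is a product measure.

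To prove the property, I write $g_T(x_T) \colonequals \EE[f(x)\vert x_T]$. By Fubini, $g_T$ is obtained by integrating $f$ over the coordinates outside $T$. Hence for $i \in T$, averaging $g_T$ over $x_i$ gives exactly $g_{T\setminus\{i\}}$; this is the tower property for independent coordinates. Now fix $i \in S$ and pair each $T \subseteq S\setminus\{i\}$ with $T \cup \{i\}$. In the defining sum
\begin{align*}
    f_S(x_S) = \sum_{T\subseteq S}(-1)^{|S|-|T|} g_T(x_T),
\end{align*}
each such pair contributes $(-1)^{|S|-|T|}\left(g_T - g_{T\cup\{i\}}\right)$. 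Averaging over $x_i$, the term $g_T$ is unchanged because it does not depend on $x_i$, while $g_{T\cup\{i\}}$ becomes $g_T$. So every pair averages to zero, and the claim follows. The general version, for $U$ with $S \not\subseteq U$, follows by first integrating out such an $i \in S\setminus U$ and then applying Fubini to the remaining coordinates.

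For the orthogonality itself, take $S \ne T$. Without loss of generality there is some $i \in S \setminus T$; otherwise swap the roles of $f$ and $g$. The product $f_S(x_S)g_T(x_T)$ is integrable, since both factors lie in $L^2$ (each is a finite signed sum of conditional expectations, which are $L^2$ contractions) and Cauchy--Schwarz applies. By Fubini I can integrate over $x_i$ first while holding the other coordinates fixed. Because $i \notin T$, the factor $g_T(x_T)$ is constant in $x_i$ and comes out of the inner integral, leaving $g_T(x_T)\,\EE_{x_i}[f_S(x_S)] = 0$ by the property above. Integrating the remaining coordinates then gives $\EE[f_S g_T]=0$.

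I expect the main obstacle to be rigor rather than ideas. Conditional expectations are defined only almost everywhere, so I will fix the explicit Fubini versions $g_T(x_T)=\int f(x_T,y_{[n]\setminus T})\,dy$, which are defined for almost every $x_T$, and check that the identity "average of $g_{T\cup\{i\}}$ over $x_i$ equals $g_T$" holds almost everywhere. The remaining steps are routine sign bookkeeping.
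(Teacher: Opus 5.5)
Your argument is correct and complete. Showing that each component $f_S$ integrates to zero in every coordinate $i \in S$ (pair $T$ with $T\cup\{i\}$, and use that averaging $\EE[f\mid x_{T\cup\{i\}}]$ over $x_i$ returns $\EE[f\mid x_T]$ under the product measure), and then integrating out a coordinate of $S\setminus T$ first via Fubini (after swapping if needed), is the standard proof of this fact. Your integrability and almost-everywhere bookkeeping is also sound.

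The paper gives no proof of its own; it simply cites \cite[Proposition A.4]{kunisky2025low}, so your write-up supplies a self-contained justification. The only fix is cosmetic: you use $g_T$ both for $\EE[f\mid x_T]$ and for the Efron--Stein component of $g$, so rename one of them.
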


\begin{definition}\label{def:coordinate-degree}
    Let $n \in \NN$ and $f \in L^2([0,1]^n)$. Let the Efron-Stein decomposition of $f$ be
    \begin{align*}
        f(x_1, \dots, x_n) = \sum_{S \subseteq [n]} f_S(x_S).
    \end{align*}

    For $d \in \NN$, define
    \begin{align*}
        f^{=d}(x) \colonequals \sum_{\substack{S \subseteq [n]:\\ |S|=d}} f_S(x_S).
    \end{align*}

    Then, the coordinate degree of $f$ is
    \begin{align*}
        D(f) \colonequals \max\left\{d \in \NN: f^{=d} \ne 0\right\}.
    \end{align*}
\end{definition}

\begin{fact}\label{fact:junta-degree}
    If a function $f \in L^2([0,1]^n)$ depends only on a subset of at most $d$ coordinates, i.e., $f$ is a $d$-junta, then its coordinate degree $D(f)$ is at most $d$.
\end{fact}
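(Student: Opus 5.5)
The plan is to read off from the explicit formula in Theorem~\ref{thm:EF} that every Efron-Stein component $f_S$ with $|S| > d$ vanishes identically; then $f^{=d'} = 0$ for all $d' > d$, and Definition~\ref{def:coordinate-degree} gives $D(f) \le d$. Let $f$ depend only on the coordinates in $J \subseteq [n]$ with $|J| \le d$, so that $f(x) = g(x_J)$ for some $g$.

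\emph{Step 1: conditional expectations only see $T \cap J$.} For any $T \subseteq [n]$ we have
\begin{align*}
    \EE[f(x) \mid x_T] = \EE[g(x_J) \mid x_T] = \EE[f(x) \mid x_{T \cap J}].
\end{align*}
This holds because under $\Unif([0,1]^n)$ the coordinates are independent. Conditioning on $x_T$ fixes $x_{T\cap J}$, while the remaining coordinates $x_{J\setminus T}$ stay uniform and independent of $x_{T \setminus J}$. So the conditional expectation is a function of $x_{T\cap J}$ alone, and it equals the expectation conditioned on $x_{T \cap J}$.

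\emph{Step 2: cancellation for $S \not\subseteq J$.} Fix such an $S$ and pick $i \in S \setminus J$. Pair each $T \subseteq S$ not containing $i$ with $T \cup \{i\}$. Both sets have the same intersection with $J$, so by Step 1 they contribute the same conditional expectation. Their signs $(-1)^{|S|-|T|}$ are opposite. Hence the sum defining $f_S$ telescopes to zero, and $f_S \equiv 0$.

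\emph{Step 3: conclude.} If $|S| > d \ge |J|$, then $S \not\subseteq J$, so $f_S = 0$ by Step 2. Therefore $f^{=d'} = 0$ for every $d' > d$, and $D(f) \le d$.

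The only point needing care is the measure-theoretic identity in Step 1, that is, that conditional expectation commutes with ignoring the irrelevant coordinates. This is immediate from Fubini and the product structure of the uniform measure.
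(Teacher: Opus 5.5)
Your proof is correct and complete. Step 1 is justified by Fubini and the independence of coordinates. The pairing $T \leftrightarrow T\cup\{i\}$ with $i \in S\setminus J$ then kills every component $f_S$ with $S \not\subseteq J$, and in particular every $S$ with $|S| > d$.

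The paper states this as a standard fact without proof. Your cancellation is the same device the paper uses in the proof of Proposition~\ref{prop:ordering-predicate-decom}. You also correctly work with the Efron--Stein notion of Definition~\ref{def:coordinate-degree}, which is the one needed when the fact is combined with Fact~\ref{fact:ES-orthogonal} in Proposition~\ref{prop:adv-summand-bound}. Under the later junta-span definition of coordinate degree in Section~\ref{sec:low-deg}, the statement would be immediate.
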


In the setting of OCSP, let $P: \Sym([k]) \to \{0,1\}$ be any ordering predicate. Let $f: [0,1]^k \to \{0,1\}$ be defined as $f(x) = P(\ord(x))$. By a slight abuse of notation, we define the coordinate degree $D(P)$ of the predicate $P$ to be the coordinate degree $D(f)$ of $f$.

\subsection{Average-Case Complexity}

Since the problem of strongly refuting an OCSP is an average-case problem, classical theory of NP-hardness does not apply. Therefore, we need to rely on some notion of average-case hardness. A long line of work has developed various frameworks of showing computational hardness for average-case problems against concrete classes of problems, including Sum-of-Squares, low-degree polynomials, statistical-query algorithms.

In this work, we will focus on average-case computational hardness based on the Low Degree Conjecture, informally stated as follows.

\begin{definition}[Strong Detection]
    Let $N = N(n) \in \NN$. Let $\QQ = \QQ_n$ and $\PP = \PP_n$ be two sequences of distributions supported on $\Omega^N$. A test $f: \Omega^N \to \{0,1\}$ is said to achieve strong detection between $\PP$ and $\QQ$ if
    \begin{align*}
        \PP(f = 0) + \QQ(f = 1) = o(1).
    \end{align*}
\end{definition}

\begin{definition}[Strong Separation]
    Let $N = N(n) \in \NN$. Let $\QQ = \QQ_n$ and $\PP = \PP_n$ be two sequences of distributions supported on $\Omega^N$. A function $f: \Omega^N \to \RR$ is said to achieve strong separation between $\PP$ and $\QQ$ if
    \begin{align*}
        \max\left\{\sqrt{\Var_{\QQ}[f], \Var_{\PP}[f] }\right\} = o\left(|\EE_{\PP}[f] - \EE_{\QQ}[f]|\right).
    \end{align*}
\end{definition}

Clearly, by Chebyshev's inequality, thresholding a function that achieves strong separation gives a test that achieves strong detection. Low Degree Conjecture posits that for natural high dimensional hypothesis testing problems, the failure of low degree polynomials in achieving strong separation rules out any efficient algorithms for strong detection.

\begin{conjecture}[Low Degree Conjecture (Informal. See \cite{hopkins2018statistical})]
    Let $N = N(n) \in \NN$, where $N$ is bounded by a polynomial in $n$. Let $\QQ = \QQ_n$ and $\PP = \PP_n$ be sequences of distributions supported on $\Omega^N$, where $\QQ$ is a product distribution. For ``sufficiently nice'' sequences of distributions $\QQ$ and $\PP$, if no polynomial of degree at most $D$ achieves strong separation between $\PP$ and $\QQ$ for some $D \ge \log(n)^{1 + \eps}$ and constant $\eps > 0$, then no polynomial time algorithm achieves strong detection between $\PP$ and $\QQ$.
\end{conjecture}

At a high level, low (coordinate) degree functions are regarded by the Low Degree Conjecture as a proxy for efficient algorithms for solving ``nice'' high-dimensional hypothesis testing problems. The failure of this class of algorithms constitutes evidence for the average-case computational hardness of a hypothesis testing problem. We defer more detailed discussion of low coordinate degree algorithms and the version of Low Degree Conjecture (See Conjecture~\ref{conj:generalized-low-deg-conj}) we will use to Section~\ref{sec:low-deg}.

\section{Main Results}

\subsection{Algorithms for Refutation}

We now state our algorithmic results for strong refutation of OCSP. Our first result is the following:

\begin{theorem}[Kikuchi Method for Refuting OCSP] \label{thm:kikuchi}
    Let $k \ge 2$ and $P: \Sym([k]) \to \{0,1\}$ be a non-trivial ordering predicate with coordinate degree $d$. Let $p = p(n) > 0$, $\eps = \eps(n) > 0$, and $\ell = \ell(n) \in \NN$. There exist absolute constants $\alpha = \alpha(k) > 0$ and $C = C(k) > 0$ depending only on $k$ such that if $d-1 \le \ell \le \alpha n$, and
    \begin{align*}
        \overline{m} \ge \begin{cases}
            C\frac{n}{\eps^2}\left(\frac{n}{\ell}\right)^{\frac{d}{2}-1}\log(n)^{2d-1} & \quad \text{even } d,\\
            C\frac{n}{\eps^2}\left(\frac{n}{\ell}\right)^{\frac{d}{2}-1}\log(n)^{2d-\frac{3}{2}} & \quad \text{odd } d,
        \end{cases}
    \end{align*}
    where $\overline{m} = p \binom{n}{k}$, then there is an algorithm that achieves $\eps$-refutation of a $p$-random OCSP instance with predicate $P$ on $n$ variables, and runs in time $n^{O(\ell)}$. 
\end{theorem}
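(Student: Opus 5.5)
Our plan is to reduce ordering-CSP refutation to refuting a family of polynomial-type objective functions over the product domain $[0,1]^n$, and then to apply the Kikuchi-matrix method for each homogeneous degree.

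\textbf{Step 1: Efron-Stein expansion of the objective.} For $x \in [0,1]^n$ write $f(x) = P(\ord(x))$ on $[0,1]^k$. Each clause $(S,P\circ\nu^{-1})$ contributes $f_\nu(x_S)=f(x_{\nu})$. Expand $f$ by Theorem~\ref{thm:EF} as $f=\sum_{T\subseteq[k]} f_T$, with $f_\emptyset=\avg(P)$ and $f_T=0$ for $|T|>d$. Then
\begin{align*}
m\,\val(\sI;\ord(x)) - m\,\avg(P) = \sum_{t=1}^{d} \Phi_t(x), \qquad \Phi_t(x)=\sum_{\text{clauses }(S,\nu)}\ \sum_{|T|=t} f_{\nu(T)}(x_{\nu(T)}).
\end{align*}
Every permutation arises as $\ord(x)$, so it suffices to certify $\sup_x \Phi_t(x)\le \eps\overline m/d$ for each $t$. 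Then we use Lemma~\ref{lem:m-concentration} to replace $m$ by $\overline m$.

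\textbf{Step 2: discretize and group.} The sup over $[0,1]^n$ can be replaced by the sup over "rank vectors". Alternatively, we discretize the functions $f_T$, which are bounded, into $[0,1]^t$ grids of $O(k!)$ cells: since $f_T$ depends only on relative orders of $x_T$ together with order information relative to the uniform measure, one can write $f_T(y)=\sum_{j} a_j \prod_{i\in T} g_{j,i}(y_i)$ with each $g$ mean-zero in every coordinate. (The mean-zero property holds because the Efron-Stein pieces satisfy $\EE[f_T\mid x_{T\setminus\{i\}}]=0$.) After collecting terms, $\Phi_t(x)=\sum_{|U|=t} \sum_j w_{U,j}\prod_{i\in U} g_j(x_i)$. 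This is a degree-$t$ multilinear form in the bounded "vectors" $g_j(x_i)$. The coefficient $w_{U,j}$ is a sum, over sampled $k$-sets $S\supseteq U$, of independent mean-zero random variables coming from the random $\nu_S$. The mean-zero property uses that averaging over uniformly random $\nu$ of $f_{\nu(T)}$ symmetrizes the expression, and it vanishes after subtracting the mean. Each $w_U$ therefore has variance about $p n^{k-t}$, and there are about $n^t$ such $U$.

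\textbf{Step 3: Kikuchi matrices.} For each $t\le d$, we build the level-$\ell$ Kikuchi matrix indexed by $\ell$-subsets of $[n]\times[\text{grid}]$, following the approach of \cite{raghavendra2017strongly} and \cite{guruswami2022algorithms}. For even $t$, entry $(A,B)$ is $w_U$ when $A\triangle B=U$ with $|A\cap U|=t/2$. For odd $t$, we use the standard trick of splitting off a variable: pair clauses sharing one coordinate via Cauchy--Schwarz to obtain an even $2(t-1)$-form, which costs an extra $\sqrt{\log n}$ factor and explains the different log power in the statement. A quadratic form of this matrix with the tensor $x^{\otimes \ell}$ recovers $\Phi_t(x)$ scaled by a combinatorial factor $\approx \binom{n}{\ell-t/2}\binom{t}{t/2}$. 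Therefore $\sup\Phi_t \le (\text{scale})^{-1}\binom{n}{\ell}\|K_t\|_2$, which is computable in time $n^{O(\ell)}$.

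\textbf{Step 4: spectral bound (main obstacle).} We bound $\|K_t\|$ by Matrix Bernstein (Theorem~\ref{thm:matrix-bernstein}) applied to the independent per-$k$-set contributions. The maximum row degree is about $p\, n^{k-t}\,(\ell/n)^{t/2} n^{t/2}$, and the dimension is $n^{O(\ell)}$, which gives a factor $\sqrt{\ell\log n}$. The resulting requirement is $\overline m \gtrsim \frac{n}{\eps^2}(n/\ell)^{t/2-1}\polylog$, which is worst at $t=d$. The delicate parts are these. First, the Bernstein bound $M$ must be controlled, since row sums may be heavy. Here we would first try to bound the degrees with high probability and truncate, or we would use the constraint $\ell\le\alpha n$ to keep the combinatorics tame. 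Second, the correlations among the $w_U$ for different $U$ inside the same $S$ must be handled. We would treat each $S$ as a single matrix summand, so that Matrix Bernstein still applies. The odd case additionally requires a second-moment bound on the Cauchy--Schwarz-paired matrix, whose diagonal terms must be removed and bounded separately by scalar Bernstein (Theorem~\ref{thm:bernstein}).
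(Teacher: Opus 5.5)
The gap is in your Step~2. You assert that each Efron--Stein component $f_T$ with $|T|\ge 2$ can be written \emph{exactly} as a finite sum $\sum_j a_j\prod_{i\in T}g_{j,i}(y_i)$ obtained from a grid of $O(k!)$ cells. Your Step~3 identity between the quadratic form of $K_t$ and $\Phi_t$ depends on this, and it is false. Take $k=2$ and $P(\mu)=\one\{\mu=\Id\}$, so $d=2$. Then $f_{\{1,2\}}(y_1,y_2)=\one\{y_1<y_2\}-\tfrac12+y_1-y_2$. Restricted to any $n$ increasing points (e.g.\ your rank vectors), this is the strictly upper-triangular all-ones $n\times n$ matrix plus a matrix of rank at most $3$. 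Its rank is therefore at least $n-4$, so no bounded number of product terms can represent it. A constant-size grid only approximates $f_T$, with an error on every clause whose variables fall into a common cell. Controlling those collisions is exactly what the paper's separate bucketing argument (Theorem~\ref{thm:bucketing}) does. There the number of cells must grow as $\eps$ shrinks, which is why that route is only claimed for constant $\eps$ with running time $n^{O_{k,\eps}(\ell)}$, not for the statement you are proving.

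What is missing is an exact expansion of order comparisons into products of $\pm1$-valued single-variable functions with small total coefficient mass. The paper obtains it in three steps.

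First, Propositions~\ref{prop:ordering-predicate-decom} and~\ref{prop:reduction-to-canonical} reduce to two-sided refutation of the canonical predicates $\Phi_{[t],\Id}$, $2\le t\le d$.

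Second, each rank $\pi^{-1}(v)$ is written in binary with $L=\lceil\log_2(n+1)\rceil$ bits, encoded as $e(\cdot)\in\{\pm1\}^L$, and one uses the lexicographic identity
\[
\one\{a<b\}=\sum_{r=1}^{L}2^{-(r+1)}\bigl(1+e_r(a)\bigr)\bigl(1-e_r(b)\bigr)\prod_{s<r}\bigl(1+e_s(a)e_s(b)\bigr).
\]
Its coefficients have total $\ell_1$-mass $L$ (Lemmas~\ref{lem:rank-decom} and~\ref{lem:coeff-bound}). Some logarithmic loss is unavoidable here: exact representations of $\one\{a<b\}$ by bounded product terms need mass at least of order $\log n$ (triangular truncation). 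Chaining $t-1$ comparisons gives mass $L^{t-1}$.

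Third, every resulting term is a rank-one tensor with $\{\pm1\}^n$ factors, so Lemma~\ref{lem:val-deviation} gives, simultaneously for all $\pi$,
\[
|m\,\val(\sI;\pi)-\overline m/t!|\le L^{t-1}\max_{y^{(i)}\in\{\pm1\}^n}|\langle F,y^{(1)}\otimes\cdots\otimes y^{(t)}\rangle|.
\]
This is precisely the Boolean tensor-certification problem handled by the Kikuchi matrices of \cite{chan2026strongly}.

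The resulting factor $L^{2t-2}$, worst at $t=d$, is where the bulk $\log(n)^{2d-2}$ of the polylogarithm in the statement comes from. In your accounting the only logarithms come from the Matrix Bernstein dimension factor and the odd-case Cauchy--Schwarz step, so this factor is unexplained. Finally, your Step~4 leaves the heavy-row truncation and the intra-$S$ correlations unresolved. The paper does not reprove these; it imports \cite[Lemma 6.2]{chan2026strongly} and \cite[Claim B.1]{chan2026strongly} for the Kikuchi and Frobenius norm bounds.
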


\begin{remark}
    The result is based on the Kikuchi method, recently introduced by \cite{wein2019kikuchi} as a method to design spectral algorithms for problems involving tensors. The Kikuchi method has since then found a variety of applications \cite{alrabiah2023near, kothari2024exponentiala, kothari2024exponentialb}, including CSP refutation \cite{guruswami2022algorithms, hsieh2023simple, chan2026strongly}. However, we emphasize that the Kikuchi method does not straightforwardly apply to the setting of OCSPs due to the non-product domain $\Sym([n])$ of OCSP. In this work, we introduce a rank decomposition procedure (see Section~\ref{sec:rank-decom}) to apply the Kikuchi method in the OCSP setting, which comes at the cost of a polylogarithmic blowup in the strong refutation guarantee compared to the setting of CSPs (see \cite[Theorem 1.5]{chan2026strongly}).
\end{remark}

Our second result removes the extra polylogarithmic loss in Theorem~\ref{thm:kikuchi} when the refutation strength $\eps > 0$ is an absolute constant, which we obtain via a black-box reduction to the CSP setting using what we call the bucketing method and matches the guarantee in \cite[Theorem 1.5]{chan2026strongly}.

\begin{theorem}[Bucketing Method for Refuting OCSP]\label{thm:bucketing}
    Let $k \ge 2$, $P: \Sym([k]) \to \{0,1\}$ be a non-trivial ordering predicate with coordinate degree $d$, and $\eps > 0$ be a constant. Let $p = p(n) > 0$ and $\ell = \ell(n) \in \NN$. Then, there exist constants $\alpha = \alpha(k,\eps) > 0$ and $C = C(k, \eps) > 0$ depending only on $k$ and $\eps$ such that if $d-1\le \ell \le \alpha n$ and
    \begin{align*}
        \overline{m} \ge 
            C \cdot n \left(\frac{n}{\ell}\right)^{\frac{d}{2}-1} \log(n)
    \end{align*}
    where $\overline{m} = p \binom{n}{k}$, then there is an algorithm that achieves $\eps$-refutation of a $p$-random OCSP instance with predicate $P$ on $n$ variables, and runs in time $n^{O_{k,\eps}(\ell)}$. 
\end{theorem}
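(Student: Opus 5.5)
We reduce Theorem~\ref{thm:bucketing} to the CSP refutation result of \cite[Theorem 1.5]{chan2026strongly} by discretizing orderings. Fix an integer $q = q(k,\eps)$, to be chosen large. Partition $[0,1]$ into $q$ equal buckets, and think of an assignment $x \in [q]^n$ (a bucket label for each variable) as a coarsening of an ordering. Given an ordering $\pi \in \Sym([n])$, let $x^\pi$ put variable $i$ in bucket $\lceil q\,\mathrm{rank}_\pi(i)/n\rceil$. Each bucket then contains $n/q$ variables, consecutive in $\pi$.

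The first step is to define, for each clause $(S,P\circ\nu^{-1})$, a $q$-ary CSP predicate $\widetilde P_\nu : [q]^S \to [0,1]$. We set $\widetilde P_\nu(y)$ to be the probability that $P(\nu^{-1}(\ord(z)))=1$ when each $z_i$ is drawn uniformly from bucket $y_i$, independently. Equivalently, ties within a bucket are broken uniformly at random. This is the same as the predicate $P$ of the extended function $f(x)=P(\ord(x))$, averaged over buckets.

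The predicate $\widetilde P$ is a bucket-averaging of $f$. Since conditional expectation onto a product sub-$\sigma$-algebra commutes with the Efron--Stein projections, its coordinate degree is at most $D(P)=d$. Its average over uniform $y\in[q]^k$ equals $\avg(P)$. We then get a random $q$-ary $k$-CSP with literals given by the random $\nu$: a uniformly random ordering of $S$ acts as a random relabeling. We apply the CSP Kikuchi refutation at level $\ell$ of \cite[Theorem 1.5]{chan2026strongly}, using the predicate's degree $d$ and the fact that alphabet size $q$ and the $[0,1]$-valued predicate only affect constants. This certifies
\[
\max_{y\in[q]^n}\frac1m\sum_i \widetilde P_{\nu_i}(y_{S_i}) \le \avg(P)+\eps/2
\]
in time $n^{O_{k,\eps}(\ell)}$, provided $\overline m \ge C n (n/\ell)^{d/2-1}\log n$. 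One point needs care here: the cited theorem is stated for $\{0,1\}$ predicates with literals of a specific form. If needed, we write $\widetilde P$ as a bounded combination of its Efron--Stein components of degree at most $d$ and refute each component (a degree-$\le d$ polynomial in one-hot encodings) separately, which is where the reduction works as a black box.

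The second step, which is the main obstacle, is to compare $\val(\sI;\pi)$ with the CSP value at $x^\pi$. A clause $S$ has $P(\nu^{-1}(\pi[S]))=\widetilde P_\nu(x^\pi_S)$ whenever the elements of $S$ land in distinct buckets, because then the bucket labels determine $\pi[S]$. The discrepancy is therefore bounded by the fraction of clauses with two variables in the same bucket, and this must hold for every $\pi$ simultaneously. We certify this as follows. For a fixed partition into $q$ equal parts, the expected fraction of such clauses is at most $\binom k2/q$. The maximum over all equipartitions is a $2$-CSP-type quantity: it counts clause-pairs $\{u,v\}\subseteq S$ inside a part. It is bounded by the number of edges of the projected multigraph $G$ (edge $uv$ weighted by the number of clauses containing both) that lie inside parts. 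We bound this spectrally: $\sum_{\text{parts}} \one_B^\top A_G \one_B \le \sum_B \bigl(\bar w |B|^2 + \|A_G - \bar w J\|_2 |B|\bigr)$, where $\bar w$ is the average edge weight. The first term gives $\approx \binom k2 m/q$. The second gives $n\|A_G-\bar wJ\|_2$, which is $O(\sqrt{\bar w\, n\log n}\cdot n)=o(m)$ once $\overline m \gg n\log n$, by Theorem~\ref{thm:matrix-bernstein}. This norm is computable in polynomial time, so it is a valid certificate.

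Choosing $q \ge 4\binom k2/\eps$ and combining the two steps gives $\val(\sI;\pi)\le \avg(P)+\eps/2+\eps/4+o(1)$ for all $\pi$. Lemma~\ref{lem:m-concentration} handles the normalization by $m$ versus $\overline m$. The density condition $\overline m \ge Cn(n/\ell)^{d/2-1}\log n$ dominates $n\log n$ since $d\ge 2$; if $d\le1$ the predicate is trivial or the refutation is immediate.
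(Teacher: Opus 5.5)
Your route differs from the paper's and can be completed, but as written the step that yields $\avg(P)+\eps/2$ is not justified. Theorem~\ref{thm:csp-refutation} concerns random CSPs \emph{without} literals. It certifies $\val\le\opt_d(\widetilde P)+\eps$, where $\opt_d$ maximizes over $d$-independent distributions whose common marginal $\xi$ is arbitrary, not uniform. The random $\nu$ only permutes the scope of a clause; it does not relabel alphabet symbols, so biased assignments are not averaged away. Knowing the degree and the uniform average of $\widetilde P$ is therefore not enough. By your reasoning, the degree-$2$ predicate $\one\{b_1=b_2\}$ on $[q]^k$, with uniform average $1/q$, would be certified at about $1/q$, yet the constant assignment satisfies every clause.

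What you need, and what is true, is $\opt_d(\widetilde P)=\avg(P)$. Take any $\xi$ on $[q]$, let $y\sim\xi^{\otimes k}$, and let $z_i$ be uniform in bucket $y_i$. Then $z_1,\dots,z_k$ are i.i.d.\ with a continuous density, so $\ord(z)$ is uniform and $\EE_{\xi^{\otimes k}}[\widetilde P]=\avg(P)$. Since $\widetilde P$ is a sum of $d$-juntas (your bucket-averaging argument), $\EE_\tau[\widetilde P]=\EE_{\xi^{\otimes k}}[\widetilde P]$ for every $d$-independent $\tau$ with marginal $\xi$. So the degree bound controls the certified \emph{value}, not only the density threshold.

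There is a second issue. Theorem~\ref{thm:csp-refutation} is stated for $\{0,1\}$-valued predicates, and refuting Efron--Stein components separately is not a black-box use. It is also unclear what each component would be refuted against: under biased assignments the components do not concentrate around $0$, which is exactly why $\opt_d$ appears. A clean fix is to apply the theorem to the Boolean predicate $\hat P_\nu(y)=\one\{y_S\text{ distinct}\}\cdot P(\nu^{-1}(\ord(y_S)))$. Pointwise $\hat P_\nu\le\widetilde P_\nu$, so $\opt_d(\hat P)\le\opt_d(\widetilde P)=\avg(P)$. No degree bound on $\hat P$ itself is needed, since $d$ is a free parameter of the theorem. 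Your comparison step goes through unchanged, because $\hat P_\nu(x^\pi_S)=P(\nu^{-1}(\pi[S]))$ on collision-free clauses. Your collision certificate is the same spectral argument as the paper's Propositions~\ref{prop:collision-matrix-bound}--\ref{prop:IR-bound} and is fine.

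Once repaired, your proof is genuinely different from the paper's. The paper first decomposes $P$ into canonical predicates (Proposition~\ref{prop:reduction-to-canonical}) and converts one-sided into two-sided refutation (Lemma~\ref{lem:one-to-two}). For each $t\le d$ it then passes the explicit Boolean predicate $Q_t$ to \cite{chan2026strongly}, computing $\opt_t(Q_t)=\binom{B}{t}/B^t\le 1/t!$ by Maclaurin's inequality. Your randomized tie-breaking treats $P$ in one step, explains structurally why $\opt_d$ equals $\avg(P)$, and avoids the $2^{2k}k!$ and $t!$ losses in $\eps$. The paper's route has the advantage that every predicate it passes to the CSP theorem is Boolean with a hand-computed $\opt$.
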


\begin{remark}
    We emphasize that Theorem~\ref{thm:bucketing} stops working when $\eps = \eps(n) > 0$ gets too small as $n \to \infty$. The reason is that the exponent of running time $n^{O_{\eps}(l)}$ will blow up when $\eps$ is not constant.
\end{remark}

\subsection{Lower Bound for Refutation}

Next, we state the following computational lower bound for strong refutation of OCSP, which shows that our algorithmic results are essentially optimal.

\begin{theorem}[Hardness of Refuting OCSP]\label{thm:low-deg-hardness}
    Let $k \ge 2$ and $P: \Sym([k]) \to \{0,1\}$ be a non-trivial ordering predicate with coordinate degree $d$. Let $p = p(n) > 0$, $\eps = \eps(n) > 0$, and $D = D(n) \in \NN$. Suppose that
    \begin{align*}
        \eps \in \left(0,\frac{1}{2^{2k+1}(2k)!}\right], \quad D \le \frac{n}{k}.
    \end{align*}
    Assuming the Generalized Low Degree Conjecture (see Conjecture~\ref{conj:generalized-low-deg-conj}), if 
    \begin{align*}
        \overline{m} = o\left(\frac{n}{\eps^2} \left(\frac{n}{D}\right)^{\frac{d}{2}-1}\right), \text{ and } \overline{m} \eps^2 = \omega(1)
    \end{align*}
    where $\overline{m} = p \binom{n}{k}$, then no algorithm that runs in time $\exp(O(D/\poly\log(n)))$ achieves $\eps$-refutation of a $p$-random OCSP instance with predicate $P$ on $n$ variables.
\end{theorem}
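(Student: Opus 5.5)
We reduce refutation to a hypothesis testing problem and then show that low coordinate degree functions cannot strongly separate the two hypotheses. The null distribution $\QQ$ is the $p$-random OCSP itself. We encode an instance as the collection $(I_S,\nu_S)_{S\in\binom{[n]}{k}}$, which is a product distribution over coordinates indexed by $k$-subsets. The planted distribution $\PP$ is built as follows.
\begin{itemize}
\item Draw a hidden uniformly random ordering $\pi^\ast \in \Sym([n])$ (equivalently $x^\ast \sim \Unif([0,1]^n)$).
\item Include each $S$ independently with probability $p$, exactly as under $\QQ$.
\item Draw $\nu_S$ from a tilted distribution: with probability proportional to $1 + \delta\,(P(\nu_S^{-1}(\pi^\ast[S])) - \avg(P))$, with $\delta \asymp \eps$ chosen so that the clause satisfaction rate under $\pi^\ast$ is $\avg(P)+2\eps$.
\end{itemize}
The smallness assumption $\eps\le 1/(2^{2k+1}(2k)!)$ keeps these tilted weights positive and bounded. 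This gives a simple likelihood ratio.

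\textbf{Step 1: refutation implies detection.} By Bernstein (Theorem~\ref{thm:bernstein}) together with Lemma~\ref{lem:m-concentration}, and using $\overline m\eps^2=\omega(1)$, a $\PP$-instance has $\val(\sI;\pi^\ast)\ge \avg(P)+\eps$ with high probability. Hence a refutation algorithm cannot certify $\val\le\avg(P)+\eps$ on it. On the other hand, it must succeed on $\QQ$ with high probability. The test ``output 1 iff the refutation succeeds'' therefore achieves strong detection. Under the Generalized Low Degree Conjecture (Conjecture~\ref{conj:generalized-low-deg-conj}) with coordinate degree $D$ and time $\exp(O(D/\poly\log n))$, it suffices to show that the coordinate-degree-$\le D$ advantage stays bounded: $\|L^{\le D}\|^2 = 1+o(1)$, where $L = d\PP/d\QQ$.

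\textbf{Step 2: expanding the likelihood ratio.} Conditionally on $x^\ast$, we have $L = \prod_S (1 + I_S\,\delta\, g_S(\nu_S;x^\ast))$, where $g_S$ is the centered predicate. Expanding in the product basis gives
\[
\|L^{\le D}\|^2=\sum_{|\mathcal{A}|\le D}\ \EE_{x,y}\prod_{S\in\mathcal A} p\,\delta^2\, \langle g_S(\cdot;x), g_S(\cdot;y)\rangle ,
\]
computed over two independent replicas $x,y$. The inner product $\langle g_S(\cdot;x),g_S(\cdot;y)\rangle$ is a function $h(x_S,y_S)$. Via the Efron--Stein decomposition of $P$ (Theorem~\ref{thm:EF}), applied to the product domain of pairs $(x_i,y_i)$, its components on fewer than $d$ coordinates vanish after averaging. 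The point is that only the top-degree part contributes non-cancelling correlations: since $P$ has coordinate degree $d$, every coordinate of $\ord(x_S)$ enters through pieces of size at most $d$.

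\textbf{Step 3: combinatorial bound (the main obstacle).} The expectation of $\prod_{S\in\mathcal A} h(x_S,y_S)$ vanishes unless every vertex covered by $\mathcal A$ is covered by at least two Efron--Stein pieces. By orthogonality (Fact~\ref{fact:ES-orthogonal}), each nonzero term is bounded by $O(1)^{|\mathcal A|}$. We would then count hypergraphs $\mathcal A$ with $a$ edges, each edge contributing a $d$-subset piece of $S$, covering $v\le da/2$ distinct vertices. The count is roughly $n^{v}\,n^{(k-d)a}\,(\text{combinatorial factor} \le (Ca)^{da/2})$. Summing $(p\delta^2)^a$ over these terms, with $p\asymp \overline m/n^k$, gives terms of size
\[
\left(C\,\frac{\overline m\,\eps^2}{n^{d/2}}\,a^{d/2-1}\right)^a .
\]
For $a\le D$ this is at most $(o(1))^a$ exactly when $\overline m=o\big(\frac{n}{\eps^2}(n/D)^{d/2-1}\big)$. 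The delicate part is handling the non-product nature of the $\ord$ map. It requires controlling lower-size pieces, where vertices are shared by more than two edges, and showing they only help. It also requires getting the $a^{d/2-1}$ factor sharp, using a pairing or matching count as in the CSP low-degree lower bounds, rather than the crude $(Ca)^{a d}$ bound. The $a=1$ and small-$a$ terms are controlled by $\overline m\eps^2/n^{d/2}=o(1)$. Summing the geometric series yields $\|L^{\le D}\|^2-1=o(1)$, and the conjecture then rules out the stated algorithms.
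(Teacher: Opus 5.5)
Your overall architecture is the same as the paper's: refutation implies detection, the coordinate-degree advantage is computed with the latent-variable-model formula, and the surviving terms are counted by overlap. The gap is that your planted distribution is not quiet, so Step~2 fails.

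\textbf{Why your tilt fails.} You tilt $\nu_S$ by the full centered predicate $g(x_\nu)=f(x_\nu)-\avg(P)$, where $f(x)=P(\ord(x))$. This $g$ contains all Efron--Stein components of $f$ of degree $1,\dots,d$, and the lower-degree ones do not vanish after averaging. The summand for $\mathcal{T}$ is $(p\delta^2)^{|\mathcal{T}|}\,\EE_{\{\nu_S\}}\bigl[(\EE_x\prod_{S\in\mathcal{T}}g(x_{\nu_S}))^2\bigr]$, which is always nonnegative. Take $\mathcal{T}=\{S_1,S_2\}$ with $S_1\cap S_2=\{v\}$. Then the summand equals $(p\delta^2)^2\,k^{-2}\sum_{j_1,j_2\in[k]}\langle g_{\{j_1\}},g_{\{j_2\}}\rangle^2$, which is positive whenever $f$ has a nonzero degree-one component. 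This already happens for $k=2$, $P=\one\{\mu=(1,2)\}$ (so $d=2$), where $f_{\{1\}}(x_1)=\tfrac12-x_1$.

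Summing over the $\Theta(n^{2k-1})$ such pairs gives $\Adv_{\le D}(\PP,\QQ)^2\gtrsim\overline m^2\eps^4/n$. This diverges for $\sqrt n/\eps^2\ll\overline m\ll n/\eps^2$, which lies inside the regime the theorem covers. Concretely, sum $s_S(v)s_{S'}(v)$ over pairs of distinct clauses sharing a vertex $v$, where $s_S(v)=\pm1$ records whether $v$ comes first or second in clause $S$. This is a coordinate-degree-two test, and it detects your planted model in that range. More generally, a nonzero degree-$d'$ component with $d'<d$ makes your model detectable above roughly $n^{d'/2}/\eps^2$. So the lower-size pieces do not ``only help''; they destroy the bound, and no counting argument in Step~3 can repair this.

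\textbf{The missing idea: plant only the top-degree part.} Use $\Pr(\nu)=\frac{1}{k!}\bigl(1+2\eps\,f^{=d}(x_\nu)/\|f^{=d}\|_2^2\bigr)$.
\begin{itemize}
\item \emph{Validity.} This is a probability distribution because $\sum_\nu f^{=d}(y_\nu)=0$ for distinct coordinates, $|f^{=d}|\le 2^{2k}$, and $\|f^{=d}\|_2^2\ge 1/(2k)!$ (Fact~\ref{fact:fd}). That is exactly where the hypothesis $\eps\le 1/(2^{2k+1}(2k)!)$ is used; your tilt would not have needed it.
\item \emph{Step~1 still works, with one extra argument.} The planted ordering $\ord(x)$ still gains, since $\EE[f\,f^{=d}]=\|f^{=d}\|_2^2$. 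However, the per-clause bias $\Delta_S(x)$ now depends on $x$. You therefore need a McDiarmid argument over $x$ to show $\sum_S\Delta_S(x)\ge\tfrac78\binom nk$ with high probability.
\item \emph{Step~2 becomes true.} Each summand vanishes unless every $T\in\mathcal{T}$ meets $\bigcup_{S\ne T}S$ in at least $d$ vertices. The reason is that, conditionally on $x_{[n]\setminus T}$, the other factors form a $(d-1)$-junta in $x_T$. Such a junta is orthogonal to the pure degree-$d$ function $f^{=d}(x_{\nu_T})$.
\item \emph{Step~3 needs only crude counting.} This overlap condition gives $|\bigcup_{S\in\mathcal{T}}S|\le(k-\tfrac d2)|\mathcal{T}|$. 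The count $\binom{n}{\lfloor(k-d/2)r\rfloor}\binom{\binom{\lfloor(k-d/2)r\rfloor}{k}}{r}$ then already yields your term $\bigl(C\overline m\eps^2r^{d/2-1}/n^{d/2}\bigr)^r$, with no sharp pairing argument.
\end{itemize}
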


\begin{remark}
    We note that the lower bound in Theorem~\ref{thm:low-deg-hardness} matches the threshold of Theorem~\ref{thm:kikuchi} and Theorem~\ref{thm:bucketing} up to a polylogarithmic factor.
\end{remark}

\section{Refutation of OCSP}

\subsection{Reduction to Refutation of Canonical Ordering Predicate}

We first show that to obtain strong refutation of a generic ordering predicate, it is enough to obtain a two-sided strong refutation of canonical ordering predicates. To this end, we apply the Efron-Stein decomposition to the setting of OCSP. 

\begin{definition}[Canonical Ordering Predicate]
    Let $S$ be a finite set and $\mu \in \Sym(S)$. The canonical ordering predicate $\Phi_{S, \mu}: \Sym(S) \to \{0,1\}$ is defined as
    \begin{align*}
        \Phi_{S, \mu}(\tau) \colonequals \one\{\tau = \mu\}, \quad \forall\, \tau \in \Sym(S).
    \end{align*}
\end{definition}

\begin{proposition}\label{prop:ordering-predicate-decom}
    Let $k \ge 2$ and $P: \Sym([k]) \to \{0,1\}$ be an ordering predicate with coordinate degree $D(P) = d$. Then, the predicate $P$ can be expressed as
    \begin{align*}
        P(\mu) = C_P + \sum_{\substack{T \subseteq [k]:\\ 2\le |T| \le d}} \sum_{\sigma \in \Sym(T)} c_{T, \sigma} \cdot \Phi_{T, \sigma}(\mu[T]),
    \end{align*}
for some coefficients $c_{T, \sigma} \in \RR$ for $T \subseteq [k]$ and $\sigma \in \Sym(T)$ and a constant $C_P \in \RR$. Moreover, the coefficients satisfy
    \begin{align*}
        |c_{T, \sigma}| \le 2^{|T|}.
    \end{align*}
\end{proposition}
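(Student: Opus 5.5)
Let $f(x) = P(\ord(x))$ on $[0,1]^k$, with Efron--Stein decomposition $f = \sum_{T \subseteq [k]} f_T(x_T)$ from Theorem~\ref{thm:EF}. Since $D(P) = d$, we have $f_T = 0$ whenever $|T| > d$, so $f = \sum_{|T| \le d} f_T$. The plan is to show that each $f_T$ is a linear combination of the canonical predicates $\Phi_{T',\sigma}(\ord(x)[T'])$ with $T' \subseteq T$. Then we collect terms and absorb all sets of size $0$ and $1$ into the constant $C_P$.

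\textbf{Step 1: conditional expectations are combinations of canonical predicates.} Fix $T \subseteq [k]$. Conditioned on $x_T$, the remaining coordinates are i.i.d.\ uniform and independent of $x_T$. The ordering $\ord(x)$ is therefore a uniformly random extension of the induced ordering $\ord(x_T) \in \Sym(T)$; ties have measure zero, so the tie-breaking rule is irrelevant. Hence
\begin{align*}
g_T(x_T) \colonequals \EE[f(x)\mid x_T] = \sum_{\sigma \in \Sym(T)} a_{T,\sigma}\, \Phi_{T,\sigma}(\ord(x)[T]), \quad a_{T,\sigma} = \Pr_{\mu \sim \Unif(\Sym([k]))}\bigl(P(\mu)=1 \mid \mu[T] = \sigma\bigr),
\end{align*}
so $a_{T,\sigma} \in [0,1]$. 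For $|T| \le 1$ we have $\Sym(T)$ a singleton and $\Phi_{T,\sigma} \equiv 1$, so these terms are constants.

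\textbf{Step 2: substitute into the inclusion--exclusion formula and regroup.} By Theorem~\ref{thm:EF},
\begin{align*}
P(\ord(x)) = \sum_{|S|\le d}\sum_{T\subseteq S}(-1)^{|S|-|T|} g_T(x_T) = \sum_{|T|\le d} b_T\, g_T(x_T), \quad b_T = \sum_{\substack{S \supseteq T\\ |S|\le d}} (-1)^{|S|-|T|}.
\end{align*}
Setting $c_{T,\sigma} = b_T a_{T,\sigma}$ for $2 \le |T| \le d$, and letting $C_P$ collect the terms with $|T| \le 1$, gives the claimed representation, since $\ord(x)$ is uniform and every $\mu \in \Sym([k])$ arises this way.

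\textbf{Step 3: the coefficient bound.} We need $|b_T a_{T,\sigma}| \le 2^{|T|}$, and since $|a_{T,\sigma}| \le 1$ it suffices to bound $|b_T|$. The direct estimate gives $|b_T| \le 2^{k-|T|}$, which is not the bound we want when $|T|$ is small, so this step is the main obstacle. I would instead group by $S$ before substituting. Each $f_S$ equals $\sum_{T\subseteq S}(-1)^{|S|-|T|} g_T$, which has at most $2^{|S|}$ terms, each with coefficients bounded by $1$.

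Each $g_T$ must then be rewritten as a function of $\ord(x)[S]$ via $\Phi_{T,\sigma}(\mu[T]) = \sum_{\tau \in \Sym(S),\, \tau[T]=\sigma} \Phi_{S,\tau}(\mu[S])$. This expresses $f_S = \sum_{\tau \in \Sym(S)} e_{S,\tau}\, \Phi_{S,\tau}$. For a fixed $\tau$, exactly one $\sigma$ per $T$ contributes to $e_{S,\tau}$, so
\[
|e_{S,\tau}| \le \sum_{T\subseteq S} |a_{T,\tau[T]}| \le 2^{|S|}.
\]
This is the required bound, with the sets $S$ of size $2$ to $d$ playing the role of $T$ in the statement. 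The term $S=\emptyset$ is the constant $C_P = \avg(P)$. For $|S|=1$, $f_S$ is a function of $x_S$ alone but has mean zero; since $\ord(x)[S]$ is trivial for a singleton, this means $f_S$ vanishes. Concretely, $g_{\{i\}} = \avg(P)$ by the formula in Step 1, so $f_{\{i\}} = \avg(P) - \avg(P) = 0$.

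So the regrouped-by-$S$ version yields both the representation and the bound $|c_{S,\tau}| \le 2^{|S|}$. I would present it this way, with Step 1 as the key lemma.
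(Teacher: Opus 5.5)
Your Step 1 is false, and the rest of the argument rests on it. Conditioning on the \emph{values} $x_T$ is not the same as conditioning on the \emph{ordering} $\ord(x_T)$. Given $x_T$, a free coordinate $x_j$ falls below all of $x_T$ with probability $\min_{i\in T}x_i$, not $1/(|T|+1)$. So $\ord(x)$ is not a uniformly random extension of $\ord(x_T)$, and $\EE[f(x)\mid x_T]$ genuinely depends on the values.

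A counterexample appears already for $k=2$ and $f(x)=\one\{x_1<x_2\}$. Here $\EE[f\mid x_1]=1-x_1$, so $f_{\{1\}}(x_1)=\tfrac12-x_1\neq 0$, contradicting your claim that singleton components vanish. Likewise $f_{\{1,2\}}(x)=\one\{x_1<x_2\}+x_1-x_2-\tfrac12$, which is not a function of $\ord(x_1,x_2)$. So the Efron--Stein components of $P(\ord(x))$ are not combinations of canonical predicates, and the regrouping in Step 3 has nothing to act on.

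Switching to $\EE[f\mid\ord(x_T)]$ does not repair this. Those pieces are functions of orderings, but their inclusion--exclusion sums no longer vanish above level $d$. For example, $f(x)=\one\{x_1<x_3\}$ with $k=3$ has $D(P)=2$, yet $\sum_{U\subseteq[3]}(-1)^{3-|U|}\EE[f\mid\ord(x_U)]=\tfrac13-\tfrac13\one\{x_1<x_2\}-\tfrac13\one\{x_2<x_3\}\neq 0$.

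What is missing is a decomposition that is both truncated at level $d$ and expressible through orderings. The paper gets this from an anchored (M\"obius) expansion:
\begin{itemize}
\item Fix distinct anchors $z\in[0,1/3]^k$, let $h_U(x)$ keep $x_U$ and replace the remaining coordinates by those of $z$, and put $g_T(x)=\sum_{U\subseteq T}(-1)^{|T|-|U|}f(h_U(x))$. Then $f=\sum_{T}g_T$.
\item Substitute $f=\sum_{|S|\le d}f_S$, using only that $f_S$ depends on $x_S$. The alternating sum over $U\subseteq T$ cancels every $f_S$ with $T\not\subseteq S$: pair $U$ with $U\cup\{i\}$ for some $i\in T\setminus S$. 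Hence $g_T\equiv 0$ whenever $|T|>d$.
\item On $[2/3,1]^k$, all anchored coordinates lie below all free ones. So $f(h_U(x))$ depends only on $\ord(x_U)$, and hence $g_T(x)$ depends only on $\ord(x_T)$.
\item Every $\mu\in\Sym([k])$ is realized by some such $x$, which yields the representation.
\end{itemize}
The bound $|c_{T,\sigma}|\le 2^{|T|}$ then follows exactly as in your Step 3, since $g_T$ is an alternating sum of $2^{|T|}$ values in $\{0,1\}$.
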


\begin{proof}
    Let $f: [0,1]^k \to \{0,1\}$ be defined as
    \begin{align*}
        f(x_1, \dots, x_k) = P(\ord(x)),
    \end{align*}
    and
    \begin{align*}
        f(x) = \sum_{\substack{S \subseteq [k]:\\ |S|\le d}} f_S(x_S)
    \end{align*}
    be its Efron-Stein decomposition, where the summation is over $S \subseteq [k]$ of size at most $d$ since $D(f) = D(P) = d$. Recall that the Efron-Stein components take the form
    \begin{align*}
        f_S(x_S) = \sum_{T \subseteq S} (-1)^{|S| - |T|} \underset{x \sim \Unif([0,1]^k)}{\EE}[f(x) \vert x_T].
    \end{align*}

    Now fix any $z = (z_1, \dots, z_k) \in [0, 1/3]^k$ of distinct elements. For any $U \subseteq [k]$, define a map $h_{U}: [0,1]^k \to [0,1]^k$ by
    \begin{align*}
        h_U(x)_i = \begin{cases}
            x_i & \quad \text{ if } i \in U,\\
            z_i & \quad \text{ if } i \notin U. 
        \end{cases}
    \end{align*}
    Using the identity
    \begin{align*}
        \sum_{\substack{T \subseteq [k]:\\ T \supseteq U}} (-1)^{|T| - |U|} &= \begin{cases}
            1 & \quad \text{ if } U = [k],\\
            0 & \quad \text{ otherwise},
        \end{cases}
    \end{align*}
    we can rewrite $f$ as
    \begin{align*}
        f(x) &= \sum_{U \subseteq [k]} f(h_U(x)) \one\{U = [k]\}\\
        &= \sum_{U \subseteq [k]} f(h_U(x)) \sum_{\substack{T \subseteq [k]:\\ T \supseteq U}} (-1)^{|T| - |U|}\\
        &= \sum_{T \subseteq [k]} \sum_{U \subseteq T} (-1)^{|T|-|U|} f(h_U(x)). \numberthis \label{eq:mobius}
    \end{align*}

    Plugging the Efron-Stein decomposition of $f$ into \eqref{eq:mobius}, we get
    \begin{align*}
        f(x) &= \sum_{T \subseteq [k]} \sum_{U \subseteq T} (-1)^{|T|-|U|} f(h_U(x))\\
        &= \sum_{T \subseteq [k]} \sum_{U \subseteq T} (-1)^{|T|-|U|} \sum_{\substack{S \subseteq [k]: \\ |S| \le d}} f_S(h_U(x)) \numberthis \label{eq:mobius-ES}
    \end{align*}
    We claim that only $T \subseteq [k]$ of size at most $d$ contributes to the sum \eqref{eq:mobius-ES}. To see this, fix any $T \subseteq [k]$ such that $|T| \ge d+1$. Then, for any $S \subseteq [k]$, $|S| \le d$, there exists $i\in T \setminus S$, and thus
    \begin{align*}
        &\sum_{U \subseteq T} (-1)^{|T|-|U|} f_S(h_U(x))\\
        &=  \sum_{U \subseteq T \setminus \{i\}} (-1)^{|T|-|U|} f_S(h_U(x)) + \sum_{U \subseteq T \setminus \{i\}} (-1)^{|T|-|U \cup \{i\}|} f_S(h_{U \cup \{i\}}(x))\\
        &= \sum_{U \subseteq T \setminus \{i\}} (-1)^{|T|-|U|} \left(f_S(h_U(x)) - f_S(h_{U \cup \{i\}}(x))\right)\\
        &= 0,
    \end{align*}
    since $h_U(x)$ and $h_{U \cup \{i\}}(x)$ are identical on coordinates indexed by $S$.
    
    As a result, for any $T \subseteq [k]$, $|T| \ge d+1$, the inner sum vanishes as follows
    \begin{align*}
        &\sum_{U \subseteq T} (-1)^{|T|-|U|} \sum_{\substack{S \subseteq [k]: \\ |S| \le d}} f_S(h_U(x))\\
        &= \sum_{\substack{S \subseteq [k]: \\ |S| \le d}} \sum_{U \subseteq T} (-1)^{|T|-|U|}  f_S(h_U(x))\\
        &= 0.
    \end{align*}
    We can therefore further rewrite \eqref{eq:mobius-ES} as
    \begin{align*}
        f(x) &= \sum_{T \subseteq [k]} \sum_{U \subseteq T} (-1)^{|T|-|U|} \sum_{\substack{S \subseteq [k]: \\ |S| \le d}} f_S(h_U(x))\\
        &= \sum_{\substack{T \subseteq [k]:\\ |T| \le d}} \sum_{U \subseteq T} (-1)^{|T|-|U|} \sum_{\substack{S \subseteq [k]: \\ |S| \le d}} f_S(h_U(x))\\
        &= \sum_{\substack{T \subseteq [k]:\\ |T| \le d}} \sum_{U \subseteq T} (-1)^{|T|-|U|} f(h_U(x)), \numberthis \label{eq:local-ordering-decom}
    \end{align*}
    where in the last step we substitute the Efron-Stein decomposition back with $f$.

    Define $g_T: [0,1]^k \to \RR$ by
    \begin{align*}
        g_T(x) \colonequals \sum_{U \subseteq T} (-1)^{|T|-|U|} f(h_U(x)).
    \end{align*}
    Then, by \eqref{eq:local-ordering-decom}, we have
    \begin{align}
        f(x) = \sum_{\substack{T \subseteq [k]:\\ |T| \le d }} g_T(x). \label{eq:gT-expansion}
    \end{align}

    We will next show that for any $T \subseteq [k]$, $|T| \le d$ and any $x \in [2/3, 1]^k$, the value of $g_T(x)$ only depends on the induced ordering $\ord(x_T)$ of the coordinates indexed by $T$. Consider any two $x, y\in [2/3, 1]^k$ whose induced orderings $\ord(x_T)$ and $\ord(y_T)$ are identical. Recall that $h_U(x), h_U(y) \in [0,1]^k$ keep all the coordinates of $x$ and $y$ indexed by $U$, and replace all the coordinates of $x$ and $y$ outside of $U$ by coordinates of the fixed vector $z \in [0, 1/3]^k$. In particular, since the coordinates of $x$ and $y$ take value in $[2/3, 1]$, the coordinates of $h_U(x)$ and $h_U(y)$ outside of $U$ precede the coordinates indexed by $U$. Moreover, since the coordinates of $h_U(x)$ and $h_U(y)$ outside of $U$ are identical, and the coordinates of $h_U(x)$ and $h_U(y)$ indexed by $U$ have the same ordering, we conclude that $\ord(h_U(x)) = \ord(h_U(y)) \in \Sym([k])$ for any $U \subseteq T$. Consequently, 
    \begin{align*}
        g_T(x) &= \sum_{U \subseteq T} (-1)^{|T|-|U|} f(h_U(x))\\
        &= \sum_{U \subseteq T} (-1)^{|T|-|U|} P(\ord(h_U(x)))\\
        &= \sum_{U \subseteq T} (-1)^{|T|-|U|} P(\ord(h_U(y)))\\
        &= \sum_{U \subseteq T} (-1)^{|T|-|U|} f(h_U(y))\\
        &= g_T(y).
    \end{align*}

    Also note that
    \begin{align*}
        |g_T(x)| &= \left|\sum_{U \subseteq T} (-1)^{|T|-|U|} f(h_U(x))\right| \le 2^{|T|}
    \end{align*}
    as $|f(x)| \le 1$. Thus, for any $x \in [2/3, 1]^k$, since $g_T(x)$ depends only on $\ord(x_T)$, we have
    \begin{align}
        g_T(x) &= \sum_{\sigma \in \Sym(T)} \one\{\ord(x_T) = \sigma\} \cdot c_{T, \sigma} = \sum_{\sigma \in \Sym(T)} c_{T, \sigma}\cdot  \Phi_{T, \sigma}(\ord(x_T)), \label{eq:gT}
    \end{align}
    where $c_{T, \sigma} \colonequals g_T(y)$ for any arbitrary $y \in [2/3, 1]^k$ such that $\ord(y_T) = \sigma$, and the coefficients satisfy $|c_{T, \sigma}| = |g_T(y)| \le 2^{|T|}$.

    Specially, if $T = \emptyset$ or $|T| = 1$, then $g_T(x)$ is a constant function since there is a unique ordering on $T$. For any $\mu \in \Sym([k])$, pick $x^{(\mu)} \in [2/3, 1]^k$ such that $\ord(x^{(\mu)}) = \mu$. Substituting \eqref{eq:gT} into \eqref{eq:gT-expansion} and using the definition that $f(x) = P(\ord(x))$, we get that for any $\mu \in \Sym([k])$,
    \begin{align*}
        P(\mu) &= f(x^{(\mu)})\\
        &= \sum_{\substack{T\subseteq [k]:\\ |T| \le d}} g_T(x^{(\mu)})\\
        &= C_P + \sum_{\substack{T\subseteq [k]:\\ 2 \le |T| \le d}} \sum_{\sigma \in \Sym(T)} c_{T, \sigma}\cdot  \Phi_{T, \sigma}(\ord(x^{(\mu)}_T))\\
        &= C_P + \sum_{\substack{T\subseteq [k]:\\ 2 \le |T| \le d}} \sum_{\sigma \in \Sym(T)} c_{T, \sigma}\cdot  \Phi_{T, \sigma}(\mu[T]),
    \end{align*}
    where $C_P$ is the sum of the constant functions $g_T$ indexed by $T \subseteq [k]$ with $|T| \le 1$. This finishes the proof of the canonical ordering predicate decomposition of a generic ordering predicate $P$, and the coefficients in the decomposition satisfy the required bound $|c_{T, \sigma}|\le 2^{|T|}$.

    \end{proof}

    Now we show that to strongly refute a random OCSP instance with a generic ordering predicate $P: \Sym([k]) \to \{0,1\}$, it is enough to give a two-sided strong refutation of some specific random OCSP instance with canonical ordering predicates using the decomposition in Proposition~\ref{prop:ordering-predicate-decom}.

    \begin{definition}[Two-Sided Strong Refutation]
        Let $k \ge 2$ and $P: \Sym([k]) \to \{0,1\}$ be a non-trivial ordering predicate. Let $\eps = \eps(n) > 0$. An algorithm is said to achieve two-sided $\eps$-refutation of a random $k$-OCSP instance $\sI$ with predicate $P$ if with high probability, the algorithm returns a certificate that $|\val(\sI; \pi) - \avg(P)| \le \eps$ for every $\pi \in \Sym([n])$.
    \end{definition}

    \begin{definition}[$(k,p)$-Random $t$-OCSP]\label{def:kp-OCSP}
        Let $k \ge 2$, $2 \le t \le k$, and $P: \Sym([t]) \to \{0,1\}$ be an ordering predicate. A $(k,p)$-random $t$-OCSP with predicate $P$ is generated in the following way:
    \begin{itemize}
        \item For every $k$-subset $S \in \binom{[n]}{k}$, sample $S$ independently with probability $p$.
        \item For every sampled $S \in \binom{[n]}{k}$, sample $\nu = (\nu_1, \dots, \nu_t) \in \sO_t(S)$ independently and uniformly at random, and add a clause $(T, P \circ \nu^{-1})$ where $T = \{\nu_1, \dots, \nu_t\}$.
    \end{itemize}
    \end{definition}

    We note that a $(k,p)$-random $t$-OCSP instance could potentially contain multiple clauses supported on the same $T \in \binom{[n]}{t}$. $(k,p)$-random OCSP is a distribution of random OCSP that naturally arises when we apply the decomposition in Proposition~\ref{prop:ordering-predicate-decom}. The next proposition shows how to turn two-sided refutation for $(k,p)$-random $t$-OCSP with canonical ordering predicates for all $t \le d$ into a refutation algorithm for $p$-random OCSP with a generic ordering predicate with coordinate degree $d$.
    \begin{proposition}\label{prop:reduction-to-canonical}
        Let $k \ge 2$ and $P: \Sym([k]) \to \{0,1\}$ be a non-trivial ordering predicate with coordinate degree $D(P) = d$. Let $p = p(n) > 0$ and $\eps = \eps(n) > 0$.

        Suppose that for every $2\le t \le d$, there exists a two-sided $\eps$-refutation algorithm for a $(k, p)$-random $t$-OCSP instance with the canonical ordering predicate $\Phi_{[t], \Id}$ that runs in time $F_t$, then there exists an $(2^{2k}k!)\eps$-refutation algorithm for a $p$-random $k$-OCSP instance with the predicate $P$ that runs in time $O\left(\overline{m} + \max_{2 \le t \le d} F_t\right)$, where $\overline{m} = p \binom{n}{k}$.
    \end{proposition}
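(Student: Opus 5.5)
Write the clauses of the $p$-random instance as $(S_i, P\circ\nu_i^{-1})$ with $\nu_i \in \Sym(S_i)$ uniform. For a fixed $\pi$, set $\mu_i = \nu_i^{-1}(\pi[S_i]) \in \Sym([k])$ and apply Proposition~\ref{prop:ordering-predicate-decom} to each $P(\mu_i)$. This gives
\begin{align*}
\val(\sI;\pi) = C_P + \sum_{\substack{T \subseteq [k]:\\ 2\le |T|\le d}} \sum_{\sigma\in\Sym(T)} c_{T,\sigma}\cdot \frac{1}{m}\sum_{i=1}^m \Phi_{T,\sigma}(\mu_i[T]).
\end{align*}
Averaging the same identity over uniform $\mu$ shows $\avg(P) = C_P + \sum_{T,\sigma} c_{T,\sigma}/|T|!$. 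So
\begin{align*}
\val(\sI;\pi)-\avg(P) = \sum_{T,\sigma} c_{T,\sigma}\Bigl(\frac1m\sum_i \Phi_{T,\sigma}(\mu_i[T]) - \frac{1}{|T|!}\Bigr),
\end{align*}
and it suffices to bound each bracket by $\eps$ simultaneously for all $\pi$.

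Next, I rewrite each term $\Phi_{T,\sigma}(\mu_i[T])$ as a canonical predicate on the variables. Set $t=|T|$ and let $\sigma=(\sigma_1,\dots,\sigma_t)$. The event $\mu_i[T]=\sigma$ says that the variables $\nu_i(\sigma_1),\dots,\nu_i(\sigma_t)$ appear in this order under $\pi$. Equivalently, with the tuple $\rho_i := (\nu_i(\sigma_1),\dots,\nu_i(\sigma_t)) \in \sO_t(S_i)$, it is $\Phi_{[t],\Id}(\rho_i^{-1}(\pi[\{\rho_i\}]))$, which is exactly the clause $(\{\rho_i\}, \Phi_{[t],\Id}\circ\rho_i^{-1})$ evaluated at $\pi$. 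Since $\nu_i$ is a uniform ordering of $S_i$ and $\sigma$ is a fixed injection $[t]\to[k]$, $\rho_i$ is uniform on $\sO_t(S_i)$, and the $\rho_i$ are independent across the sampled $S$. Thus for each fixed $(T,\sigma)$ the derived instance $\sJ_{T,\sigma}=\{(\{\rho_i\},\Phi_{[t],\Id}\circ\rho_i^{-1})\}$ is distributed exactly as a $(k,p)$-random $t$-OCSP with predicate $\Phi_{[t],\Id}$. It also has the same $m$ clauses, and $\avg(\Phi_{[t],\Id})=1/t!$. The bracket above is then exactly $\val(\sJ_{T,\sigma};\pi)-\avg(\Phi_{[t],\Id})$.

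The algorithm builds all $\sJ_{T,\sigma}$ in $O(\overline m)$ total time, a constant number per clause since $k$ is fixed. It runs the assumed two-sided refutation algorithm on each, and it outputs the bound $\avg(P)+\eps\sum_{T,\sigma}|c_{T,\sigma}|$ if all of them succeed. Each run succeeds with high probability, because its input has the correct marginal distribution even though the instances are correlated. There are at most $2^k k!$ pairs $(T,\sigma)$, a constant, so a union bound shows all succeed with high probability. This needs no independence between the instances, and that is the only subtle point. Soundness holds deterministically whenever all certificates are valid. Finally, $\sum_{T,\sigma}|c_{T,\sigma}|\le \sum_{T}|T|!\,2^{|T|}\le 2^k\cdot k!\,2^k = 2^{2k}k!$, which gives $(2^{2k}k!)\eps$-refutation. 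The running time is $O(\overline m+\max_t F_t)$, since there are constantly many calls and $m=O(\overline m)$ with high probability by Lemma~\ref{lem:m-concentration}.
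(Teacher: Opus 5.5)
Your proposal is correct and follows essentially the same route as the paper. Both decompose $P$ via Proposition~\ref{prop:ordering-predicate-decom} and identify each term with the value of a derived instance built from $\nu_i\circ\sigma$ (your $\rho_i$, the paper's clauses $(\nu_i(T), \Phi_{[t],\Id}\circ\sigma^{-1}\nu_i^{-1})$); both then note each derived instance is marginally a $(k,p)$-random $t$-OCSP, union bound over the $O_k(1)$ pairs $(T,\sigma)$, and use $\sum_{T,\sigma}|c_{T,\sigma}|\le 2^{2k}k!$. You also spell out two points the paper leaves implicit: the union bound needs no independence among the derived instances, and $m=O(\overline{m})$ is needed to state the running time in terms of $\overline{m}$.
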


    \begin{proof}
        By Proposition~\ref{prop:ordering-predicate-decom}, there exists a decomposition of the predicate $P$ as
        \begin{align}
            P(\mu) = C_P + \sum_{\substack{T \subseteq [k]:\\ 2\le |T| \le d}} \sum_{\sigma \in \Sym(T)} c_{T, \sigma} \cdot \Phi_{T, \sigma}(\mu[T]), \label{eq:P-decom}
        \end{align}
        for some coefficients $c_{T, \sigma} \in \RR$ for $T \subseteq [k]$ and $\sigma \in \Sym(T)$ that satisfies $|c_{T, \sigma}| \le 2^{|T|}$ and a constant $C_P \in \RR$. Note that we have
        \begin{align*}
            \avg(P) &= \underset{\mu \sim \Unif(\Sym([k]))}{\EE}\left[C_P + \sum_{\substack{T \subseteq [k]:\\ 2\le |T| \le d}} \sum_{\sigma \in \Sym(T)} c_{T, \sigma} \cdot \Phi_{T, \sigma}(\mu[T])\right]\\
            &= C_P + \sum_{\substack{T \subseteq [k]:\\ 2\le |T| \le d}} \sum_{\sigma \in \Sym(T)} c_{T, \sigma} \cdot \frac{1}{|T|!}. \numberthis \label{eq:avg-decom}
        \end{align*}
        Moreover, the decomposition \eqref{eq:P-decom} can be computed in $O_k(1)$ time.

        Given this decomposition \eqref{eq:P-decom} of the predicate $P$, we may decompose the value of a random $k$-OCSP instance $\sI = \{(S_i, P \circ \nu_i^{-1})\}_{i=1}^m$ as
        \begin{align*}
            \val(\sI; \pi) &= \frac{1}{m} \sum_{i=1}^m P(\nu_i^{-1}(\pi[S_i]))\\
            &= \frac{1}{m} \sum_{i=1}^m \left(C_P + \sum_{\substack{T \subseteq [k]:\\ 2\le |T| \le d}} \sum_{\sigma \in \Sym(T)} c_{T, \sigma} \cdot \Phi_{T, \sigma}(\nu_i^{-1}(\pi[S_i])[T])\right)\\
            &= C_P + \sum_{\substack{T \subseteq [k]:\\ 2\le |T| \le d}} \sum_{\sigma \in \Sym(T)} c_{T, \sigma} \left(\frac{1}{m}\sum_{i=1}^m\Phi_{T, \sigma}(\nu_i^{-1}(\pi[S_i])[T])\right)\\
            &= C_P + \sum_{\substack{T \subseteq [k]:\\ 2\le |T| \le d}} \sum_{\sigma \in \Sym(T)} c_{T, \sigma} \left(\frac{1}{m}\sum_{i=1}^m\Phi_{[|T|], \Id}(\sigma^{-1} \nu_i^{-1}\pi[\nu_i(T)])\right)\\
            &= C_P + \sum_{\substack{T \subseteq [k]:\\ 2\le |T| \le d}} \sum_{\sigma \in \Sym(T)} c_{T, \sigma} \cdot \val(\sI_{T,\sigma}; \pi),
        \end{align*}
        where $\sI_{T,\sigma}$ is the $t$-OCSP instance with predicate $\Phi_{[t], \Id}$ with $t \colonequals |T|$, obtained by adding a clause $(\nu_i(T), \Phi_{[t], \Id} \circ \sigma^{-1} \nu_i^{-1})$ for every clause $(S_i, P \circ \nu_i^{-1})$ of $\sI$. Note that $\sI_{T, \sigma}$ follows the distribution of $(k, p)$-random $t$-OCSP with predicate $\Phi_{[t], \Id}$, because
        \begin{itemize}
            \item every $S \in \binom{[n]}{k}$ is sampled independently with probability $p$, and $\nu \in \Sym(S)$ is sampled uniformly at random for every $S$,
            \item $\nu \circ \sigma \in \sO_t(S)$ is a uniformly random ordered list of distinct elements of $S$ of length $t$ as $\nu \sim \Unif(\Sym(S))$, and $\nu(T) = \nu(\sigma(T))$ as $\sigma \in \Sym(T)$.
        \end{itemize}
        As a result, a two-sided $\eps$-refutation algorithm for a $(k,p)$-random $t$-OCSP instance with the canonical ordering predicate $\Phi_{[t], \Id}$ on $m$ clauses with high probability certifies that
        \begin{align*}
            \max_{\pi \in \Sym([n])}\left|\val(\sI_{T, \sigma}; \pi) - \frac{1}{t!}\right| \le \eps.
        \end{align*}

    Taking a union bound over the failure probabilities over the two-sided $\eps$-refutation algorithm applied to $\sI_{T, \sigma}$ for all $T \subseteq [k]: 2 \le |T| \le d$ and all $\sigma \in \Sym(T)$, with high probability we get a certificate that
    \begin{align*}
            \max_{\pi \in \Sym([n])}\left|\val(\sI_{T, \sigma}; \pi) - \frac{1}{|T|!}\right| \le \eps, \quad \forall\, T \subseteq [k]: 2 \le |T| \le d, \forall\,\sigma \in \Sym(T), 
        \end{align*}
    and this certificate is computed in time $O_k(m + \max_{2\le t \le d} F_t)$, by constructing the OCSP instances $\sI_{T, \sigma}$ and applying the two-sided strong refutation algorithms to them.

    Finally, using the above certificate and \eqref{eq:avg-decom}, we get
    \begin{align*}
        &\val(\sI) - \avg(P)\\
        &= \max_{\pi \in \Sym([n])} \left(C_P + \sum_{\substack{T \subseteq [k]:\\ 2\le |T| \le d}} \sum_{\sigma \in \Sym(T)} c_{T, \sigma} \cdot \val(\sI_{T,\sigma}; \pi)\right) - \left(C_P + \sum_{\substack{T \subseteq [k]:\\ 2\le |T| \le d}} \sum_{\sigma \in \Sym(T)} c_{T, \sigma} \cdot \frac{1}{|T|!}\right)\\
        &\le \sum_{\substack{T \subseteq [k]:\\ 2\le |T| \le d}} \sum_{\sigma \in \Sym(T)} |c_{T, \sigma}| \max_{\pi \in \Sym([n])} \left|\val(\sI_{T, \sigma}; \pi) - \frac{1}{|T|!}\right|\\
        &\le 2^k \cdot k! \cdot 2^k \cdot \eps\\
        &= (2^{2k}k!)\eps,
    \end{align*}
    which gives the desired $(2^{2k}k!)\eps$-refutation algorithm with the specified running time for $\sI$.

    \end{proof}

\subsection{Kikuchi Method}

In this section, we prove Theorem~\ref{thm:kikuchi}, which provides strong refutation for OCSP using the Kikuchi method with a smooth tradeoff between running time, clause density, and the refutation strength.

We first give an overview of Kikuchi method. In the setting of random $k$-XOR, Kikuchi method builds a Kikuchi matrix from the labeled hypergraph underlying the $k$-XOR instance, and use the spectral norm of the Kikuchi matrix to certify that no variable assignments can satisfy an unusually large fraction of clauses. More precisely, suppose $k$ is even and the $k$-XOR instance over the domain $\{\pm 1\}^n$ have $m$ clauses taking the form $\prod_{t \in S_i} x_t = b_i$ where $b_i \in \{\pm 1\}$, the level-$\ell$ Kikuchi matrix $M$ is a matrix indexed by $\ell$-subsets of $[n]$, whose $(A,B)$-entry takes value $\sum_{i \in [m]} b_i\cdot \one\{A \triangle B = S_i\}$. Then, bounding the spectral norm of $M$ certifies an upper bound on the fraction of satisfiable clauses, since 
\begin{align*}
    &\max_{x \in \{\pm 1\}^n } \frac{1}{m} \sum_{i=1}^m\left(\one\left\{\prod_{t \in S_i} x_t = b_i\right\} - \frac{1}{2}\right)\\
    &= \max_{x \in \{\pm 1\}^n } \frac{1}{m} \sum_{i=1}^m\frac{b_i \prod_{t \in S_i} x_t}{2}\\
    &= \max_{x \in \{\pm 1\}^n } \frac{1}{2m} \sum_{i=1}^m \frac{1}{\binom{k}{k/2}\binom{n-k}{\ell-k/2}} b_i\sum_{ \substack{A, B \in \binom{[n]}{\ell}:\\ A \triangle B = S_i } } \left(\prod_{s \in A} x_s\right)\left(\prod_{t \in B} x_t\right)\\
    &= \max_{x \in \{\pm 1\}^n } \frac{1}{2\binom{k}{k/2}\binom{n-k}{\ell-k/2}m}(x^{\otimes \ell})^\top M x^{\otimes \ell},
\end{align*}
where $x^{\otimes \ell} \in \RR^{\binom{n}{\ell} }$ represents the vector indexed by $\ell$-subsets of $[n]$ whose coordinates are $x^A = \prod_{t \in A} x_t$. Note that the norm of $x^{\otimes \ell}$ satisfies $\|x^{\otimes \ell}\|^2 = \binom{n}{\ell}$, and thus certifying a good upper bound on the spectral norm of $M$ certifies an upper bound on the fraction of satisfiable clauses. For more details, see \cite[Section 6]{wein2019kikuchi}.

However, the vanilla Kikuchi method does not directly extend to the setting of OCSPs due to the non-product structure of $\Sym([n])$. We therefore introduce the following rank decomposition procedure.

\subsubsection{Rank Decomposition Procedure}\label{sec:rank-decom}
Fix $n \in \NN$. Let $L = \lceil \log_2(n+1) \rceil$. We introduce the parity encoding of elements of $[n]$ as follows.

\begin{definition}[Parity Encoding]
    For $x \in [n]$, we denote the binary representation of $x$ as
\begin{align*}
    b(x) = (b_1(x), \dots, b_L(x)) \in \{0,1\}^L,
\end{align*}
with $b_1(x)$ being the most significant bit and $b_L(x)$ being the least significant bit. We denote the parity encoding of $x$ as
\begin{align*}
    e(x) = (e_1(x), \dots, e_L(x)) \colonequals \left((-1)^{b_1(x)}, \dots, (-1)^{b_L(x)}\right) \in \{\pm 1\}^L.
\end{align*}
\end{definition}

We may rewrite the canonical ordering predicate in terms of the parity encoding using the following lemma.
\begin{lemma}\label{lem:rank-decom}
    For $x, y \in [n]$, the basic ordering indicator of $x < y$ can be written as
    \begin{align*}
        \one\{x < y\} = \sum_{\alpha, \beta \subseteq [L]} c_{\alpha, \beta} e(x)^{\alpha} e(y)^{\beta},
\end{align*}
where $e(x)^{\alpha} \colonequals \prod_{t \in \alpha} e_t(x)$ and
\begin{align*}
    c_{\alpha, \beta} &= \sum_{r=1}^L \frac{1}{2^{r+1}} \one\{\alpha \triangle \beta \subseteq \{r\}, \alpha \cup \beta \subseteq [r] \} (-1)^{\one\{r \in \beta\}}.
\end{align*}

Moreover, for $S \in \binom{[n]}{t}$ and $\mu = (\mu_1, \dots, \mu_t) \in \Sym(S)$, the canonical ordering predicate $\Phi_{S, \mu}$ can be expressed as
\begin{align*}
    \Phi_{S,\mu}(\pi[S]) &= \one\{\pi[S] = \mu\}\\
    &= \one\{\pi^{-1}(\mu_1) < \pi^{-1}(\mu_2) < \dots < \pi^{-1}(\mu_t)\}\\
    &= \sum_{\Gamma \in \Theta} C_\Gamma \cdot r(\pi, \mu)^\Gamma,
\end{align*}
where we denote
\begin{align*}
    C_\Gamma &\colonequals \sum_{\substack{\{(\alpha_i, \beta_i)\}_{i=1}^{t-1}:\\ \alpha_i, \beta_i \subseteq [L]}} \left(\prod_{i=1}^{t-1} c_{\alpha_i, \beta_i} \right) \one\{\alpha_1 = \gamma_1\} \one\{\beta_{t-1} = \gamma_t\} \left(\prod_{i=2}^{t-1} \one\{\alpha_i \triangle \beta_{i-1} = \gamma_i\}\right),\\
    r(\pi, \mu) &= (r(\pi, \mu_1), r(\pi, \mu_2), \dots, r(\pi, \mu_t))\\
    &\colonequals (e(\pi^{-1}(\mu_1)), e(\pi^{-1}(\mu_2)), \dots, e(\pi^{-1}(\mu_t))) \in \left(\{\pm 1\}^{L}\right)^t,\\
    r(\pi, \mu)^\Gamma &\colonequals \prod_{i=1}^t r(\pi, \mu_i)^{\gamma_i},
\end{align*}
and let
\begin{align*}
    \Theta \colonequals \{\Gamma = \{\gamma_i\}_{i=1}^t: \gamma_i \subseteq [L],  \forall\, i \in [t], \text{ and } C_\Gamma \ne 0\}
\end{align*}
be the set of all $\Gamma$ for which $C_\Gamma$ is nonzero.
\end{lemma}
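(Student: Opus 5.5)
The plan is to first prove the two-variable identity by comparing binary representations bit by bit, and then obtain the $t$-variable formula by multiplying $t-1$ copies of it and regrouping the monomials.

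\textbf{Step 1: the two-variable identity.} For distinct $x,y\in[n]$, let $r$ be the first index where $b(x)$ and $b(y)$ differ. Then $x<y$ iff $b_r(x)=0$ and $b_r(y)=1$, since $b_1$ is the most significant bit. So
\begin{align*}
\one\{x<y\}=\sum_{r=1}^L \left(\prod_{s<r}\frac{1+e_s(x)e_s(y)}{2}\right)\frac{1+e_r(x)}{2}\cdot\frac{1-e_r(y)}{2}.
\end{align*}
The events indexed by different $r$ are disjoint. If $x=y$, every summand vanishes because $(1+e_r)(1-e_r)=0$. Now expand each summand. The product over $s<r$ gives $2^{-(r-1)}\sum_{\gamma\subseteq[r-1]}e(x)^\gamma e(y)^\gamma$. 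The last two factors give $\frac14(1+e_r(x))(1-e_r(y))$, whose monomials are $e_r(x)^{a}e_r(y)^{b}$ with sign $(-1)^{b}$ for $a,b\in\{0,1\}$. Multiplying these, the monomials appearing for a given $r$ are exactly the pairs $(\alpha,\beta)$ with $\alpha\cup\beta\subseteq[r]$ and $\alpha\triangle\beta\subseteq\{r\}$. Each such pair has coefficient $2^{-(r+1)}(-1)^{\one\{r\in\beta\}}$. Summing over $r$ gives the stated $c_{\alpha,\beta}$.

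\textbf{Step 2: the $t$-variable formula.} Write $a_i=\pi^{-1}(\mu_i)$, the position of $\mu_i$ under $\pi$. By the definition of the induced ordering, $\pi[S]=\mu$ iff $a_1<a_2<\dots<a_t$, and this equals $\prod_{i=1}^{t-1}\one\{a_i<a_{i+1}\}$. Apply Step 1 to each factor, with $\alpha_i$ attached to $e(a_i)$ and $\beta_i$ attached to $e(a_{i+1})$. Expanding the product gives a sum over tuples $\{(\alpha_i,\beta_i)\}_{i=1}^{t-1}$ of $\prod_i c_{\alpha_i,\beta_i}$ times a product of characters.

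\textbf{Step 3: regrouping.} The character for $a_1$ is $e(a_1)^{\alpha_1}$, and the character for $a_t$ is $e(a_t)^{\beta_{t-1}}$. For $2\le i\le t-1$, the variable $a_i$ appears twice, giving $e(a_i)^{\beta_{i-1}}e(a_i)^{\alpha_i}$. Since entries are $\pm1$, this equals $e(a_i)^{\alpha_i\triangle\beta_{i-1}}$. Grouping terms by the resulting exponent tuple $\Gamma=(\gamma_1,\dots,\gamma_t)$ yields exactly the indicator constraints in the definition of $C_\Gamma$, with $r(\pi,\mu)_i=e(a_i)$. Finally, discarding the $\Gamma$ with $C_\Gamma=0$ does not change the sum, so we may restrict to $\Theta$.

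The only step needing care is Step 1, specifically keeping the exponent bookkeeping straight: matching the sign $(-1)^{\one\{r\in\beta\}}$ and the power $2^{-(r+1)}$ when combining the $2^{-(r-1)}$ prefix factor with the $\frac14$ from the last two factors. The rest of the argument is routine expansion.
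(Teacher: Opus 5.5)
Your proposal is correct and follows the paper's proof essentially step for step. Both use the same lexicographic decomposition $\one\{x<y\}=\sum_{r}\prod_{s<r}\frac{1+e_s(x)e_s(y)}{2}\cdot\frac{(1+e_r(x))(1-e_r(y))}{4}$, the same expansion into $c_{\alpha,\beta}$, and the same product over consecutive pairs with characters merged via $e(x)^{A}e(x)^{B}=e(x)^{A\triangle B}$. Your explicit checks of the case $x=y$ and of the correspondence between monomials and pairs $(\alpha,\beta)$ with $\alpha\cup\beta\subseteq[r]$, $\alpha\triangle\beta\subseteq\{r\}$ fill in details the paper leaves implicit.
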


\begin{proof}
Note
\begin{align*}
    \one\{b_i(x) = b_i(y)\} &= \frac{1 + e_i(x)e_i(y)}{2},\\
    \one\{b_i(x) < b_i(y)\} &= \frac{(1 + e_i(x))(1 - e_i(y))}{4}.
\end{align*}
As a result, we may express the basic ordering indicator as
\begin{align*}
    \one\{x < y\} &= \sum_{r=1}^L \prod_{s < r} \one\{b_s(x) = b_s(y)\} \one\{b_r(x) < b_r(y)\}\\
    &= \sum_{r=1}^L \frac{1}{2^{r+1}}(1 + e_r(x))(1 - e_r(y))\prod_{s < r} (1 + e_s(x)e_s(y))\\
    &= \sum_{\alpha, \beta \subseteq [L]} c_{\alpha, \beta} e(x)^{\alpha} e(y)^{\beta}, \numberthis \label{eq:order-decom}
\end{align*}
where $e(x)^{\alpha} \colonequals \prod_{t \in \alpha} e_t(x)$ and the coefficients take the form 
\begin{align}
    c_{\alpha, \beta} &= \sum_{r=1}^L \frac{1}{2^{r+1}} \one\{\alpha \triangle \beta \subseteq \{r\}, \alpha \cup \beta \subseteq [r] \} (-1)^{\one\{r \in \beta\}}. \label{eq:coeff}
\end{align}
Using \eqref{eq:order-decom} and \eqref{eq:coeff}, for $S \subseteq [n]$ of size $|S| = t$ and $\mu \in \Sym(S)$, we may write the canonical ordering predicate $\Phi_{S, \mu}$ as
\begin{align*}
    \Phi_{S,\mu}(\pi[S]) &= \one\{\pi[S] = \mu\}\\
    &= \one\{\pi^{-1}(\mu(1)) < \pi^{-1}(\mu(2)) < \dots < \pi^{-1}(\mu(|S|))\}\\
    &= \prod_{i=1}^{t-1} \one\{\pi^{-1}(\mu(i)) < \pi^{-1}(\mu(i+1))\}\\
    &= \prod_{i=1}^{t-1} \sum_{\alpha_i, \beta_i \subseteq [L]} c_{\alpha_i, \beta_i} e(\pi^{-1}(\mu(i)))^{\alpha_i} e(\pi^{-1}(\mu(i+1)))^{\beta_i}\\
    &= \sum_{\substack{\{(\alpha_i, \beta_i)\}_{i=1}^{t-1}:\\ \alpha_i, \beta_i \subseteq [L]}} \left(\prod_{i=1}^{t-1} c_{\alpha_i, \beta_i} \right) e(\pi^{-1}(\mu(1)))^{\alpha_1} e(\pi^{-1}(\mu(t)))^{\beta_{t-1}} \prod_{i=2}^{t-1} e(\pi^{-1}(\mu(i)))^{\alpha_i \triangle \beta_{i-1}}, \numberthis \label{eq:ordering-predicate-decom}
\end{align*}
where the last step follows because $e(x)^A e(x)^B = e(x)^{A \triangle B} \left(e(x)^{A \cap B}\right)^2 = e(x)^{A \triangle B}$, as $e(x) \in \{\pm 1\}^L$. Recall as in the statement of Lemma~\ref{lem:rank-decom} that, given any $S \subseteq [n]$ with $|S| = t$, $\mu \in \Sym(S)$, and $\Gamma = \{\gamma_i\}_{i=1}^{t}$ where $\gamma_i \subseteq [L]$ for every $1 \le i \le t$, we denote
\begin{align*}
    C_\Gamma &\colonequals \sum_{\substack{\{(\alpha_i, \beta_i)\}_{i=1}^{t-1}:\\ \alpha_i, \beta_i \subseteq [L]}} \left(\prod_{i=1}^{t-1} c_{\alpha_i, \beta_i} \right) \one\{\alpha_1 = \gamma_1\} \one\{\beta_{t-1} = \gamma_t\} \left(\prod_{i=2}^{t-1} \one\{\alpha_i \triangle \beta_{i-1} = \gamma_i\}\right),\\
    r(\pi, \mu) &= (r(\pi, \mu_1), r(\pi, \mu_2), \dots, r(\pi, \mu_t))\\
    &\colonequals (e(\pi^{-1}(\mu_1)), e(\pi^{-1}(\mu_2)), \dots, e(\pi^{-1}(\mu_t))) \in \left(\{\pm 1\}^{L}\right)^t,\\
    r(\pi, \mu)^\Gamma &\colonequals \prod_{i=1}^t r(\pi, \mu_i)^{\gamma_i},
\end{align*}
and let
\begin{align*}
    \Theta \colonequals \{\Gamma = \{\gamma_i\}_{i=1}^t: \gamma_i \subseteq [L],  \forall\, i \in [t], \text{ and } C_\Gamma \ne 0\}
\end{align*}
be the set of all $\Gamma$ for which $C_\Gamma$ is nonzero. 
Using these notations, \eqref{eq:ordering-predicate-decom} can be written as
\begin{align*}
    \Phi_{S,\mu}(\pi[S]) &= \one\{\pi^{-1}(\mu_1) < \pi^{-1}(\mu_2) < \dots < \pi^{-1}(\mu_{|S|})\} = \sum_{\Gamma \in \Theta} C_\Gamma \cdot r(\pi, \mu)^\Gamma.
\end{align*}
\end{proof}

We will need the following bound on the coefficients $C_\Gamma$ in Lemma~\ref{lem:rank-decom}.

\begin{lemma}\label{lem:coeff-bound}
    In the rank decomposition stated in Lemma~\ref{lem:rank-decom}, the coefficients satisfy
    \begin{align*}
        \sum_{\Gamma \in \Theta} |C_\Gamma| \le L^{t-1},
    \end{align*}
    where $L = \lceil \log_2(n+1)\rceil$.
\end{lemma}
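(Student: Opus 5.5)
The plan is to bound $\sum_\Gamma |C_\Gamma|$ by the corresponding sum of absolute values of the pre-collapse coefficients, then factor over the $t-1$ pairwise comparisons. By the definition of $C_\Gamma$ and the triangle inequality,
\begin{align*}
    \sum_{\Gamma \in \Theta} |C_\Gamma| \le \sum_{\Gamma} \sum_{\{(\alpha_i,\beta_i)\}_{i=1}^{t-1}} \left(\prod_{i=1}^{t-1} |c_{\alpha_i,\beta_i}|\right) \one\{\alpha_1=\gamma_1\}\one\{\beta_{t-1}=\gamma_t\}\prod_{i=2}^{t-1}\one\{\alpha_i\triangle\beta_{i-1}=\gamma_i\}.
\end{align*}
For each tuple $\{(\alpha_i,\beta_i)\}$ there is exactly one $\Gamma$ making all indicators equal to $1$. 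Summing over $\Gamma$ first therefore removes the indicators, and the right-hand side equals
\begin{align*}
    \prod_{i=1}^{t-1}\left(\sum_{\alpha,\beta\subseteq[L]} |c_{\alpha,\beta}|\right).
\end{align*}
So it suffices to show $\sum_{\alpha,\beta}|c_{\alpha,\beta}| \le L$.

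For this single-comparison bound, I will use the explicit formula \eqref{eq:coeff} and the triangle inequality again, so that
\begin{align*}
    \sum_{\alpha,\beta}|c_{\alpha,\beta}| \le \sum_{r=1}^L \frac{1}{2^{r+1}} \#\{(\alpha,\beta): \alpha\triangle\beta\subseteq\{r\},\ \alpha\cup\beta\subseteq[r]\}.
\end{align*}
Next I count the pairs for a fixed $r$. The sets $\alpha$ and $\beta$ must agree on $[r-1]$, giving $2^{r-1}$ choices for the common part. On the coordinate $r$, the pair (membership of $r$ in $\alpha$, membership of $r$ in $\beta$) can take all $4$ values: symmetric difference $\emptyset$ or $\{r\}$, both contained in $\{r\}$. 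This gives $2^{r+1}$ pairs. Hence each $r$ contributes at most $1$, and the sum is at most $L$.

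Equivalently, the $r$-th term is the expansion of $\frac{1}{2^{r+1}}(1+e_r(x))(1-e_r(y))\prod_{s<r}(1+e_s(x)e_s(y))$, which has $4\cdot 2^{r-1}$ monomials, each with coefficient $\pm 2^{-(r+1)}$. Combining the two steps gives $\sum_{\Gamma\in\Theta}|C_\Gamma|\le L^{t-1}$.

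The only mildly delicate point is the first step: showing that the map from tuples $(\alpha_i,\beta_i)$ to $\Gamma$ is a well-defined function, so that no tuple is counted twice. This is immediate from the indicator structure, since $\gamma_1=\alpha_1$, $\gamma_t=\beta_{t-1}$, and $\gamma_i=\alpha_i\triangle\beta_{i-1}$ for $2\le i\le t-1$. Restricting the sum to $\Theta$ only drops terms and so cannot increase it.
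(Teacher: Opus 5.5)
Your proposal is correct and follows the paper's proof essentially step for step. Both apply the triangle inequality and swap the sums, using that each tuple $\{(\alpha_i,\beta_i)\}_{i=1}^{t-1}$ determines $\Gamma$ uniquely, to get $\bigl(\sum_{\alpha,\beta}|c_{\alpha,\beta}|\bigr)^{t-1}$. Both then count the $2^{r+1}$ admissible pairs for each $r$ to conclude $\sum_{\alpha,\beta}|c_{\alpha,\beta}|\le L$. The only cosmetic difference is that you sum over all $\Gamma$, where exactly one matches, before restricting to $\Theta$; the paper instead argues directly that at most one $\Gamma\in\Theta$ matches.
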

\begin{proof}
    Recall the coefficients $C_\Gamma$ take the form
    \begin{align*}
        C_\Gamma &\colonequals \sum_{\substack{\{(\alpha_i, \beta_i)\}_{i=1}^{t-1}:\\ \alpha_i, \beta_i \subseteq [L]}} \left(\prod_{i=1}^{t-1} c_{\alpha_i, \beta_i} \right) \one\{\alpha_1 = \gamma_1\} \one\{\beta_{t-1} = \gamma_t\} \left(\prod_{i=2}^{t-1} \one\{\alpha_i \triangle \beta_{i-1} = \gamma_i\}\right),
    \end{align*}
    where $c_{\alpha,\beta}$ are given by \eqref{eq:coeff} as
    \begin{align*}
        c_{\alpha, \beta} &= \sum_{r=1}^L \frac{1}{2^{r+1}} \one\{\alpha \triangle \beta \subseteq \{r\}, \alpha \cup \beta \subseteq [r] \} (-1)^{\one\{r \in \beta\}}. 
    \end{align*}
    Then,
    \begin{align*}
        \sum_{\Gamma \in \Theta} |C_\Gamma| &= \sum_{\Gamma \in \Theta} \left|\sum_{\substack{\{(\alpha_i, \beta_i)\}_{i=1}^{t-1}:\\ \alpha_i, \beta_i \subseteq [L]}} \left(\prod_{i=1}^{t-1} c_{\alpha_i, \beta_i} \right) \one\{\alpha_1 = \gamma_1\} \one\{\beta_{t-1} = \gamma_t\} \left(\prod_{i=2}^{t-1} \one\{\alpha_i \triangle \beta_{i-1} = \gamma_i\}\right)\right|\\
        &\le \sum_{\Gamma \in \Theta} \sum_{\substack{\{(\alpha_i, \beta_i)\}_{i=1}^{t-1}:\\ \alpha_i, \beta_i \subseteq [L]}} \left(\prod_{i=1}^{t-1} \left|c_{\alpha_i, \beta_i}\right| \right) \one\{\alpha_1 = \gamma_1\} \one\{\beta_{t-1} = \gamma_t\} \left(\prod_{i=2}^{t-1} \one\{\alpha_i \triangle \beta_{i-1} = \gamma_i\}\right)\\
        &= \sum_{\substack{\{(\alpha_i, \beta_i)\}_{i=1}^{t-1}:\\ \alpha_i, \beta_i \subseteq [L]}} \left(\prod_{i=1}^{t-1} \left|c_{\alpha_i, \beta_i}\right| \right) \sum_{\Gamma \in \Theta}\one\{\alpha_1 = \gamma_1\} \one\{\beta_{t-1} = \gamma_t\} \left(\prod_{i=2}^{t-1} \one\{\alpha_i \triangle \beta_{i-1} = \gamma_i\}\right)\\
        &\le \sum_{\substack{\{(\alpha_i, \beta_i)\}_{i=1}^{t-1}:\\ \alpha_i, \beta_i \subseteq [L]}} \left(\prod_{i=1}^{t-1} \left|c_{\alpha_i, \beta_i}\right| \right)\\
        &= \left(\sum_{\alpha, \beta \subseteq [L]}\left|c_{\alpha, \beta}\right|\right)^{t-1}, \numberthis \label{ineq:coeff-bound}
    \end{align*}
    where the inequality in the second-to-last line follows because for every $\{(\alpha_i, \beta_i)\}_{i=1}^{t-1}$, there is at most one $\Gamma \in \Theta$ such that the indicators for $\alpha_1 = \gamma_1, \beta_{t-1} = \gamma_t$, and $\alpha_i \triangle \beta_{i-1} = \gamma_i$ for every $2 \le i \le t-1$ hold.

    Summing over $\alpha, \beta \subseteq [L]$ and using the expression \eqref{eq:coeff}, we have
    \begin{align*}
        \sum_{\alpha, \beta \subseteq [L]}\left|c_{\alpha, \beta}\right| &= \sum_{\alpha, \beta \subseteq [L]}\left|\sum_{r=1}^L \frac{1}{2^{r+1}} \one\{\alpha \triangle \beta \subseteq \{r\}, \alpha \cup \beta \subseteq [r] \} (-1)^{\one\{r \in \beta\}}\right|\\
        &\le \sum_{\alpha, \beta \subseteq [L]}\sum_{r=1}^L \frac{1}{2^{r+1}} \one\{\alpha \triangle \beta \subseteq \{r\}, \alpha \cup \beta \subseteq [r] \}\\
        &= \sum_{r=1}^L \frac{1}{2^{r+1}} \sum_{\alpha, \beta \subseteq [r]} \one\{\alpha \triangle \beta \subseteq \{r\} \}\\
        &= \sum_{r=1}^L \frac{1}{2^{r+1}} \sum_{\alpha, \beta \subseteq [r]} \one\{\beta = \alpha, \text{ or } \beta = \alpha \triangle \{r\} \}\\
        &= \sum_{r=1}^L \frac{1}{2^{r+1}} \sum_{\alpha\subseteq [r]} 2\\
        &= \sum_{r=1}^L 1\\
        &= L,
    \end{align*}
    and from \eqref{ineq:coeff-bound} we conclude
    \begin{align*}
        \sum_{\Gamma \in \Theta} |C_\Gamma| \le \left(\sum_{\alpha, \beta \subseteq [L]}\left|c_{\alpha, \beta}\right|\right)^{t-1} \le L^{t-1}.
    \end{align*}
\end{proof}

\subsubsection{Certifying Tensor Evaluation}

\begin{definition}[Centered Occurrence Count Tensor]\label{def:F}
    Let $\sI = \{(T_i, \Phi_{[t],\Id} \circ \nu_i^{-1})\}_{i=1}^m$ be a $t$-OCSP instance with the canonical ordering predicate $\Phi_{[t],\Id}$ on $m$ clauses. Define an asymmetric tensor $F \in \RR^{[n]^t}$, the centered occurence count tensor, whose entries are given by
\begin{align}
    F(v_1, v_2, \dots, v_t) &= \begin{cases}
        \left(\sum_{i=1}^m \one\{\nu_i(j) = v_j, \forall\, 1\le j\le t\}\right) - \frac{p \binom{n}{k}}{\binom{n}{t}t!} & \quad \text{ if } v_1, \dots, v_t \text{ are distinct},\\
        0 & \quad \text{ otherwise}.
    \end{cases} \label{eq:F-def}
\end{align}
\end{definition}

We may derive the tensor $F$ in the following alternative way. Recall that according to Definition~\ref{def:kp-OCSP}, an equivalent way to draw $(k,p)$-random $t$-OCSP is as follows:
\begin{itemize}
    \item First draw an Erd\H{o}s-R\'{e}nyi $k$-uniform hypergraph $H$ where each possible $S \in \binom{[n]}{k}$ is included in $E(H)$ independently with probability $p$.
    \item For every $S \in \binom{[n]}{k}$, draw a local ordering $\mu_S \in \Sym(S)$ independently and uniformly at random.
    \item Then, for every $(v_1, \dots, v_t) \in \sO_t([n])$, set
    \begin{align*}
        F(v_1, \dots, v_t) &= \sum_{S \in \binom{[n]}{k}} \sum_{\substack{\mu \in \Sym(S):\\ \mu(j) = v_j, \forall\, 1\le j \le t}} \left(\one\{S \in E(H), \mu = \mu_S\} - \frac{p}{k!}\right)\\
        &=\sum_{\substack{S \in \binom{[n]}{k}:\\ \{v_1, \dots, v_t\} \subseteq S} } \left(\one\{S \in E(H), \mu_S(j) = v_j, \forall\, 1\le j \le t\} - \frac{p(k-t)!}{k!}\right)\\
        &= \left(\sum_{i=1}^m \one\{\nu_i(j) = v_j, \forall\, 1\le j \le t\}\right) - \binom{n-t}{k-t}\cdot\frac{p(k-t)!}{k!}\\
        &= \left(\sum_{i=1}^m \one\{\nu_i(j) = v_j, \forall\, 1\le j \le t\}\right) - \frac{p\binom{n}{k}}{\binom{n}{t}t!},
    \end{align*}
    which recovers Definition~\ref{def:F} above. Since $\one\{S \in E(H), \mu = \mu_S\} - \frac{p}{k!}$ has mean zero, $F$\footnote{This tensor is also defined at the beginning of \cite[Section 5.3]{chan2026strongly}, but we note that there is a small discrepancy between $p$ in their formula for $G_R$ and the probability $p$ used in their Definition 3.2. One needs to replace $p$ in $G_R$ with $\hat{p} \colonequals \frac{p}{k!}$ in order for $G_R$ to be centered random variables. } can be regarded as the centered occurrence count tensor when $\sI$ is a $(k,p)$-random $t$-OCSP.
\end{itemize}

We may upper bound the deviation of OCSP value from $\frac{1}{t!}$ using the following lemma.

\begin{lemma}\label{lem:val-deviation}
    The following bound holds simultaneously for all $\pi \in \Sym([n])$:
    \begin{align*}
    \left|m\cdot \val(\sI; \pi) - \frac{\overline{m}}{t!}\right| &\le L^{t-1} \max_{y^{(1)}, y^{(2)}, \dots, y^{(t)} \in \{\pm 1\}^n} \left|\langle F,  y^{(1)} \otimes y^{(2)} \otimes \dots \otimes y^{(t)}\rangle\right|,
\end{align*}
where $L = \lceil \log_2(n+1)\rceil$ and $\overline{m} = \binom{n}{k}p$.
\end{lemma}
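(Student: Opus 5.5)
We express $m\cdot\val(\sI;\pi) - \overline{m}/t!$ as a contraction of $F$ against a sum of rank-one tensors built from the rank decomposition of Lemma~\ref{lem:rank-decom}. We then bound each term by the maximum over sign vectors, and finally sum the coefficients using Lemma~\ref{lem:coeff-bound}.

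\textbf{Step 1: write the value as a contraction of $F$ plus the mean.} Fix $\pi$. Each clause $i$ is $(T_i,\Phi_{[t],\Id}\circ\nu_i^{-1})$ and is satisfied iff $\pi[T_i]=\nu_i$, i.e.\ iff $\Phi_{T_i,\nu_i}(\pi[T_i])=1$. So
\begin{align*}
m\cdot\val(\sI;\pi)=\sum_{(v_1,\dots,v_t)\in\sO_t([n])}\Big(\sum_{i=1}^m\one\{\nu_i(j)=v_j\ \forall j\}\Big)\,\one\{\pi^{-1}(v_1)<\dots<\pi^{-1}(v_t)\}.
\end{align*}
We substitute the definition of $F$, writing the count as $F(v)+\frac{\overline{m}}{\binom{n}{t}t!}$. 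The constant term contributes
\[
\frac{\overline{m}}{\binom{n}{t}t!}\cdot\#\{v\in\sO_t([n]):\pi^{-1}(v_1)<\dots<\pi^{-1}(v_t)\}.
\]
This count is exactly $\binom{n}{t}$, so the term equals $\overline{m}/t!$. Hence
\[
m\cdot\val(\sI;\pi)-\overline{m}/t!=\sum_{v}F(v)\,\one\{\pi^{-1}(v_1)<\dots<\pi^{-1}(v_t)\},
\]
and since $F$ vanishes off distinct tuples we may sum over all $v\in[n]^t$.

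\textbf{Step 2: expand the ordering indicator.} By Lemma~\ref{lem:rank-decom}, for distinct $v$,
\[
\one\{\pi^{-1}(v_1)<\dots<\pi^{-1}(v_t)\}=\sum_{\Gamma\in\Theta}C_\Gamma\prod_{j=1}^t e(\pi^{-1}(v_j))^{\gamma_j}.
\]
We are careful about non-distinct $v$: there the product formula may not match the strict indicator, but $F(v)=0$ on such tuples, so it does not matter. For each $\Gamma$ and $j$, define $y^{(j)}_\Gamma\in\{\pm1\}^n$ by $y^{(j)}_\Gamma(v)=e(\pi^{-1}(v))^{\gamma_j}$. This is a product of $\pm1$ entries, so it lies in $\{\pm1\}^n$, with the empty product equal to $1$. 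Exchanging the order of summation gives
\[
m\cdot\val(\sI;\pi)-\overline{m}/t!=\sum_{\Gamma\in\Theta}C_\Gamma\,\langle F,y^{(1)}_\Gamma\otimes\dots\otimes y^{(t)}_\Gamma\rangle.
\]

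\textbf{Step 3: bound the sum.} By the triangle inequality, the absolute value is at most
\[
\Big(\sum_{\Gamma\in\Theta}|C_\Gamma|\Big)\max_{y^{(1)},\dots,y^{(t)}\in\{\pm1\}^n}\big|\langle F,y^{(1)}\otimes\dots\otimes y^{(t)}\rangle\big|.
\]
Lemma~\ref{lem:coeff-bound} bounds the coefficient sum by $L^{t-1}$. The resulting bound does not depend on $\pi$, so it holds simultaneously for all $\pi$.

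The only delicate point is the bookkeeping in Step 1: verifying that the centering constant in $F$ reproduces exactly $\overline{m}/t!$ for every $\pi$, and that the rank expansion is only ever applied to distinct tuples. Everything else is a direct combination of the two earlier lemmas.
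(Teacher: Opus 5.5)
Your proposal is correct and follows the paper's proof essentially step for step. The steps are: center via the fact that exactly one ordering of each $t$-set is consistent with $\pi$, expand with the rank decomposition into rank-one tensors of $\pm1$ vectors $e(\pi^{-1}(\cdot))^{\gamma_j}$, and finish with the triangle inequality and Lemma~\ref{lem:coeff-bound}. Your caution about non-distinct tuples is harmless but unnecessary, since the bitwise formula for $\one\{x<y\}$ is exact even when $x=y$.
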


\begin{proof}

We may express
\begin{align*}
    &m\cdot \val(\sI; \pi) - \frac{\overline{m}}{t!}\\
    &= \left(\sum_{i=1}^m \Phi_{[t], \Id} (\nu_i^{-1}(\pi[S_i]))\right) - \frac{\overline{m}}{t!}\\
    &= \left(\sum_{i=1}^m \one\{\nu_i^{-1}(\pi[S_i]) = \Id\}\right) - \frac{\overline{m}}{t!}\\
    &= \left(\sum_{i=1}^m \one\{\pi[S_i] = \nu_i\}\right) - \frac{\binom{n}{k}p}{t!}\\
    &= \left(\sum_{v = (v_1, \dots, v_t) \in \sO_t([n])} \left(\sum_{i=1}^m \one\{\nu_i(j) = v_j, \forall\, 1 \le j \le t\}\right) \one\{\pi^{-1}(v_1) < \pi^{-1}(v_2) < \dots < \pi^{-1}(v_t)\} \right) - \frac{\binom{n}{k}p}{t!}\\
    &= \sum_{v = (v_1, \dots, v_t) \in \sO_t([n]) } \left(\left(\sum_{i=1}^m \one\{\nu_i(j) = v_j, \forall\, 1 \le j \le t\}\right) - \frac{p \binom{n}{k}}{\binom{n}{t}t!}\right) \one\{\pi^{-1}(v_1) < \dots < \pi^{-1}(v_t)\}\\
    &= \sum_{v = (v_1, \dots, v_t) \in \sO_t([n]) } F(v_1, \dots, v_t)\one\{\pi^{-1}(v_1) <  \dots < \pi^{-1}(v_t)\},
\end{align*}
where the second-to-last step holds because for every set $S \subseteq [n]$ with $|S| = t$ and every $\pi \in \Sym([n])$, there is exactly one way to order the elements of $S$ as $(v_1, \dots, v_t) \in \Sym(S) \subseteq \sO_t([n])$ such that $\pi^{-1}(v_1) < \pi^{-1}(v_2) < \dots < \pi^{-1}(v_t)$, and thus the total amount subtracted is equal to $\binom{n}{t} \cdot \frac{p \binom{n}{k}}{\binom{n}{t}t!} = \frac{\binom{n}{k}p}{t!} = \frac{\overline{m}}{t!}$. Now we further use the rank decomposition formula in Lemma~\ref{lem:rank-decom} to get
\begin{align*}
    &m \cdot\val(\sI; \pi) - \frac{\overline{m}}{t!}\\
    &= \sum_{\substack{v_1, \dots, v_t \in [n]:\\ v_i \text{ distinct}} } F(v_1, \dots, v_t)\one\{\pi^{-1}(v_1) <\dots < \pi^{-1}(v_t)\}\\
    &= \sum_{S \subseteq [n]: |S| = t }\sum_{\nu \in \Sym(S) } F(\nu(1), \dots, \nu(t))\one\{\pi^{-1}(\nu(1)) <  \dots < \pi^{-1}(\nu(t))\}\\
    &= \sum_{S \subseteq [n]: |S| = t }\sum_{\nu \in \Sym(S) } F(\nu(1), \dots, \nu(t)) \sum_{\Gamma \in \Theta} C_{\Gamma} r(\pi, \nu)^\Gamma  && \quad \text{(Lemma~\ref{lem:rank-decom})}\\
    &= \sum_{\Gamma \in \Theta } C_{\Gamma}  \langle F, r_\pi^\Gamma \rangle,
\end{align*}
where $r_\pi^\Gamma \in \RR^{[n]^t}$ denotes the tensor with entries
\begin{align*}
    r_\pi^\Gamma(v) = r(\pi, v)^\Gamma, \quad \forall\, v \in [n]^t,
\end{align*}
where for convenience, we overload the notation and denote for $v = (v_1, v_2, \dots, v_t) \in [n]^t$ with not necessarily distinct coordinates that
\begin{align*}
    r(\pi, v) &= (r(\pi, v_1), r(\pi, v_2), \dots, r(\pi, v_t))\\
    &\colonequals (e(\pi^{-1}(v_1)), e(\pi^{-1}(v_2)), \dots, e(\pi^{-1}(v_t))) \in \left(\{\pm 1\}^{L}\right)^t,\\
    r(\pi, v)^\Gamma &\colonequals \prod_{i=1}^t r(\pi, v_i)^{\gamma_i}
\end{align*}

Thus, we may bound
\begin{align}
    \left|m \cdot\val(\sI; \pi) - \frac{\overline{m}}{t!}\right| \le \sum_{\Gamma \in \Theta } |C_{\Gamma}| \cdot\left|\langle F, r_\pi^\Gamma \rangle\right|,\label{ineq:tensor-bound}
\end{align}
and it remains to bound $\left|\langle F, r_\pi^\Gamma \rangle\right|$ for every $\pi \in \Sym([n])$ and every $\Gamma \in \Theta$.

Recall that we denote $r(\pi, j) = e(\pi^{-1}(j))$ for $\pi \in \Sym([n])$ and $j \in [n]$. For $\Gamma \in \Theta$, $\pi \in \Sym([n])$, and $i \in [t]$, define a vector $x^{(\pi, \Gamma, i)} \in \{\pm 1\}^n$ whose entries are given by
\begin{align*}
    x^{(\pi, \Gamma, i)}_j \colonequals r(\pi, j)^{\gamma_i} = e(\pi^{-1}(j))^{\gamma_i}, \quad \forall\, j\in [n].
\end{align*}
Then, we can rewrite the entries of $r_\pi^\Gamma$ as
\begin{align*}
    r_\pi^\Gamma(v) = r(\pi, v)^\Gamma = \prod_{i=1}^t r(\pi, v_i)^{\gamma_i}= \prod_{i=1}^t x_{v_i}^{(\pi, \Gamma, i)}, \quad \forall\, v \in [n]^t,
\end{align*}
and express $r_\pi^\Gamma$ as a rank-one tensor product
\begin{align*}
    r_\pi^\Gamma = x^{(\pi, \Gamma, 1)} \otimes x^{(\pi, \Gamma, 2)} \otimes \dots \otimes x^{(\pi, \Gamma, t)}.
\end{align*}

Since $x^{(\pi, \Gamma, i)} \in \{\pm 1\}^n$ for any $\Gamma \in \Theta, \pi \in \Sym([n])$, and $i \in [t]$, we may upper bound
\begin{align*}
    \left|\langle F, r_\pi^\Gamma \rangle\right| &= \left|\langle F, x^{(\pi, \Gamma, 1)} \otimes x^{(\pi, \Gamma, 2)} \otimes \dots \otimes x^{(\pi, \Gamma, t)} \rangle\right|\\
    &\le \max_{y^{(1)}, y^{(2)}, \dots, y^{(t)} \in \{\pm 1\}^n} \left|\langle F,  y^{(1)} \otimes y^{(2)} \otimes \dots \otimes y^{(t)}\rangle\right| \numberthis \label{ineq:F-bd}
\end{align*}
simultaneously for all $\Gamma \in \Theta$ and $\pi \in \Sym([n])$. Plugging \eqref{ineq:F-bd} into \eqref{ineq:tensor-bound} and using the coefficient bound from Lemma~\ref{lem:coeff-bound}, we get
\begin{align*}
    \left|m \cdot \val(\sI; \pi) - \frac{\overline{m}}{t!}\right| &\le \sum_{\Gamma \in \Theta } |C_{\Gamma}|\left|\langle F, r_\pi^\Gamma \rangle\right|\\
    &\le L^{t-1} \max_{y^{(1)}, y^{(2)}, \dots, y^{(t)} \in \{\pm 1\}^n} \left|\langle F,  y^{(1)} \otimes y^{(2)} \otimes \dots \otimes y^{(t)}\rangle\right|
\end{align*}
as desired, where $L = \lceil \log_2(n+1)\rceil$.
\end{proof}

\subsubsection{Kikuchi Matrix}

Note that Lemma~\ref{lem:val-deviation} reduces the problem of two-sided strong refutation of OCSP with canonical ordering predicate to certifying the value of the asymmetric tensor $F$ on rank-one tensor products of $\{\pm 1\}$-valued vectors, and \cite{chan2026strongly} provided tools for this tensor certification task using the Kikuchi method. To do so, let us define the Kikuchi matrix we need, and use the spectral norm bound on the Kikuchi matrix to bound 
\begin{align*}
    \max_{y^{(1)}, y^{(2)}, \dots, y^{(t)} \in \{\pm 1\}^n} \left|\langle F,  y^{(1)} \otimes y^{(2)} \otimes \dots \otimes y^{(t)}\rangle\right|.
\end{align*}

As usual in application of Kikuchi methods, we need to treat the case of odd-order $F$ differently from the case of even-order $F$. Let us start with the simpler case of $F \in \RR^{[n]^t}$ with even $t$.

\begin{definition}[Kikuchi Matrix for Even-Order Tensor]\label{def:kikuchi-even}
    Fix $t \in \NN$ even. Let $F \in \RR^{[n]^t}$ be an asymmetric tensor, and $1\le \ell \le n$ be a parameter. Let $\sJ = \sJ^{t,\ell}$ denote
    \begin{align*}
        \sJ \colonequals \left\{(A_1, A_2, \dots, A_t): A_i \subseteq [n], \forall\, 1\le i \le t, \text{ and } \sum_{i=1}^t |A_i| = \ell\right\}.
    \end{align*}
    
    We define the level-$\ell$ Kikuchi matrix $M = M^{(t,\ell)}(F) \in \RR^{\sJ \times \sJ}$ as follows. For \[\mathcal{A} = (A_1, \dots, A_t), \mathcal{B} = (B_1, \dots, B_t) \in \sJ,\] the $(\mathcal{A}, \mathcal{B})$-entry of $M$ is
    \begin{align*}
        M(\mathcal{A}, \mathcal{B}) &= \begin{cases}
            F(v_1, \dots, v_t) & \quad \text{ if } A_i \triangle B_i = \{v_i\}, \forall\, 1\le i \le t,\\
            0 & \quad \text{ otherwise}.
        \end{cases}
    \end{align*}
\end{definition}
Then, we may certify the evaluation of an even-order tensor $F \in \RR^{[n]^t}$ on tensor product of vectors in $\{\pm 1\}^n$ using the spectral norm of Kikuchi matrices. 
\begin{lemma}[Spectral Certification for Even-Order Tensor with Kikuchi Matrix]\label{lem:kikuchi-certification-even}
    Fix $t \in \NN$ even. Let $F \in \RR^{[n]^t}$ be an asymmetric tensor, and $\frac{t}{2}\le \ell \le n$ be a parameter.

    Let $M = M^{t, \ell}(F)$ be the level-$\ell$ Kikuchi matrix of $F$. Then,
    \begin{align*}
        \max_{y^{(1)}, y^{(2)}, \dots, y^{(t)} \in \{\pm 1\}^n} \left|\langle F,  y^{(1)} \otimes y^{(2)} \otimes \dots \otimes y^{(t)}\rangle\right| \le \frac{\binom{tn}{\ell}}{\binom{tn-t}{\ell - \frac{t}{2}}\binom{t}{\frac{t}{2}}} \|M\|_2.
    \end{align*}
\end{lemma}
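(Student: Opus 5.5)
The plan is to run the standard Kikuchi quadratic-form argument on the disjoint union of $t$ copies of $[n]$. The key is to identify $\sJ$ with the $\ell$-subsets of the ground set $[t]\times[n]$, which has size $tn$: the tuple $(A_1,\dots,A_t)$ corresponds to $\bigsqcup_i \{i\}\times A_i$. In particular $|\sJ| = \binom{tn}{\ell}$.

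Fix $y^{(1)},\dots,y^{(t)}\in\{\pm1\}^n$. Define $z\in\{\pm1\}^{\sJ}$ by
\begin{align*}
z(\mathcal{A}) \colonequals \prod_{i=1}^t \prod_{a\in A_i} y^{(i)}_a .
\end{align*}
Then $\|z\|_2^2 = \binom{tn}{\ell}$. We also note that $M$ is symmetric, since the condition $A_i\triangle B_i=\{v_i\}$ is symmetric in $\mathcal{A},\mathcal{B}$. Hence $|z^\top M z|\le \|M\|_2\|z\|_2^2 = \binom{tn}{\ell}\|M\|_2$.

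Next I would compute $z^\top M z$ exactly. Whenever $A_i\triangle B_i=\{v_i\}$ for all $i$, the $\pm1$ entries indexed by $A_i\cap B_i$ appear squared, so
\begin{align*}
z(\mathcal{A})z(\mathcal{B})=\prod_{i=1}^t y^{(i)}_{v_i}.
\end{align*}
Therefore
\begin{align*}
z^\top M z=\sum_{v\in[n]^t} N(v)\, F(v)\prod_{i=1}^t y^{(i)}_{v_i},
\end{align*}
where $N(v)$ is the number of pairs $(\mathcal{A},\mathcal{B})\in\sJ\times\sJ$ with $A_i\triangle B_i=\{v_i\}$ for every $i$.

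The main step is to show that $N(v)$ does not depend on $v$. For each $i$, the element $(i,v_i)$ lies in exactly one of $\mathcal{A},\mathcal{B}$. Because $|\mathcal{A}|=|\mathcal{B}|=\ell$, exactly $t/2$ of the $t$ distinguished elements $(i,v_i)$ go to $\mathcal{A}$ and the rest go to $\mathcal{B}$. These elements are automatically distinct because they sit in different copies $i$, even when the $v_i$ coincide. The common part $\mathcal{A}\cap\mathcal{B}$ is then an arbitrary subset of size $\ell-t/2$ of the remaining $tn-t$ ground elements. Hence
\begin{align*}
N(v)=\binom{t}{t/2}\binom{tn-t}{\ell-t/2}
\end{align*}
for every $v$, and this is positive because $t/2\le\ell$ and $\ell\le n$. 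It follows that $z^\top M z = N\langle F, y^{(1)}\otimes\cdots\otimes y^{(t)}\rangle$.

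Dividing by $N$ gives
\begin{align*}
\left|\langle F, y^{(1)}\otimes\cdots\otimes y^{(t)}\rangle\right| \le \frac{\binom{tn}{\ell}}{\binom{tn-t}{\ell-\frac t2}\binom{t}{\frac t2}}\,\|M\|_2,
\end{align*}
and taking the maximum over all $y^{(i)}\in\{\pm1\}^n$ proves the lemma. The only real care is in the counting step: the disjoint-copies encoding is what makes $N$ uniform over all $v\in[n]^t$, including $v$ with repeated coordinates, where $F$ vanishes anyway.
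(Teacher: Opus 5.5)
Your proposal is correct and follows essentially the same argument as the paper: build the $\pm1$ vector $x_{\mathcal{A}}=\prod_i\prod_{j\in A_i}y^{(i)}_j$ of squared norm $|\sJ|=\binom{tn}{\ell}$, show that every $v\in[n]^t$ arises from exactly $\binom{tn-t}{\ell-t/2}\binom{t}{t/2}$ pairs $(\mathcal{A},\mathcal{B})$, and bound $|x^\top Mx|\le\|M\|_2\|x\|_2^2$. Your identification of $\sJ$ with the $\ell$-subsets of $[t]\times[n]$ just makes explicit the counting that the paper does implicitly, including the case where coordinates of $v$ repeat.
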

\begin{proof}
    For any fixed $v_1, \dots, v_t \in [n]$, observe that there are exactly $\binom{tn-t}{\ell - \frac{t}{2}}\binom{t}{\frac{t}{2}}$ pairs of $\mathcal{A} = (A_1, \dots, A_t), \mathcal{B} = (B_1, \dots, B_t) \in \sJ$ such that $A_i \triangle B_i = \{v_i\}$ for all $1\le i \le t$. This is because
    \begin{itemize}
        \item $\sum_i |A_i| = \ell, \sum_i |B_i| = \ell$,
        \item and $A_i \triangle B_i = \{v_i\}$ for all $1\le i \le t$
    \end{itemize}
    force that for exactly half of $i \in [t]$, $A_i = B_i \sqcup \{v_i\}$, and for the other half of $i \in [t]$, $B_i = A_i \sqcup \{v_i\}$. Thus, there are $\binom{tn-t}{\ell - \frac{t}{2}}$ ways to choose $(A_1 \cap B_1, \dots, A_t \cap B_t)$, and $\binom{t}{\frac{t}{2}}$ ways to distribute $v_1, \dots, v_t$ evenly between $\mathcal{A}$ and $\mathcal{B}$. We denote
    \begin{align*}
        d_{\ell} \colonequals \binom{tn-t}{\ell - \frac{t}{2}}\binom{t}{\frac{t}{2}}.
    \end{align*}
    
    Consequently, for any $y^{(1)}, y^{(2)}, \dots, y^{(t)} \in \{\pm 1\}^n$, we have
    \begin{align*}
        &\left|\langle F, y^{(1)} \otimes y^{(2)}\otimes \dots\otimes y^{(t)} \rangle\right|\\
        &= \left|\sum_{v_1, \dots, v_t \in [n]} F(v_1, \dots, v_t) \prod_{i=1}^t y^{(i)}_{v_i}\right|\\
        &= \frac{1}{d_{\ell}} \left|\sum_{v_1, \dots, v_t \in [n]} \sum_{\substack{\mathcal{A} = (A_1, \dots, A_t),\\ \mathcal{B} = (B_1, \dots, B_t) \in \sJ:\\ A_i \triangle B_i = \{v_i\}, \forall\, 1\le i \le t }} F(v_1, \dots, v_t) \prod_{i=1}^t y^{(i)}_{v_i}\right|\\
        &= \frac{1}{d_{\ell}} \left|\sum_{v_1, \dots, v_t \in [n]} \sum_{\substack{\mathcal{A} = (A_1, \dots, A_t),\\ \mathcal{B} = (B_1, \dots, B_t) \in \sJ:\\ A_i \triangle B_i = \{v_i\}, \forall\, 1\le i \le t }} M(\mathcal{A}, \mathcal{B}) \prod_{i=1}^t \prod_{j \in A_i \triangle B_i}y^{(i)}_{j}\right| && \quad (A_i \triangle B_i = \{v_i\})\\
        &= \frac{1}{d_{\ell}} \left|\sum_{v_1, \dots, v_t \in [n]} \sum_{\substack{\mathcal{A} = (A_1, \dots, A_t),\\ \mathcal{B} = (B_1, \dots, B_t) \in \sJ:\\ A_i \triangle B_i = \{v_i\}, \forall\, 1\le i \le t }} M(\mathcal{A}, \mathcal{B}) \prod_{i=1}^t \left(\prod_{j \in A_i }y^{(i)}_{j}\right)\left(\prod_{j \in B_i }y^{(i)}_{j}\right)\right| && \quad (y^{(i)} \in \{\pm 1\}^n)\\
        &= \frac{1}{d_{\ell}} \left| \sum_{\substack{\mathcal{A} = (A_1, \dots, A_t),\\ \mathcal{B} = (B_1, \dots, B_t) \in \sJ}} M(\mathcal{A}, \mathcal{B}) \prod_{i=1}^t \left(\prod_{j \in A_i }y^{(i)}_{j}\right)\left(\prod_{j \in B_i }y^{(i)}_{j}\right)\right|\\
        &= \frac{1}{d_{\ell}} \left| \sum_{\substack{\mathcal{A} = (A_1, \dots, A_t),\\ \mathcal{B} = (B_1, \dots, B_t) \in \sJ}} M(\mathcal{A}, \mathcal{B}) x_{\mathcal{A}} x_{\mathcal{B}}\right|,
    \end{align*}
    where we use $x \in \{\pm 1\}^{\sJ}$ to denote the vector with entries
    \begin{align*}
        x_{\mathcal{A}} = \prod_{i=1}^t \prod_{j \in A_i} y_j^{(i)}
    \end{align*}
    for $\mathcal{A} = (A_1, \dots, A_t) \in \sJ$. Since $x \in \{\pm 1\}^{\sJ}$ has a fixed norm $\|x\|_2 = \sqrt{|\sJ|} = \sqrt{\binom{tn}{\ell}}$, we conclude
    \begin{align*}
        \frac{1}{d_{\ell}} \left| \sum_{\substack{\mathcal{A} = (A_1, \dots, A_t),\\ \mathcal{B} = (B_1, \dots, B_t) \in \sJ}} M(\mathcal{A}, \mathcal{B}) x_{\mathcal{A}} x_{\mathcal{B}}\right| = \frac{1}{\binom{tn-t}{\ell - \frac{t}{2}}\binom{t}{\frac{t}{2}}} \left|x^\top M x\right| \le \frac{1}{\binom{tn-t}{\ell - \frac{t}{2}}\binom{t}{\frac{t}{2}}} \|M\|_2 \|x\|_2^2 = \frac{\binom{tn}{\ell}}{\binom{tn-t}{\ell - \frac{t}{2}}\binom{t}{\frac{t}{2}}} \|M\|_2.
    \end{align*}
\end{proof}

Next, we consider the case of odd-order tensor.

\begin{definition}[Kikuchi Matrix for Odd-Order Tensor]\label{def:kikuchi-odd}
    Fix $t \in \NN$ odd. Let $F \in \RR^{[n]^t}$ be an asymmetric tensor, and $1\le \ell \le n$ be a parameter. Let $\tilde{\sJ} = \tilde{\sJ}^{t,\ell}$ denote
    \begin{align*}
        \tilde{\sJ} \colonequals \left\{((A_1^{(1)}, A_2^{(1)}, \dots, A_{t-1}^{(1)}), (A_1^{(2)}, A_2^{(2)}, \dots, A_{t-1}^{(2)})): A_i \subseteq [n], \forall\, 1\le i \le t-1, \text{ and } \sum_{c\in \{1,2\}}\sum_{i=1}^{t-1} |A_i^{(c)}| = \ell\right\}.
    \end{align*}
    
    We define the level-$\ell$ Kikuchi matrix $M = M^{(t,\ell)}(F) \in \RR^{\tilde{\sJ} \times \tilde{\sJ}}$ as follows. For \[\mathcal{A} = ((A_1^{(1)}, \dots, A_{t-1}^{(1)}), (A_1^{(2)}, \dots, A_{t-1}^{(2)})), \mathcal{B} = ((B_1^{(1)}, \dots, B_{t-1}^{(1)}), B_2^{(2)}, \dots, B_{t-1}^{(2)})) \in \tilde{\sJ},\] the $(\mathcal{A}, \mathcal{B})$-entry of $M$ is
    \begin{align*}
        M(\mathcal{A}, \mathcal{B}) &= \begin{cases}
            \sum_{u \in [n]} F(v_1^{(1)}, \dots, v_{t-1}^{(1)}, u)F(v_1^{(2)}, \dots, v_{t-1}^{(2)}, u) & \quad \begin{aligned}
            \text{ if } \forall\, c \in \{1,2\}, \sum_{i=1}^{t-1}|A_i^{(c)}| = \sum_{i=1}^{t-1}|B_i^{(c)}|,\\
            A_i^{(c)} \triangle B_i^{(c)} = \{v_i^{(c)}\}, \forall\, 1\le i \le t-1,\\
            \text{and }(v_1^{(1)}, \dots, v_{t-1}^{(1)}) \ne (v_1^{(2)}, \dots, v_{t-1}^{(2)}), 
            \end{aligned}\\
            0 & \quad \text{ otherwise}.
        \end{cases}
    \end{align*}
\end{definition}

\begin{remark}
    We note that this is identical to the definition of Kikuchi matrix for odd-order tensor used in \cite{chan2026strongly}. The conditions $\sum_{i=1}^{t-1}|A_i^{(1)}| = \sum_{i=1}^{t-1}|B_i^{(1)}|$ and $A_i^{(1)} \triangle B_i^{(1)} = \{v_i^{(1)}\}, \forall\, 1\le i \le t-1$ is equivalent to that: among the $t-1$ $v_i^{(1)}$ for $1\le i \le t-1$, exactly half of them belong to $A_i^{(1)}$ and half of them belong to $B_i^{(1)}$.
\end{remark}

Similarly, we may certify the evaluation of an odd-order tensor $F \in \RR^{[n]^t}$ with Kikuchi matrices. 
\begin{lemma}[Spectral Certification for Odd-Order Tensor with Kikuchi Matrix]\label{lem:kikuchi-certification-odd}
    Fix $t \in \NN$ odd. Let $F \in \RR^{[n]^t}$ be an asymmetric tensor, and $t-1\le \ell \le n$ be a parameter.

    Let $M = M^{t, \ell}(F)$ be the level-$\ell$ Kikuchi matrix of $F$. Then,
    \begin{align*}
        \max_{y^{(1)}, y^{(2)}, \dots, y^{(t)} \in \{\pm 1\}^n} \left|\langle F,  y^{(1)} \otimes y^{(2)} \otimes \dots \otimes y^{(t)}\rangle\right| \le \sqrt{\frac{n\binom{2(t-1)n}{\ell}}{\binom{2(t-1)n-2(t-1)}{\ell - (t-1)}\binom{t-1}{\frac{t-1}{2}}^2} \|M\|_2 + n \|F\|_{\text{Fr}}^2 }.
    \end{align*}
\end{lemma}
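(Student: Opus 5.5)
Fix $y^{(1)},\dots,y^{(t)}\in\{\pm 1\}^n$. The plan is the standard Cauchy--Schwarz reduction on the last coordinate, followed by the even-case argument of Lemma~\ref{lem:kikuchi-certification-even} applied to the resulting squared form. Write
\begin{align*}
\langle F, y^{(1)}\otimes\cdots\otimes y^{(t)}\rangle = \sum_{u\in[n]} y^{(t)}_u \sum_{v\in[n]^{t-1}} F(v,u)\prod_{i=1}^{t-1} y^{(i)}_{v_i}.
\end{align*}
By Cauchy--Schwarz over $u$ and $|y^{(t)}_u|=1$, its square is at most
\begin{align*}
n\sum_{u}\Bigl(\sum_{v} F(v,u)\prod_i y^{(i)}_{v_i}\Bigr)^2 = n\sum_{v^{(1)},v^{(2)}\in[n]^{t-1}} G(v^{(1)},v^{(2)}) \prod_{i=1}^{t-1} y^{(i)}_{v^{(1)}_i} y^{(i)}_{v^{(2)}_i},
\end{align*}
where $G(v^{(1)},v^{(2)}) \colonequals \sum_u F(v^{(1)},u)F(v^{(2)},u)$. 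I will split this sum into diagonal terms $v^{(1)}=v^{(2)}$ and off-diagonal terms. The diagonal part equals $\sum_{v,u}F(v,u)^2 = \|F\|_{\text{Fr}}^2$ because the $y$-factors square to $1$. This gives the additive $n\|F\|_{\text{Fr}}^2$ term.

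For the off-diagonal part, I will treat $G$ as an order-$2(t-1)$ tensor, which is even, and repeat the double-counting of Lemma~\ref{lem:kikuchi-certification-even} with the index set $\tilde{\sJ}$. Fix $v^{(1)}\neq v^{(2)}$. I will count the pairs $(\mathcal A,\mathcal B)\in\tilde{\sJ}^2$ with $A^{(c)}_i\triangle B^{(c)}_i=\{v^{(c)}_i\}$ and with $\sum_i|A^{(c)}_i|=\sum_i|B^{(c)}_i|$ for each $c$. The balance condition forces exactly $\frac{t-1}{2}$ of the $v^{(c)}_i$ to go to the $\mathcal A$ side for each $c$, giving $\binom{t-1}{\frac{t-1}{2}}^2$ ways to distribute. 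The intersections $(A^{(c)}_i\cap B^{(c)}_i)$ then form a choice of $\ell-(t-1)$ elements from the $2(t-1)n-2(t-1)$ remaining slots. The count is therefore
\begin{align*}
\tilde d_\ell \colonequals \binom{2(t-1)n-2(t-1)}{\ell-(t-1)}\binom{t-1}{\frac{t-1}{2}}^2,
\end{align*}
independent of $(v^{(1)},v^{(2)})$. This is the step needing the most care: the total size is $\ell$ on both sides, so each side must carry exactly $t-1$ of the $2(t-1)$ symmetric-difference elements. The per-color balance imposed in Definition~\ref{def:kikuchi-odd} refines this and makes the count exactly the product above. Also, $\ell\ge t-1$ is what makes the count nonzero.

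With this count, setting $x_{\mathcal A}=\prod_{c,i}\prod_{j\in A^{(c)}_i} y^{(i)}_j$ and using $y^{(i)}_j\in\{\pm1\}$ as before, the off-diagonal sum equals $\frac{1}{\tilde d_\ell}x^\top M x$. Since $x\in\{\pm1\}^{\tilde{\sJ}}$, this is at most $\frac{|\tilde{\sJ}|}{\tilde d_\ell}\|M\|_2$ with $|\tilde{\sJ}|=\binom{2(t-1)n}{\ell}$. Combining, the square of the tensor evaluation is at most $\frac{n\binom{2(t-1)n}{\ell}}{\tilde d_\ell}\|M\|_2+n\|F\|_{\text{Fr}}^2$. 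Taking square roots and then the maximum over the $y^{(i)}$ gives the claim.
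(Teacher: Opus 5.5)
Your proposal is correct and follows the paper's proof essentially step for step: Cauchy--Schwarz over the last coordinate, splitting the resulting pair sum into the diagonal part (which gives $n\|F\|_{\text{Fr}}^2$) and the off-diagonal part, and the same double count $\binom{2(t-1)n-2(t-1)}{\ell-(t-1)}\binom{t-1}{\frac{t-1}{2}}^2$ to rewrite the off-diagonal part as $\frac{n}{\tilde d_\ell}x^\top M x$. You sum over all of $[n]^{t-1}$ rather than over $\sO_{t-1}([n])$ as the paper does; this is legitimate because the separate slots $(c,i)$ make the count independent of whether the entries $v^{(c)}_i$ coincide, and it makes your argument valid for an arbitrary tensor $F$, not only for one vanishing off tuples with distinct coordinates.
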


\begin{proof}
    Note that for any fixed $v^{(1)} = (v_1^{(1)}, \dots, v_{t-1}^{(1)}), v^{(2)} = (v_1^{(2)}, \dots, v_{t-1}^{(2)})\in \sO_{t-1}([n])$, there are exactly $\binom{2(t-1)n-2(t-1)}{\ell - (t-1)}\binom{t-1}{\frac{t-1}{2}}^2$ pairs of $\mathcal{A} = ((A_1^{(1)}, \dots, A_{t-1}^{(1)}),(A_1^{(2)}, \dots, A_{t-1}^{(2)})), \mathcal{B} = ((B_1^{(1)}, \dots, B_{t-1}^{(1)}), (B_1^{(2)}, \dots, B_{t-1}^{(2)})) \in \tilde{\sJ}$ such that for each $c \in \{1,2\}$, $\sum_{i=1}^{t-1} |A_i^{(c)}| = \sum_{i=1}^{t-1} |B_i^{(c)}|$ and $A_i^{(c)} \triangle B_i^{(c)} = \{v_i^{(c)}\}$ for all $1\le i \le t-1$. This is because 
    \begin{itemize}
        \item $\sum_{c \in \{1,2\} }\sum_i |A_i^{(c)}| = \ell, \sum_{c \in \{1,2\} }\sum_i |B_i^{(c)}| = \ell$,
        \item $\sum_{i=1}^{t-1} |A_i^{(c)}| = \sum_{i=1}^{t-1} |B_i^{(c)}|$ for each $c \in \{1,2\}$,
        \item and $A_i^{(c)} \triangle B_i^{(c)} = \{v_i^{(c)}\}$ for all $1\le i \le t-1$ and $c \in \{1,2\}$
    \end{itemize}
    force that for each $c \in \{1,2\}$, for exactly half of $i \in [t-1]$, $A_i^{(c)} = B_i^{(c)} \sqcup \{v_i^{(c)}\}$, and for the other half of $i \in [t]$, $B_i^{(c)} = A_i^{(c)} \sqcup \{v_i^{(c)}\}$. Thus, there are $\binom{2(t-1)n-2(t-1)}{\ell - (t-1)}$ ways to choose \[((A_1^{(1)} \cap B_1^{(1)}, \dots, A_{t-1}^{(1)} \cap B_{t-1}^{(1)}), (A_1^{(2)} \cap B_1^{(2)}, \dots, A_{t-1}^{(2)} \cap B_{t-1}^{(2)})),\]
    $\binom{t-1}{\frac{t-1}{2}}$ ways to distribute $v_1^{(1)}, \dots, v_{t-1}^{(1)}$ evenly between $(A_1^{(1)}, \dots, A_{t-1}^{(1)})$ and $(B_1^{(1)}, \dots, B_{t-1}^{(1)})$, and $\binom{t-1}{\frac{t-1}{2}}$ ways to distribute $v_1^{(2)}, \dots, v_{t-1}^{(2)}$ evenly between $(A_1^{(2)}, \dots, A_{t-1}^{(2)})$ and $(B_1^{(2)}, \dots, B_{t-1}^{(2)})$. We denote
    \begin{align*}
        d_{\ell} \colonequals \binom{2(t-1)n-2(t-1)}{\ell - (t-1)}\binom{t-1}{\frac{t-1}{2}}^2.
    \end{align*}

    Let $y^{(1)},  \dots, y^{(t)} \in \{\pm 1\}^n$. Following the strategy in \cite{chan2026strongly}, we use the Cauchy-Schwarz trick to get
    \begin{align*}
        &\langle F, y^{(1)} \otimes  \dots\otimes y^{(t)}\rangle^2\\
        &= \left(\sum_{v_1, \dots, v_t \in [n]} F(v_1, \dots, v_t) \prod_{i=1}^t y^{(i)}_{v_i}\right)^2\\
        &= \left(\sum_{v_1, \dots, v_{t-1}, u \in [n]} F(v_1, \dots, v_{t-1}, u) \left(\prod_{i=1}^{t-1} y^{(i)}_{v_i}\right) y^{(t)}_u\right)^2\\
        &\le \left(\sum_{u \in [n]} \left(\sum_{v_1, \dots, v_{t-1} \in [n]} F(v_1, \dots, v_{t-1}, u) \left(\prod_{i=1}^{t-1} y^{(i)}_{v_i}\right)\right)^2\right) \left(\sum_{u\in [n]} (y_u^{(t)})^2\right)\\
        &= n  \sum_{\substack{v^{(1)} = (v_1^{(1)}, \dots, v_{t-1}^{(1)}),\\ v^{(2)} = (v_1^{(2)}, \dots, v_{t-1}^{(2)}) \in \sO_{t-1}([n]) }} \left(\sum_{u \in [n]}F(v_1^{(1)}, \dots, v_{t-1}^{(1)}, u)F(v_1^{(2)}, \dots, v_{t-1}^{(2)}, u)\right) \left(\prod_{c \in \{1,2\}}\prod_{i=1}^{t-1} y^{(i)}_{v_i^{(c)}}\right) && \quad (y^{(t)} \in \{\pm 1\}^n)\\
        &= \frac{n}{d_{\ell}} \sum_{\substack{v^{(1)} = (v_1^{(1)}, \dots, v_{t-1}^{(1)}),\\ v^{(2)} = (v_1^{(2)}, \dots, v_{t-1}^{(2)}) \in \sO_{t-1}([n]):\\ v^{(1)}\ne v^{(2)} }} \sum_{\substack{\mathcal{A} = ((A_1^{(c)}, \dots, A_{t-1}^{(c)}))_{c=1}^2,\\ \mathcal{B} = ((B_1^{(c)}, \dots, B_{t-1}^{(c)}))_{c=1}^2 \in \tilde{\sJ}:\\ \forall\, c \in \{1,2\}, \sum_{i} |A_i^{(c)}| = \sum_{i} |B_i^{(c)}|,\\ A_i^{(c)} \triangle B_i^{(c)} = \{v_i^{(c)}\}, \forall\,1\le i \le t-1}} M(\mathcal{A}, \mathcal{B}) \left(\prod_{c \in \{1,2\}}\prod_{i=1}^{t-1} \prod_{j \in A_i^{(c)} \triangle B_i^{(c)}} y^{(i)}_{j}\right)\\
        &\quad + n\sum_{(v_1, \dots, v_{t-1}) \in \sO_{t-1}([n]) } \sum_{u \in [n]} F(v_1, \dots, v_{t-1}, u)^2\\
        &= \frac{n}{d_{\ell}} \sum_{\substack{\mathcal{A} = ((A_1^{(c)}, \dots, A_{t-1}^{(c)}))_{c=1}^2,\\ \mathcal{B} = ((B_1^{(c)}, \dots, B_{t-1}^{(c)}))_{c=1}^2 \in \tilde{\sJ}}} M(\mathcal{A}, \mathcal{B}) \prod_{c \in \{1,2\}}\prod_{i=1}^{t-1} \left(\prod_{j \in A_i^{(c)}} y^{(i)}_{j}\right)\left(\prod_{j \in B_i^{(c)}} y^{(i)}_{j}\right) + n\|F\|_{\text{Fr}}^2 && \quad (y^{(i)} \in \{\pm 1\}^n)\\
        &= \frac{n}{d_{\ell}} \sum_{\substack{\mathcal{A} = ((A_1^{(c)}, \dots, A_{t-1}^{(c)}))_{c=1}^2,\\ \mathcal{B} = ((B_1^{(c)}, \dots, B_{t-1}^{(c)}))_{c=1}^2 \in \tilde{\sJ}}} M(\mathcal{A}, \mathcal{B}) x_{\mathcal{A}} x_{\mathcal{B}} + n\|F\|_{\text{Fr}}^2,
    \end{align*}
    where we use $x \in \{\pm 1\}^{\tilde{\sJ}}$ to denote the vector with entries
    \begin{align*}
        x_{\mathcal{A}} = \prod_{c \in \{1,2\}} \prod_{i=1}^{t-1} \prod_{j \in A_i^{(c)}} y_j^{(i)}
    \end{align*}
    for $\mathcal{A} = ((A_1^{(1)}, \dots, A_{t-1}^{(1)}), (A_1^{(2)}, \dots, A_{t-1}^{(2)})) \in \tilde{\sJ}$. Since $x \in \{\pm 1\}^{\tilde{\sJ}}$ has a fixed norm $\|x\|_2 = \sqrt{|\tilde{\sJ}|} = \sqrt{\binom{2(t-1)n}{\ell}}$, we conclude
    \begin{align*}
        &\langle F, y^{(1)} \otimes y^{(2)}\otimes \dots\otimes y^{(t)}\rangle^2\\
        &\le \frac{n}{d_{\ell}} \sum_{\substack{\mathcal{A} = ((A_1^{(c)}, \dots, A_{t-1}^{(c)}))_{c=1}^2,\\ \mathcal{B} = ((B_1^{(c)}, \dots, B_{t-1}^{(c)}))_{c=1}^2 \in \tilde{\sJ}}} M(\mathcal{A}, \mathcal{B}) x_{\mathcal{A}} x_{\mathcal{B}} + n\|F\|_{\text{Fr}}^2\\
        &= \frac{n}{\binom{2(t-1)n-2(t-1)}{\ell - (t-1)}\binom{t-1}{\frac{t-1}{2}}^2} x^\top M x + n\|F\|_{\text{Fr}}^2\\
        &\le \frac{n}{\binom{2(t-1)n-2(t-1)}{\ell - (t-1)}\binom{t-1}{\frac{t-1}{2}}^2} \|M\|_2 \|x\|_2^2+ n\|F\|_{\text{Fr}}^2\\
        &= \frac{n\binom{2(t-1)n}{\ell}}{\binom{2(t-1)n-2(t-1)}{\ell - (t-1)}\binom{t-1}{\frac{t-1}{2}}^2} \|M\|_2 + n\|F\|_{\text{Fr}}^2.
    \end{align*}
\end{proof}

Since we may use the spectral norm of Kikuchi matrices to bound
\[\max_{y^{(1)}, y^{(2)}, \dots, y^{(t)} \in \{\pm 1\}^n} \left|\langle F,  y^{(1)} \otimes y^{(2)} \otimes \dots \otimes y^{(t)}\rangle\right|\]
in both the case of even-order tensor $F$ (Lemma~\ref{lem:kikuchi-certification-even}) and the case of odd-order tensor $F$ (Lemma~\ref{lem:kikuchi-certification-odd}, together with the Frobenius norm of $F$), and that the spectral norm of a real symmetric matrix can be computed in time polynomial in the dimension of the matrix, what remains is to provide high probability bound on the spectral norm of Kikuchi matrices in Definition~\ref{def:kikuchi-even} and Definition~\ref{def:kikuchi-odd}. Now, we may directly apply the following results on the spectral norm bound of Kikuchi matrices and the Frobenius norm bound of the centered occurrence tensor $F$.

\begin{theorem}[Norm Bound of Kikuchi Matrices {\cite[Lemma 6.2]{chan2026strongly}}]\label{thm:Kikuchi-norm}
    Let $\sI = \{(S_i, \Phi_{[t], \Id} \circ \nu_i^{-1})\}_{i=1}^m$ be a $(k,p)$-random $t$-OCSP instance. Let $F \in \RR^{[n]^t}$ be the asymmetric tensor defined in Definition~\ref{def:F} whose entries are centered counts of occurrences of clauses in $\sI$. Let $\frac{t}{2}\le \ell \in \NN$ and $M = M^{t, \ell}(F)$ be the level-$\ell$ Kikuchi matrix of $F$. Then, there exists a constant $C = C(k) > 0$ and $\alpha = \alpha(k) > 0$ such that, if $\ell \le \alpha n$ and
    \begin{align*}
        \overline{m} \ge C \frac{n^{\frac{t}{2} }}{\ell^{\frac{t}{2} -1 }}\log(n),
    \end{align*}
    where $\overline{m} = \binom{n}{k}p$, then with probability $1 - o(1)$,
    \begin{align*}
        \|M\|_2 \le \begin{cases}
            C\sqrt{\frac{\overline{m}}{n^{\frac{t}{2}} }}\ell^{\frac{t}{4}}\sqrt{\ell \log n} & \quad t \text{ even},\\
            C\frac{\overline{m}}{n^{\frac{t}{2}}} \ell^{\frac{t}{2}}\sqrt{\log n} & \quad t \text{ odd}.
        \end{cases}
    \end{align*}
\end{theorem}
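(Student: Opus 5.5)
This bound is imported from \cite[Lemma 6.2]{chan2026strongly}, so the plan is to check that our setting matches theirs and then retrace their argument. The first step is distributional. By the alternative description after Definition~\ref{def:F}, each entry of $F$ is a sum over $k$-sets $S$ of independent centered variables $\one\{S\in E(H),\mu_S(j)=v_j\ \forall j\le t\}-\frac{p(k-t)!}{k!}$. This is exactly their centered occurrence tensor, once $p$ is replaced by $\hat p=p/k!$ as in the footnote. We can therefore write $M=\sum_{S}M_S$, where $M_S$ depends only on the independent pair $(\one\{S\in E(H)\},\mu_S)$ and has mean zero.

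For even $t$, I would apply Matrix Bernstein (Theorem~\ref{thm:matrix-bernstein}) directly to $\sum_S M_S$, after symmetrizing if needed. Each $M_S$ is supported on pairs $(\mathcal A,\mathcal B)$ whose symmetric differences are $t$ points of $S$. The row sums then give
\[
\|M_S\|_2\le O_k(1)\cdot \max_{\mathcal A}\#\{\mathcal B\},
\]
which is at most $O_k(\ell^{t/2})$, and only after conditioning on a degree-boundedness event. The variance proxy $\|\sum_S\EE M_S^2\|_2$ should be of order $\frac{\overline m}{n^{t/2}}\ell^{t/2}$ per row: each row $\mathcal A$ has about $\binom{t}{t/2}(n)^{t/2}\ell^{t/2}$ potential neighbours, each carrying weight about $\overline m/n^t$. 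Combined with the $\log|\sJ|=O(\ell\log n)$ dimension factor, this yields $\sqrt{\overline m\,\ell^{t/2}n^{-t/2}\,\ell\log n}$.

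\textbf{Main obstacle: the worst-case term.} The naive uniform bound on $\|M_S\|_2$ is too large. The fix is that of Guruswami et al.\ and Hsieh et al.: prune or condition on the event that no row $\mathcal A$ has too many nonzero entries. Row degrees concentrate, which is where the condition $\overline m\ge C n^{t/2}\ell^{1-t/2}\log n$ enters, since it makes the expected row degree at least $\ell\log n$. An alternative is the trace method with $\EE\operatorname{tr}(M^{2q})$ for $q\approx\ell\log n$, counting closed walks. In that count, each new clause contributes a factor $\overline m/n^t$ and the number of choices of shared versus fresh vertices is controlled by $\ell$.

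For odd $t$, $M$ is quadratic in $F$: its entries are $\sum_u F(v^{(1)},u)F(v^{(2)},u)$ with $v^{(1)}\ne v^{(2)}$. I would decouple by splitting the clauses into two random halves, or by noting that off-diagonal products of independent centered terms are mean zero. Conditioned on the first half, I would apply Matrix Bernstein to the second. The resulting norm is a product of two "half" scales, each of order $\frac{\overline m}{n^{t/2}}\ell^{t/2}$ up to $\sqrt{\log n}$, which produces the $\frac{\overline m}{n^{t/2}}\ell^{t/2}\sqrt{\log n}$ bound. The assumption $\ell\le\alpha n$ is used to compare the binomial ratios $\binom{tn-t}{\ell-t/2}/\binom{tn}{\ell}\approx(\ell/n)^{t/2}$.
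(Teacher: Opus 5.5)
The paper's proof consists only of your first step. It checks that $F$ has the law of the tensor $C_S$ of \cite{chan2026strongly} (with $p$ replaced by $p/k!$) and that $M$ is their Kikuchi matrix with trivial $\beta$, then cites their Lemma~6.2 without reproving it. Your reconstruction of the argument after that has problems.

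\emph{Even $t$.} You have misplaced the obstacle. Fix a row $\mathcal A$ and a $k$-set $S$. The column $\mathcal B$ is then determined by $v\in\sO_t(S)$ via $B_i=A_i\triangle\{v_i\}$. So each row of $M_S$ has at most $k^t$ nonzero entries, each of modulus at most $1$, and $\|M_S\|_2\le k^t$ deterministically, not $O_k(\ell^{t/2})$. No pruning or conditioning is needed. Matrix Bernstein with $L=k^t$, $\sigma^2\lesssim_k\overline m(\ell/n)^{t/2}$ and $\log|\sJ|\le\ell\log(tn)$ gives the even bound directly. The hypothesis $\overline m\ge Cn^{t/2}\ell^{1-t/2}\log n$ is exactly the condition $L\log|\sJ|\lesssim\sqrt{\sigma^2\log|\sJ|}$.

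\emph{Odd $t$.} Here the heavy-row problem is genuine, and your sketch (decouple, condition on one half, Matrix Bernstein on the other) does not get past it. In fact the stated bound on the unpruned $\|M\|_2$ is false in part of the allowed range. Take $t=3$, $\ell\approx n^{1/4}$ even, and $\overline m=Cn^{3/2}\ell^{-1/2}\log n$.
\begin{itemize}
\item With high probability some $u$ is the last vertex of $r=\ell/2$ clauses $(a_j,b_j,u)$ with pairwise disjoint prefixes.
\item For $\sigma,\tau\in\{0,1\}^r$, set $A^{(1)}_1=\{a_j:\sigma_j=1\}$ and $A^{(1)}_2=\{b_j:\sigma_j=0\}$, and build the second block from $\tau$ the same way. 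Let $x$ be the normalized indicator of these $4^r$ rows.
\item On this support, the only nonzero entries of $M$ flip one coordinate $j$ of $\sigma$ and one coordinate $j'\ne j$ of $\tau$. Each such entry is $\sum_{u'}F(a_j,b_j,u')F(a_{j'},b_{j'},u')\ge 1-o(1)$.
\item Hence $\|M\|_2\ge x^\top Mx\ge(1-o(1))r(r-1)\approx n^{1/2}/4$, while the claimed bound is $C^2n^{1/4}\log^{3/2}n$.
\end{itemize}

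These rows have about $\ell^2/4$ nonzero entries, against a typical row degree of order $\overline m^2\ell^{t-1}/n^t\approx\ell\log^2 n$. So the event \emph{no row is too heavy} that you wanted to condition on fails. Correspondingly, in your decoupled sum, the summand of a second-half clause at $u$ contains the Kikuchi matrix of the slice $F^{(1)}(\cdot,u)$. That matrix has norm up to order $\min(\deg(u),\ell)$, so the $L\log|\tilde{\sJ}|$ term of Bernstein swamps the target.

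Pruning is the right tool, but it changes what is being proved. You must bound a row-pruned or degree-normalized matrix, for example $\Gamma^{-1/2}M\Gamma^{-1/2}$ with $\Gamma$ the diagonal degree matrix plus $\bar d I$, as in \cite{guruswami2022algorithms,hsieh2023simple}. Lemma~\ref{lem:kikuchi-certification-odd} must then be adapted, which is possible because it only evaluates $x^\top Mx$ at $\pm1$ vectors.

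Two smaller points on the odd case:
\begin{itemize}
\item Your scale count is off. A product of two factors of order $\frac{\overline m}{n^{t/2}}\ell^{t/2}$ is not the target. The target is $\sqrt{\sigma^2\log|\tilde{\sJ}|}$ with $\sigma^2\approx\overline m^2\ell^{t-1}/n^t$.
\item Products of two centered terms coming from the same $k$-set are not mean zero. They are allowed in $M$, since only $v^{(1)}=v^{(2)}$ is excluded, so they must be separated out before decoupling.
\end{itemize}
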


\begin{remark}
    The Kikuchi matrix $M$ is built from the centered occurrence count tensor $F$ defined in Definition~\ref{def:F}, which follows the same distribution as the tensor $C_S$ \footnote{As mentioned before, the definition of $C_S$ in \cite[Section 5.3]{chan2026strongly} has a small typo, where $p$ needs to be replaced by $\hat{p} = \frac{p}{k!}$ in order to be consistent with Definition 3.2.~of \cite{chan2026strongly}. In fact, in the proofs, they implicitly treat $p = \frac{\overline{m}}{n^k}$ and diverge from the $p = \frac{\overline{m}}{\binom{n}{k}}$ in their Definition 3.2. We note that this is only a constant factor change in $p$ for constant $k$, and thus up to a constant, the same norm bound goes through. } considered in \cite[Section 5.3.]{chan2026strongly} where $S \subseteq [k]$ has size $|S| = t$. To see this, recall from Definition~\ref{def:F} that in our case, the tensor $F \in \RR^{[n]^t}$ is built as follows:
    \begin{itemize}
        \item First draw a Erd\H{o}s-R\'{e}nyi $k$-uniform hypergraph $H$ where each possible $R \in \binom{[n]}{k}$ is included in $E(H)$ independently with probability $p$.
    \item For every $R \in \binom{[n]}{k}$, draw a local ordering $\mu_R \in \Sym(R)$ independently and uniformly at random.
    \item Then, for every $(v_1, \dots, v_t) \in \sO_t([n])$, set
    \begin{align*}
        F(v_1, \dots, v_t) &= \sum_{R \in \binom{[n]}{k}} \sum_{\substack{\mu \in \Sym(R):\\ \mu(j) = v_j, \forall\, 1\le j \le t}} \left(\one\{R \in E(H), \mu = \mu_R\} - \frac{p}{k!}\right).
    \end{align*}
    \end{itemize}
    In particular, it is exactly the tensor $C_S$ defined in \cite[Section 5.2]{chan2026strongly} for the special case of $S = \{1, \dots, t\} \subseteq [k]$, as for $S = \{1, \dots, t\}$,
    \begin{align*}
        C_S(v_1, \dots, v_t) &= \sum_{\substack{(u_1, \dots, u_k) \in \sO_k([n]):\\ u_j = v_j, \forall\, 1 \le j \le t }} \left(\one\{R \colonequals\{u_1, \dots, u_k\} \in E(H),  (u_1, \dots, u_k) = \mu_R\} - \frac{p}{k!}\right)\\
        &= \sum_{R \in \binom{[n]}{k}} \sum_{\substack{\mu \in \Sym(R):\\ \mu(j) = v_j, \forall\, 1\le j \le t}} \left(\one\{R \in E(H), \mu = \mu_R\} - \frac{p}{k!}\right).
    \end{align*}

    In order to deal with general non-Boolean domain, \cite{chan2026strongly} additionally introduced tensor $C_{S, \beta}$ and Kikuchi matrix $M_{S, \beta}$. In our case, since we are dealing with evaluation of $F$ on $\{\pm 1\}$ rank-$1$ tensors, we do not need this machinery of considering $\beta$ an indicator of a specific configuration in $D^S$ for a generic non-Boolean domain $D$. Thus, we may treat $|D| = 1$ and disregard $\beta$ when using the result of \cite{chan2026strongly}. Specifically, our Kikuchi matrix $M = M^{t, \ell}(F)$ follows exactly the distribution of $(S, \ell)$-kikuchi matrix $M_{\beta}$ with $|S| = t$ considered in \cite{chan2026strongly}, where $|D| = 1$ and $\beta$ is the trivial indicator that always holds, which we can completely ignore.
\end{remark}

\begin{theorem}[Frobenius Norm Bound of $F$ {\cite[Claim B.1]{chan2026strongly}} ] \label{thm:frobenious-norm}
    Let $\sI = \{(S_i, \Phi_{[t], \Id} \circ \nu_i^{-1})\}_{i=1}^m$ be a $(k,p)$-random $t$-OCSP instance. Let $F \in \RR^{[n]^t}$ be the centered occurrence count tensor defined in Definition~\ref{def:F}. Then there exists a constant $C = C(k) > 0$ such that with probability $1 - o(1)$,
    \begin{align*}
        \|F\|_{\text{Fr}}^2 \le C \overline{m}\sqrt{\log n}. 
    \end{align*}
\end{theorem}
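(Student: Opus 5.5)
The plan is a second-moment computation followed by Markov's inequality. The target bound $C\overline{m}\sqrt{\log n}$ exceeds the expectation of $\|F\|_{\text{Fr}}^2$ by an unbounded factor, so no concentration argument should be needed.

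\textbf{Step 1: write each entry of $F$ as a centered sum of independent indicators.} Fix $v=(v_1,\dots,v_t)\in\sO_t([n])$. By the alternative description after Definition~\ref{def:F},
\begin{align*}
F(v)=\sum_{\substack{S\in\binom{[n]}{k}:\\ \{v_1,\dots,v_t\}\subseteq S}}\left(X_{S,v}-\frac{p(k-t)!}{k!}\right),
\end{align*}
where $X_{S,v}=\one\{S\in E(H),\ \mu_S(j)=v_j\ \forall\, j\le t\}$. The variables $X_{S,v}$ are independent across $S$, since each depends only on the pair $(\one\{S\in E(H)\},\mu_S)$. Each $X_{S,v}$ is Bernoulli with mean $p(k-t)!/k!$. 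Hence $\EE[F(v)]=0$. Entries with non-distinct coordinates are identically $0$ and can be ignored.

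\textbf{Step 2: bound the expectation.} Since each entry is centered,
\begin{align*}
\EE\|F\|_{\text{Fr}}^2=\sum_{v\in\sO_t([n])}\Var(F(v)).
\end{align*}
The variance of a sum of independent Bernoullis is at most its mean. So
\begin{align*}
\Var(F(v))\le \binom{n-t}{k-t}\frac{p(k-t)!}{k!}=\frac{p\binom{n}{k}}{\binom{n}{t}t!}.
\end{align*}
This is the same normalizing constant that appears in \eqref{eq:F-def}. Summing over the $|\sO_t([n])|=\binom{n}{t}t!$ ordered tuples gives
\begin{align*}
\EE\|F\|_{\text{Fr}}^2\le \overline{m}.
\end{align*}

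\textbf{Step 3: apply Markov's inequality.} For any $C>0$,
\begin{align*}
\Pr\left(\|F\|_{\text{Fr}}^2> C\overline{m}\sqrt{\log n}\right)\le \frac{1}{C\sqrt{\log n}}=o(1).
\end{align*}
This gives the claim with probability $1-o(1)$; any constant works, for instance $C=1$.

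The only step needing care is the bookkeeping in Step 1. Each sampled $k$-set $S$ contributes to exactly one ordered $t$-tuple, namely the first $t$ entries of $\mu_S$. One must check that the subtracted constant equals the mean of each summand times the number of $k$-sets containing $\{v_1,\dots,v_t\}$, which is what makes $F(v)$ centered. Beyond this, I expect no real obstacle.
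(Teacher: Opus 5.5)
Your proof is correct, but it takes a different route from the paper. The paper gives no self-contained argument for this statement. It notes that $F$ has the same law as the tensor $C_S$ of \cite{chan2026strongly} and invokes their Claim B.1 with all weights $y_{(v_1,\dots,v_{t-1})}=1$.

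You instead compute the first moment directly. Each nonzero entry is a centered sum of independent Bernoulli indicators over the $k$-sets containing $\{v_1,\dots,v_t\}$, which gives $\EE\|F\|_{\text{Fr}}^2=\overline{m}\,(1-p(k-t)!/k!)\le\overline{m}$, and Markov's inequality finishes.

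Your route has three advantages. It is self-contained. It avoids reconciling the $p$ versus $p/k!$ normalization that the paper has to flag when identifying $F$ with $C_S$. It also shows that the $\sqrt{\log n}$ factor is pure slack, since any $\omega(1)$ factor above $\overline{m}$ would work.

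Its cost is the failure probability, which is only $O(1/\sqrt{\log n})$. If this event had to survive a union bound over polynomially many instances, you would need a sharper tail bound than Markov. Here that does not matter. The statement only asks for probability $1-o(1)$, and in the proof of Theorem~\ref{thm:kikuchi} the bound is combined via Proposition~\ref{prop:reduction-to-canonical} over only $O_k(1)$ instances $\sI_{T,\sigma}$.
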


\begin{remark}
    Since our tensor $F \in \RR^{[n]^t}$ follows the same distribution as the tensor $C_S \in \RR^{[n]^t}$ defined in \cite[Section 5.3]{chan2026strongly} where $S \subseteq [k]$ has size $|S| = t$, and
    \begin{align*}
        \|F\|_{\text{Fr}}^2 &= \sum_{(v_1, \dots, v_t) \in [n]^t} F(v_1, \dots, v_t)^2\\
        &= \sum_{u \in [n]}\sum_{(v_1, \dots, v_{t-1}) \in [n]^{t-1}} F(v_1, \dots, v_{t-1}, u)^2,
    \end{align*}
    \cite[Claim B.1]{chan2026strongly} applies with $y_{(v_1, \dots, v_{t-1})} = 1$ for all $(v_1, \dots, v_{t-1}) \in [n]^{t-1}$.
\end{remark}

\subsubsection{Finishing Theorem~\ref{thm:kikuchi}}

Now we are ready to finish the proof of strong refutation via Kikuchi method.

\begin{proof}[Proof of Theorem~\ref{thm:kikuchi}]
    By Proposition~\ref{prop:reduction-to-canonical}, it suffices to give a two-sided $\eps'$-refutation algorithm for $(k,p)$-random $t$-OCSP instance $\sI$ with the canonical ordering predicate $\Phi_{[t], \Id}$ for $\eps' = \frac{\eps}{2^{2k}k!}$ that runs in time $n^{O(\ell)}$ when
    \begin{align*}
        \overline{m} \ge C\frac{n}{\eps'^2}\left(\frac{n}{\ell}\right)^{\frac{d}{2}-1}\log(n)^{2d-1},
    \end{align*}
    for some constant $C = C(k) > 0$, for every $2 \le t \le d$. By Lemma~\ref{lem:val-deviation}, for a $(k,p)$-random $t$-OCSP instance $\sI$ with the canonical ordering predicate $\Phi_{[t], \Id}$, we have
    \begin{align*}
        \left|m \cdot \val(\sI; \pi) - \frac{\overline{m}}{t!}\right| \le L^{t-1} \max_{y^{(1)}, \dots, y^{(t)} \in \{\pm 1\}^n } \left|\langle F, y^{(1)} \otimes \dots \otimes y^{(t)}\rangle\right|,
    \end{align*}
    where $L = \lceil \log_2(n+1) \rceil$.
    \begin{itemize}
        \item First we consider the case of even $t$. By Lemma~\ref{lem:kikuchi-certification-even}, for $M = M^{t, \ell}(F)$, we have
        \begin{align*}
            \left|m \cdot \val(\sI; \pi) - \frac{\overline{m}}{t!}\right| &\le L^{t-1} \max_{y^{(1)}, \dots, y^{(t)} \in \{\pm 1\}^n } \left|\langle F, y^{(1)} \otimes \dots \otimes y^{(t)}\rangle\right|\\
            &\le L^{t-1} \frac{\binom{tn}{\ell}}{\binom{tn-t}{\ell - \frac{t}{2}}\binom{t}{\frac{t}{2}}} \|M\|_2,
        \end{align*}
        for all $\pi \in \Sym([n])$.
        When $\overline{m} \ge C'\frac{n^{\frac{d}{2}}}{\ell^{\frac{d}{2}-1}}\log(n)\ge C'\frac{n^{\frac{t}{2}}}{\ell^{\frac{t}{2}-1}}\log(n)$ for some constant $C' = C'(k) > 0$, by Theorem~\ref{thm:Kikuchi-norm} we get that with probability $1 - o(1)$,
        \begin{align*}
            \left|m \cdot \val(\sI; \pi) - \frac{\overline{m}}{t!}\right| &\le L^{t-1} \frac{\binom{tn}{\ell}}{\binom{tn-t}{\ell - \frac{t}{2}}\binom{t}{\frac{t}{2}}} \|M\|_2\\
            &\le L^{t-1} \frac{\binom{tn}{\ell}}{\binom{tn-t}{\ell - \frac{t}{2}}\binom{t}{\frac{t}{2}}}  C'\sqrt{\frac{\overline{m}}{n^{\frac{t}{2}} }}\ell^{\frac{t}{4}}\sqrt{\ell \log n}.
        \end{align*}
        Consequently,
        \begin{align*}
            &\left|\val(\sI; \pi) - \frac{1}{t!}\right|\\
            &\le \frac{1}{m}\left|m \cdot\val(\sI; \pi) - \frac{\overline{m}}{t!}\right| + \frac{1}{t!}\left| \frac{\overline{m}}{m} - 1\right|\\
            &\le \frac{1}{m} L^{t-1} \frac{\binom{tn}{\ell}}{\binom{tn-t}{\ell - \frac{t}{2}}\binom{t}{\frac{t}{2}}}  C'\sqrt{\frac{\overline{m}}{n^{\frac{t}{2}} }}\ell^{\frac{t}{4}}\sqrt{\ell \log n} + \frac{1}{t!}\left| \frac{\overline{m}}{m} - 1\right|.
        \end{align*}
        By Lemma~\ref{lem:m-concentration},
        \begin{align*}
            \Pr\left(|m - \overline{m}| \ge \eps' \overline{m}\right) \le 2 \exp\left(-\frac{3\eps'^2 \overline{m}}{8}\right)= o(1),
        \end{align*}
        where we use that $\eps^2 \overline{m} \ge \Omega\left(n \left(n/\ell\right)^{\frac{d}{2}-1}\log(n)^{2d-1}\right)$ and $\eps' = \Omega(\eps)$. Therefore, with probability $1 - o(1)$, $|m - \overline{m}| < \eps' \overline{m}$ and
        \begin{align*}
            &\left|\val(\sI; \pi) - \frac{1}{t!}\right|\\
            &\le \frac{1}{m} L^{t-1} \frac{\binom{tn}{\ell}}{\binom{tn-t}{\ell - \frac{t}{2}}\binom{t}{\frac{t}{2}}}  C'\sqrt{\frac{\overline{m}}{n^{\frac{t}{2}} }}\ell^{\frac{t}{4}}\sqrt{\ell \log n} + \frac{1}{t!}\left| \frac{\overline{m}}{m} - 1\right|\\
            &\le \frac{1}{(1 - \eps')\overline{m}} L^{t-1} \frac{\binom{tn}{\ell}}{\binom{tn-t}{\ell - \frac{t}{2}}\binom{t}{\frac{t}{2}}}  C'\sqrt{\frac{\overline{m}}{n^{\frac{t}{2}} }}\ell^{\frac{t}{4}}\sqrt{\ell \log n} + \frac{\eps'}{t!}.
        \end{align*}
        For $\frac{t}{2} \le \ell \le \frac{1}{2} n$, we have
        \begin{align*}
            \frac{\binom{tn}{\ell}}{\binom{tn-t}{\ell - \frac{t}{2}}\binom{t}{\frac{t}{2}}} &= \frac{1}{\binom{t}{\frac{t}{2}}} \frac{(tn)!}{(tn-t)!} \frac{\left(\ell - \frac{t}{2}\right)!}{\ell!} \frac{\left(tn - \ell - \frac{t}{2}\right)!}{(tn- \ell)!} \\
            &\le \frac{1}{\binom{t}{\frac{t}{2}}} (tn)^t \left(\frac{1}{\ell - \frac{t}{2} + 1}\right)^{\frac{t}{2}} \left(\frac{1}{tn - \ell - \frac{t}{2} + 1}\right)^{\frac{t}{2}}\\
            &\le C'' \left(\frac{n}{\ell}\right)^{\frac{t}{2}},
        \end{align*}
        for some constant $C'' = C''(k) > 0$. Therefore,
        \begin{align*}
            &\left|\val(\sI; \pi) - \frac{1}{t!}\right|\\
            &\le \frac{1}{(1 - \eps')\overline{m}} L^{t-1} \frac{\binom{tn}{\ell}}{\binom{tn-t}{\ell - \frac{t}{2}}\binom{t}{\frac{t}{2}}}  C'\sqrt{\frac{\overline{m}}{n^{\frac{t}{2}} }}\ell^{\frac{t}{4}}\sqrt{\ell \log n} + \frac{\eps'}{t!}\\
            &\le 2C' \cdot L^{t-1} \cdot C'' \left(\frac{n}{\ell}\right)^{\frac{t}{2}} \sqrt{\frac{1}{\overline{m} n^{\frac{t}{2}}}} \ell^{\frac{t}{4}} \sqrt{\ell \log n} + \frac{\eps'}{t!}\\
            &\le 2C'C''\cdot \frac{ \left(\frac{n}{\ell}\right)^{\frac{t}{4}} \sqrt{\ell \log n} \cdot L^{t-1}}{\sqrt{\overline{m}}} + \frac{\eps'}{t!}.
        \end{align*}
        Recall $L = \lceil \log_2(n+1) \rceil$. When $\overline{m} \ge C\frac{n}{\eps^2}\left(\frac{n}{\ell}\right)^{\frac{d}{2}-1}\log(n)^{2d-1}$ for a large enough constant $C = C(k) > 0$, we get
        \begin{align*}
            &\left|\val(\sI; \pi) - \frac{1}{t!}\right|\\
            &\le 2C'C''\cdot \sqrt{\frac{n \left(\frac{n}{\ell}\right)^{\frac{t}{2}-1} L^{2t-2}\log(n)}{\overline{m}}} + \frac{\eps'}{t!}\\
            &\le \eps',
        \end{align*}
        as desired. Moreover, since constructing the tensor $F$ and Kikuchi matrix $M = M^{t, \ell}(F)$, computing the spectral norm $\|M\|_2$, and counting the number of clauses $m$ can be done in time $n^{O(\ell)}$, we conclude that above the stated clause density, we have a two-sided $\eps'$-refutation for a $(k,p)$-random $t$-OCSP instance with the canonical ordering predicate $\Phi_{[t], \Id}$ in time $n^{O(\ell)}$. 
        \item Next, we consider the case of odd $t$. By Lemma~\ref{lem:kikuchi-certification-odd}, for $M = M^{t, \ell}(F)$, we have
        \begin{align*}
            \left|m \cdot \val(\sI; \pi) - \frac{\overline{m}}{t!}\right| &\le L^{t-1} \max_{y^{(1)}, \dots, y^{(t)} \in \{\pm 1\}^n } \left|\langle F, y^{(1)} \otimes \dots \otimes y^{(t)}\rangle\right|\\
            &\le L^{t-1} \sqrt{\frac{n\binom{2(t-1)n}{\ell}}{\binom{2(t-1)n-2(t-1)}{\ell - (t-1)}\binom{t-1}{\frac{t-1}{2}}^2} \|M\|_2 + n \|F\|_{\text{Fr}}^2 },
        \end{align*}
        for all $\pi \in \Sym([n])$.
        When $\overline{m} \ge C'\frac{n^{\frac{d}{2}}}{\ell^{\frac{d}{2}-1}}\log(n)\ge C'\frac{n^{\frac{t}{2}}}{\ell^{\frac{t}{2}-1}}\log(n)$ for some constant $C' = C'(k) > 0$, by Theorem~\ref{thm:Kikuchi-norm} and Theorem~\ref{thm:frobenious-norm}, we get that with probability $1 - o(1)$,
        \begin{align*}
            \left|m \cdot \val(\sI; \pi) - \frac{\overline{m}}{t!}\right| &\le L^{t-1} \sqrt{\frac{n\binom{2(t-1)n}{\ell}}{\binom{2(t-1)n-2(t-1)}{\ell - (t-1)}\binom{t-1}{\frac{t-1}{2}}^2} \|M\|_2 + n \|F\|_{\text{Fr}}^2 }\\
            &\le L^{t-1} \sqrt{\frac{n\binom{2(t-1)n}{\ell}}{\binom{2(t-1)n-2(t-1)}{\ell - (t-1)}\binom{t-1}{\frac{t-1}{2}}^2} C'\frac{\overline{m}}{n^{\frac{t}{2}}} \ell^{\frac{t}{2}}\sqrt{\log n} + n \cdot C' \overline{m}\sqrt{\log n} }\\
            &= L^{t-1}\sqrt{\left(\frac{\binom{2(t-1)n}{\ell}}{\binom{2(t-1)n-2(t-1)}{\ell - (t-1)}\binom{t-1}{\frac{t-1}{2}}^2} \left(\frac{\ell}{n}\right)^{\frac{t}{2}} + 1\right) C'n \overline{m}\sqrt{\log n}}.
        \end{align*}
        Consequently,
        \begin{align*}
            &\left|\val(\sI; \pi) - \frac{1}{t!}\right|\\
            &\le \frac{1}{m}\left|m \cdot\val(\sI; \pi) - \frac{\overline{m}}{t!}\right| + \frac{1}{t!}\left| \frac{\overline{m}}{m} - 1\right|\\
            &\le \frac{1}{m}L^{t-1} \sqrt{\left(\frac{\binom{2(t-1)n}{\ell}}{\binom{2(t-1)n-2(t-1)}{\ell - (t-1)}\binom{t-1}{\frac{t-1}{2}}^2} \left(\frac{\ell}{n}\right)^{\frac{t}{2}} + 1\right) C'n \overline{m}\sqrt{\log n}} + \frac{1}{t!}\left| \frac{\overline{m}}{m} - 1\right|.
        \end{align*}
        As before, by Lemma~\ref{lem:m-concentration},
        \begin{align*}
            \Pr\left(|m - \overline{m}| \ge \eps' \overline{m}\right) = o(1).
        \end{align*}
        Therefore, with probability $1 - o(1)$, $|m - \overline{m}| < \eps' \overline{m}$ and
        \begin{align*}
            &\left|\val(\sI; \pi) - \frac{1}{t!}\right|\\
            &\le \frac{1}{m}L^{t-1} \sqrt{\left(\frac{\binom{2(t-1)n}{\ell}}{\binom{2(t-1)n-2(t-1)}{\ell - (t-1)}\binom{t-1}{\frac{t-1}{2}}^2} \left(\frac{\ell}{n}\right)^{\frac{t}{2}} + 1\right) C'n \overline{m}\sqrt{\log n}} + \frac{1}{t!}\left| \frac{\overline{m}}{m} - 1\right|\\
            &\le \frac{1}{(1 - \eps')\overline{m}}L^{t-1} \sqrt{\left(\frac{\binom{2(t-1)n}{\ell}}{\binom{2(t-1)n-2(t-1)}{\ell - (t-1)}\binom{t-1}{\frac{t-1}{2}}^2} \left(\frac{\ell}{n}\right)^{\frac{t}{2}} + 1\right) C'n \overline{m}\sqrt{\log n}} + \frac{\eps'}{t!}.
        \end{align*}
        For $t-1 \le \ell \le \frac{1}{2} n$, we have
        \begin{align*}
            &\frac{\binom{2(t-1)n}{\ell}}{\binom{2(t-1)n-2(t-1)}{\ell - (t-1)}\binom{t-1}{\frac{t-1}{2}}^2}\\
            &= \frac{1}{\binom{t-1}{\frac{t-1}{2}}^2} \frac{(2(t-1)n)!}{(2(t-1)n-2(t-1))!} \frac{\left(\ell - (t-1)\right)!}{\ell!} \frac{\left(2(t-1)n - \ell - (t-1)\right)!}{(2(t-1)n- \ell)!} \\
            &\le \frac{1}{\binom{t-1}{\frac{t-1}{2}}^2} (2(t-1)n)^{2(t-1)} \left(\frac{1}{\ell - (t-1)+1}\right)^{t-1} \left(\frac{1}{2(t-1)n - \ell - (t-1) + 1}\right)^{t-1}\\
            &\le C'' \left(\frac{n}{\ell}\right)^{t-1},
        \end{align*}
        for some constant $C'' = C''(k) > 0$. Therefore,
        \begin{align*}
            &\left|\val(\sI; \pi) - \frac{1}{t!}\right|\\
            &\le \frac{1}{(1 - \eps')\overline{m}}L^{t-1} \sqrt{\left(\frac{\binom{2(t-1)n}{\ell}}{\binom{2(t-1)n-2(t-1)}{\ell - (t-1)}\binom{t-1}{\frac{t-1}{2}}^2} \left(\frac{\ell}{n}\right)^{\frac{t}{2}} + 1\right) C'n \overline{m}\sqrt{\log n}} + \frac{\eps'}{t!}\\
            &\le 2 \sqrt{\frac{\left(C''\left(\frac{n}{\ell}\right)^{\frac{t}{2}-1} +1\right)C'n L^{2t-2}\sqrt{\log n}}{\overline{m}}} + \frac{\eps'}{t!}.
        \end{align*}
        Recall $L = \lceil \log_2(n+1) \rceil$. When $\overline{m} \ge C\frac{n}{\eps^2}\left(\frac{n}{\ell}\right)^{\frac{d}{2}-1}\log(n)^{2d-\frac{3}{2}}$ for a large enough constant $C = C(k) > 0$, we get
        \begin{align*}
            &\left|\val(\sI; \pi) - \frac{1}{t!}\right|\\
            &\le 2\sqrt{\frac{\left(C''\left(\frac{n}{\ell}\right)^{\frac{t}{2}-1} +1\right)C'n L^{2t-2}\sqrt{\log n}}{\overline{m}}} + \frac{\eps'}{t!}\\
            &\le \eps',
        \end{align*}
        as desired. Moreover, since constructing the tensor $F$ and Kikuchi matrix $M = M^{t, \ell}(F)$, computing the spectral norm $\|M\|_2$, and counting the number of clauses $m$ can be done in time $n^{O(\ell)}$, we conclude that above the stated clause density, we have a two-sided $\eps'$-refutation for a $(k,p)$-random $t$-OCSP instance with the canonical ordering predicate $\Phi_{[t], \Id}$ in time $n^{O(\ell)}$. 
    \end{itemize}

    Combining the two-sided $\eps'$-refutation for $t$-OCSP with canonical ordering predicate $\Phi_{[t], \Id}$ for both even and odd $t \le d$, by Proposition~\ref{prop:reduction-to-canonical}, we get an $\eps$-refutation for $p$-random $k$-OCSP with predicate $P$ of coordinate degree $d$ in time $n^{O(\ell)}$, provided that
    \begin{align*}
        \overline{m} \ge \begin{cases}
            C\frac{n}{\eps^2}\left(\frac{n}{\ell}\right)^{\frac{d}{2}-1}\log(n)^{2d-1} & \quad \text{ if } d \text{ is even},\\
            C\frac{n}{\eps^2}\left(\frac{n}{\ell}\right)^{\frac{d}{2}-1}\log(n)^{2d-\frac{3}{2}} & \quad \text{ if } d \text{ is odd},
        \end{cases}
    \end{align*}
    for some constant $C = C(k) > 0$.
\end{proof}

\subsection{Bucketing Method}

In this section, we prove Theorem~\ref{thm:bucketing}, which achieves strong refutation for OCSP by reducing to the situation of (non-ordering) CSPs specialized to the setting when the refutation strength $\eps > 0$ is a constant. The main idea is that instead of working with variable assignments in $\Sym([n])$, we work with a coarsened domain $D^n$ where $|D| = O_{k,\eps}(1)$. This idea of reduction to finite size alphabet also appeared in worst-case analysis of OCSP \cite{guruswami2012approximating, makarychev2012local}, but the details of our method differ from theirs. We will show how to construct two random CSPs from a random OCSP instance so that strongly refuting the two CSPs gives a strong refutation for the given OCSP.

Recall that to strongly refute $p$-random $k$-OCSP with a generic ordering predicate $P$ of coordinate degree $d$, by Proposition~\ref{prop:reduction-to-canonical} it is sufficient to provide two-sided strong refutation for $(k,p)$-random $t$-OCSP with canonical ordering predicate $\Phi_{[t],\Id}$ for $t \le d$.

We first state a simple observation that, a one-sided strong refutation for $(k,p)$-random $t$-OCSP with canonical ordering predicate $\Phi_{[t], \Id}$ can be turned into a two-sided strong refutation for the same OCSP instance.

\begin{lemma}\label{lem:one-to-two}
    Fix $\eps > 0$ a constant. Suppose there is an $\eps$-refutation algorithm for $(k,p)$-random $t$-OCSP instance with canonical ordering predicate $\Phi_{[t], \Id}$ on $n$ variables that runs in time $T$, then there is a two-sided $(t!)\eps$-refutation algorithm for the same OCSP instance that runs in time $O_{t}(T)$.
\end{lemma}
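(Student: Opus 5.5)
The plan is to obtain the lower-tail bound from upper-tail bounds on the companion instances defined by the other orderings of $[t]$. Given a $(k,p)$-random $t$-OCSP instance $\sI = \{(T_i, \Phi_{[t],\Id}\circ\nu_i^{-1})\}_{i=1}^m$, for each $\sigma \in \Sym([t])$ I form the instance $\sI^{\sigma} = \{(T_i, \Phi_{[t],\Id}\circ(\nu_i\circ\sigma)^{-1})\}_{i=1}^m$. Its clause $i$ is satisfied by $\pi$ exactly when $\pi[T_i] = \nu_i\circ\sigma$. Since $\nu_i$ is uniform on $\sO_t(S_i)$, so is $\nu_i\circ\sigma$, and the map $\nu\mapsto\nu\circ\sigma$ is a bijection of $\sO_t(S)$. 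Hence each $\sI^\sigma$ is itself distributed as a $(k,p)$-random $t$-OCSP with predicate $\Phi_{[t],\Id}$ on the same $m$ clauses. The construction takes $O_t(m)$ time per $\sigma$.

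I then run the given $\eps$-refutation algorithm on all $t!$ instances $\sI^\sigma$. Each run succeeds with probability $1-o(1)$, and $t!$ is a constant, so a union bound shows that with high probability all of them certify $\val(\sI^\sigma;\pi) \le \frac{1}{t!}+\eps$ for every $\sigma$ and every $\pi\in\Sym([n])$. The total running time is $O_t(T)$ plus construction time, which is absorbed since reading the input already costs $\Omega(m)$.

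The key identity comes from the fact that for each clause and each $\pi$, exactly one $\sigma\in\Sym([t])$ satisfies $\pi[T_i] = \nu_i\circ\sigma$. Therefore
\begin{align*}
\sum_{\sigma\in\Sym([t])}\val(\sI^\sigma;\pi) = 1, \quad \forall\,\pi\in\Sym([n]).
\end{align*}
Since $\sI^{\Id}=\sI$, this gives
\begin{align*}
\val(\sI;\pi) = 1-\sum_{\sigma\ne\Id}\val(\sI^\sigma;\pi) \ge 1-(t!-1)\left(\tfrac{1}{t!}+\eps\right) = \tfrac{1}{t!}-(t!-1)\eps.
\end{align*}
Combining this with the certified upper bound $\val(\sI;\pi)\le\frac1{t!}+\eps$ yields $|\val(\sI;\pi)-\frac1{t!}|\le (t!-1)\eps \le (t!)\eps$ for all $\pi$. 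This is exactly the required two-sided $(t!)\eps$-refutation.

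The only delicate point is the distributional claim that $\sI^\sigma$ has the same law as $\sI$, so that the hypothesized algorithm's high-probability guarantee applies to each $\sI^\sigma$. I expect this to follow directly from the uniformity of $\nu$ over $\sO_t(S)$ in Definition~\ref{def:kp-OCSP}, with the $S$-sampling left untouched. The runs on the different $\sI^\sigma$ are correlated, but this does not matter because the union bound needs no independence.
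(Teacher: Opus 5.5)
Your proposal is correct and follows essentially the same route as the paper. The paper also forms the $t!$ companion instances $\sI_\sigma$ with clause predicates $\Phi_{[t],\Id}\circ\sigma^{-1}\circ\mu_i^{-1}$ (your $\nu_i\circ\sigma$), uses the identity $\sum_{\sigma}\val(\sI_\sigma;\pi)=1$ together with equality in distribution and a union bound, and subtracts the $t!-1$ certified upper bounds to get the lower bound $\frac{1}{t!}-(t!-1)\eps$, which it then relaxes to $\frac{1}{t!}-(t!)\eps$.
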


\begin{proof}
    Note that for any $\pi \in \Sym([n])$ and every $T \in \binom{[n]}{t}$, deterministically we have
    \begin{align}
        1 = \sum_{\mu \in \Sym(T)} \Phi_{[t], \Id} (\mu^{-1}(\pi[T])). \label{eq:order-identity}
    \end{align}
    Given a $(k,p)$-random $t$-OCSP instance $\sI = \{(T_i, \Phi_{[t], \Id} \circ \mu_i^{-1})\}_{i=1}^m$, we may construct $t!$ number of $t$-OCSP instances $\{\sI_{\sigma}\}_{\sigma \in \Sym([t])}$ as follows. For $\sigma \in \Sym([t])$, define
    \begin{align*}
        \sI_{\sigma} &= \{(T_i, \Phi_{[t], \sigma} \circ \mu_i^{-1})\}_{i=1}^m\\
        &= \{(T_i, \Phi_{[t], \Id} \circ \sigma^{-1} \circ \mu_i^{-1})\}_{i=1}^m.
    \end{align*}
    By the identity \eqref{eq:order-identity}, for any $\pi \in \Sym([n])$, we have
    \begin{align*}
        \sum_{\sigma \in \Sym([t])} \val(\sI_{\sigma}; \pi) = 1.
    \end{align*}
    Note that $\sI = \sI_{\Id}$, and any $\sI_{\sigma}$ follows the same distribution as $(k,p)$-random $t$-OCSP with canonical ordering predicate $\Phi_{[t], \Id}$. Therefore, applying the $\eps$-refutation algorithm to each $\sI_{\sigma}$ and taking a union bound over the failure probability, we get with high probability a certificate in time $O_t(T)$ that for all $\pi \in \Sym([n])$,
    \begin{align*}
        \val(\sI_{\sigma}; \pi) \le \frac{1}{t!} + \eps, \quad \forall\, \sigma \in \Sym([t]).
    \end{align*}
    Thus, we also certify a lower bound for $\val(\sI; \pi)$ for any $\pi \in \Sym([n])$ as
    \begin{align*}
        \val(\sI; \pi) &= 1 - \sum_{\substack{\sigma\in \Sym([t]):\\ \sigma \ne \Id}} \val(\sI_{\sigma}; \pi)\\
        &\ge 1 - \sum_{\substack{\sigma\in \Sym([t]):\\ \sigma \ne \Id}} (\frac{1}{t!} + \eps)\\
        &\ge \frac{1}{t!} - (t!)\eps.
    \end{align*}
    This concludes the proof.
\end{proof}

In what follows, we will focus on one-sided strong refutation for $(k,p)$-random $t$-OCSP with the canonical ordering predicate $\Phi_{[t],\Id}$ using the bucketing method.

\subsubsection{Domain Coarsening and Two Correlated CSPs}
Fix a constant $\eps > 0$ and $t \le d$. Let $B = B(k,\eps) \in \NN$ be a constant that we will set later. Let us consider a domain $[B]$, and two predicates $[B]^t \to \{0,1\}$ defined as follows.
\begin{definition}[Coarse Ordering Predicate and Collision Predicate]
    The coarse ordering predicate $Q_t: [B]^k \to \{0,1\}$ is defined as
    \begin{align*}
        Q_t(b_1, \dots, b_t, \dots, b_k) \colonequals \one\{b_1 < \dots < b_t\}.
    \end{align*}
    The collision predicate $R_t: [B]^k \to \{0,1\}$ is defined as
    \begin{align*}
        R_t(b_1, \dots, b_t, \dots, b_k) \colonequals \one\{b_1, \dots, b_t \text{ are not all distinct}\}.
    \end{align*}
\end{definition}

\paragraph{Bucketing Method Overview.}

We give a detailed description of how bucketing method works.

Recall that a $(k,p)$-random $t$-OCSP instance with the canonical ordering predicate $\Phi_{[t], \Id}$ is generated in the following way:
\begin{itemize}
    \item For every $S \in \binom{[n]}{k}$, sample $S$ independently with probability $p$.
    \item For every sampled $S \in \binom{[n]}{k}$, sample a uniform random local ordering $\nu_S \in \Sym(S)$.
    \item For every sampled pair $(S, \nu_S)$, let $\nu_S = (a_1, \dots, a_k)$, and add a clause $(T, \Phi_{[t], \Id} \circ \mu_T^{-1})$ where $T = \{a_1, \dots, a_t\}$ and $\mu_T = (a_1, \dots, a_t) \in \Sym(T)$.
\end{itemize}

Now let $\sI = \{(T_i, \Phi_{[t], \Id} \circ \mu_i^{-1})\}_{i=1}^m$ be a $(k,p)$-random $t$-OCSP instance with the canonical ordering predicate $\Phi_{[t], \Id}$. We construct two CSP instances $\sI_Q$ and $\sI_R$ on domain $[B]$ with predicates $Q_t$ and $R_t$ from $\sI$ as follows.

\begin{itemize}
    \item For every clause $(T_i, \Phi_{[t], \Id} \circ \mu_i^{-1})$ of $\sI$ where $1\le i \le m$, let $(S_i, \nu_i)$ be the sampled pair in the generation process that added the clause, i.e., $\nu_i = (a_1, \dots, a_k) \in \Sym(S_i)$, $\mu_i = (a_1, \dots, a_t)$, and $T_i = \{a_1, \dots, a_t\}$.
    \item Add a clause $(S_i, Q_t(x_{\nu_i(1)}, \dots, x_{\nu_i(k)}))$ to $\sI_Q$, and add a clause $(S_i, R_t(x_{\nu_i(1)}, \dots, x_{\nu_i(k)}))$ to $\sI_R$.
\end{itemize}

We will also define a map from $\Sym([n])$ to $[B]^n$. Partition $[n]$ into $B$ buckets as
\begin{align*}
    B_i \colonequals \left\{\left\lfloor (i-1)\frac{n}{B} \right\rfloor + 1, \left\lfloor (i-1)\frac{n}{B} \right\rfloor + 2, \dots, \left\lfloor i\frac{n}{B} \right\rfloor\right\}, \forall\, 1 \le i \le B.
\end{align*}
For every $a \in [n]$, denote
\begin{align*}
    \rho(a) = i \in [B],
\end{align*}
for the unique $i$ such that $a \in B_i$. For every $\pi \in \Sym([n])$, denote 
\begin{align*}
    \rho(\pi) = (\rho(\pi)_1, \dots, \rho(\pi)_n) \colonequals (\rho(\pi^{-1}(1)), \dots, \rho(\pi^{-1}(n))) \in [B]^n.
\end{align*}
In other words, $\rho(\pi) \in [B]^n$ is a coarsened statistics of the original ordering $\pi \in \Sym([n])$, where its $a$-th entry $\rho(\pi)_a$ encodes the bucket $B_i$ that the rank of $a$ under $\pi$, $\pi^{-1}(a) \in [n]$, falls into.

The following proposition shows that how we may upper bound the value of the OCSP instance $\sI$ using the value of CSP instances $\sI_Q$ and $\sI_R$ constructed above.

\begin{proposition}\label{prop:val-bucketing-bound}
    For any $t$-OCSP instance $\sI = \{(T_i, \Phi_{[t], \Id} \circ \mu_i^{-1})\}_{i=1}^m$, consider two $k$-CSP instances $\sI_Q$ and $\sI_R$ on domain $[B]$ defined above, with predicates $Q_t$ and $R_t$ respectively.
    Then, we have
    \begin{align*}
        \val(\sI; \pi) \le \val(\sI_Q; \rho(\pi)) + \val(\sI_R; \rho(\pi)).
    \end{align*}
\end{proposition}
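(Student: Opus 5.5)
The plan is to prove the inequality clause by clause and then average over the $m$ clauses. All three instances $\sI$, $\sI_Q$, $\sI_R$ have the same number $m$ of clauses, with clause $i$ of each coming from the same sampled pair $(S_i,\nu_i)$. So it suffices to show that for every $i$ and every $\pi \in \Sym([n])$,
\begin{align*}
    \Phi_{[t],\Id}(\mu_i^{-1}(\pi[T_i])) \le Q_t(\rho(\pi)_{\nu_i(1)}, \dots, \rho(\pi)_{\nu_i(k)}) + R_t(\rho(\pi)_{\nu_i(1)}, \dots, \rho(\pi)_{\nu_i(k)}).
\end{align*}
Dividing by $m$ and summing over $i$ then gives $\val(\sI;\pi) \le \val(\sI_Q;\rho(\pi)) + \val(\sI_R;\rho(\pi))$.

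To prove the pointwise inequality, write $\nu_i = (a_1,\dots,a_k)$, so that $\mu_i = (a_1,\dots,a_t)$. If the left-hand side is $0$ there is nothing to show, since $Q_t,R_t \ge 0$. Otherwise $\mu_i^{-1}(\pi[T_i]) = \Id$, i.e. $\pi[T_i] = \mu_i$, which by the identity used in Lemma~\ref{lem:rank-decom} means $\pi^{-1}(a_1) < \pi^{-1}(a_2) < \dots < \pi^{-1}(a_t)$.

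The key observation is that the bucket map $\rho:[n]\to[B]$ is monotone non-decreasing, because the buckets $B_1,\dots,B_B$ are consecutive intervals in increasing order. Hence the values $b_j \colonequals \rho(\pi)_{a_j} = \rho(\pi^{-1}(a_j))$ satisfy $b_1 \le b_2 \le \dots \le b_t$. There are then two cases. If $b_1,\dots,b_t$ are all distinct, the chain is strict, so $Q_t(b_1,\dots,b_k) = 1$. If they are not all distinct, then $R_t(b_1,\dots,b_k) = 1$. In either case the right-hand side is at least $1$, which establishes the pointwise inequality.

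There is no real obstacle here. The only points needing care are bookkeeping: checking that the argument order in the clauses of $\sI_Q$ and $\sI_R$, namely $x_{\nu_i(1)},\dots,x_{\nu_i(k)}$, matches the order $(a_1,\dots,a_t)$ in $\mu_i$, and that $\rho(\pi)_a$ is defined via the rank $\pi^{-1}(a)$, consistent with how $\Phi$ is unwound. Both hold directly from the definitions.
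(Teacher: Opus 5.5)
Your proof is correct and follows essentially the same clause-by-clause argument as the paper, which splits directly on whether $\rho(\pi)_{a_1},\dots,\rho(\pi)_{a_t}$ are distinct, using $Q_t$ in that case and the bound $R_t = 1$ otherwise. You instead first reduce to the case where the left-hand side equals $1$, and you make explicit the monotonicity of the bucket map $\rho$, which the paper uses only implicitly.
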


\begin{proof}
    For any $\pi \in \Sym([n])$ and any clause $(T_i, \Phi_{[t], \Id} \circ \mu_i^{-1})$ of $\sI$, we have
    \begin{align*}
        &\Phi_{[t], \Id}(\mu_i^{-1}(\pi[T_i]))\\
        &= \one\{\mu_i^{-1}(\pi[T_i]) = \Id \}\\
        &= \one\{\pi[T_i] = \mu_i\}\\
        &= \one\{\pi^{-1}(a_1) < \dots < \pi^{-1}(a_t)\},
    \end{align*}
    where $\mu_i = (a_1, \dots, a_t) \in \Sym(T_i)$. Note that if the ranks of $a_1, \dots, a_t$ under $\pi$ fall into distinct buckets in $[B]$, then
    \begin{align*}
        &\one\{\pi^{-1}(a_1) < \dots < \pi^{-1}(a_t)\}\\
        &= \one\{\rho(\pi)_{a_1} < \dots < \rho(\pi)_{a_t}\}\\
        &= Q_t(\rho(\pi)_{a_1}, \dots, \rho(\pi)_{a_t}, \cdot)\\
        &= Q_t(\rho(\pi)_{a_1}, \dots, \rho(\pi)_{a_t}, \dots,\rho(\pi)_{a_k}),
    \end{align*}
    where we set $(a_1, \dots, a_t, \dots, a_k) = \nu_i$ to be the local ordering $\nu_i$ of $S_i$ which induced the clause given by $(T_i, \mu_i)$ in the $(k,p)$-random $t$-OCSP $\sI$. If the ranks of $a_1, \dots, a_t$ under $\pi$ do not fall into distinct buckets in $[B]$, then
    \begin{align*}
        \one\{\pi^{-1}(a_1) < \dots < \pi^{-1}(a_t)\} \le 1 = R_t(\rho(\pi)_{a_1}, \dots, \rho(\pi)_{a_k}).
    \end{align*}
    In other words, for every clause $(T_i, \Phi_{[t], \Id} \circ \mu_i^{-1})$ of $\sI$, we have
    \begin{align*}
        \Phi_{[t], \Id}(\mu_i^{-1}(\pi[T_i])) \le Q_t(\rho(\pi)_{a_1}, \dots, \rho(\pi)_{a_k}) + R_t(\rho(\pi)_{a_1}, \dots, \rho(\pi)_{a_k}). 
    \end{align*}
    Summing up all the clauses of $\sI$ and dividing by $m$, we get
    \begin{align*}
        \val(\sI; \pi) \le \val(\sI_Q; \rho(\pi)) + \val(\sI_R; \rho(\pi)).
    \end{align*}
\end{proof}

In order to obtain strong refutation for $(k,p)$-random $t$-OCSP instance $\sI$ with canonical ordering predicate $\Phi_{[t], \Id}$, our plan is to obtain strong refutation for $\sI_Q$, show that $\val(\sI_R; \rho(\pi))$ is small simultaneously for all $\pi \in \Sym([n])$, and then apply Lemma~\ref{lem:one-to-two} to obtain a two-sided strong refutation.

\subsubsection{Certifying Concentration for $\sI_Q$}

We first prove a (one-sided) strong refutation for $\sI_Q$. This is the part where we can directly apply existing result on CSP refutation.

\begin{theorem}[{\cite[Theorem 1.5]{chan2026strongly}}]\label{thm:csp-refutation}
    Let $k \in \NN$, $d \le k$, and $\eps > 0$ be constants. Let $D$ be a finite domain of constant size, $P: D^k \to \{0,1\}$ be a nontrivial predicate, and $\sI$ be a $p$-random $k$-CSP with predicate $P$ on $n$ variables. Let $\ell = \ell(n) \in \NN$. There exist $C = C(k, |D|, \eps) > 0$ and $\alpha = \alpha(k, |D|, \eps) > 0$ such that if $d-1 \le \ell \le \alpha n$, and
    \begin{align*}
        \overline{m} \ge C n \left(\frac{n}{\ell}\right)^{\frac{d}{2}-1}\log(n)
    \end{align*}
    where $\overline{m} = p\binom{n}{k}$, then there is an algorithm which runs in time $n^{O_{k, |D|, \eps}(\ell)}$ that certifies
    \begin{align*}
        \val(\sI) \le \opt_d(P) + \eps,
    \end{align*}
    where
    \begin{align*}
        \opt_d(P) \colonequals \max_{\substack{\tau \in \mathcal{P}(D^k):\\ d\text{-independent}}}\EE_{x \sim \tau }[P(x)],
    \end{align*}
    and we say a distribution $\tau$ supported on $D^k$ is $d$-independent if there exists some distribution $\xi$ on $D$ such that all the $d$-wise marginals of $\tau$ are $\xi^{\otimes d}$.
\end{theorem}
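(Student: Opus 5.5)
This result is \cite[Theorem 1.5]{chan2026strongly}, and in the paper we will invoke it as a black box. Still, here is the proof strategy I would follow, mirroring the Kikuchi argument of the previous subsection in the setting of a general finite domain $D$. The overall plan has three parts: decompose $P$, reduce the coarse term to a $d$-independent distribution, and certify each high-degree term spectrally.

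\textbf{Decomposing the predicate.} Fix a distribution $\xi$ on $D$ and expand $P$ in an orthonormal basis of $L^2(D^k,\xi^{\otimes k})$. This is the Efron--Stein decomposition, with basis functions $\chi_\beta$ for $\beta \in D^S$, $S\subseteq[k]$. I would split $P$ into its degree-$\le d$ part and its degree-$>d$ part. Equivalently, following the indicator form used in \cite{chan2026strongly}, I would write
\begin{align*}
P(x)=\sum_{S\subseteq[k]}\sum_{\beta\in D^S} c_{S,\beta}\,\one\{x_S=\beta\}.
\end{align*}

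\textbf{The low-order part and $d$-independence.} For an assignment $x\in D^n$, consider the low-order statistics of the instance. These are the empirical frequencies of each symbol across the clause slots, together with the joint frequencies on $\le d$ slots. After centering against products of the one-slot empirical marginals, they correspond to centered tensors $C_{S,\beta}$ with $|S|\le d$ (and $|S|\ge 2$).

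If every such centered tensor evaluates to at most $\eps'$ on every assignment, then the empirical clause-local distribution $\tau$ is $\eps'$-close on its $d$-wise marginals to $\xi^{\otimes d}$. Here $\xi$ is the empirical one-slot marginal, which is approximately the same for all slots because clause orderings are random. A compactness/rounding step then moves $\tau$ to an exactly $d$-independent distribution at cost $O_{k,|D|}(\eps')$ in total variation. This gives $\val(\sI;x)\le \opt_d(P)+\eps$.

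\textbf{Certifying the centered tensors.} It therefore suffices to certify, for each $2\le t=|S|\le d$ and each $\beta$, that
\begin{align*}
\max_{y^{(1)},\dots,y^{(t)}}\left|\langle C_{S,\beta},y^{(1)}\otimes\cdots\otimes y^{(t)}\rangle\right|\le \eps'\overline m,
\end{align*}
where the $y^{(i)}$ are $\{0,1\}$-indicator vectors, or $\pm1$ vectors after a $2^t$ expansion. For even $t$, this follows from Lemma~\ref{lem:kikuchi-certification-even}. For odd $t$, it follows from the Cauchy--Schwarz variant, Lemma~\ref{lem:kikuchi-certification-odd}.

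\textbf{The main obstacle.} The hard step is the spectral norm bound in Theorem~\ref{thm:Kikuchi-norm}. I would prove it with matrix Bernstein (Theorem~\ref{thm:matrix-bernstein}) applied to the sum over hyperedges of their Kikuchi contributions. The key input is a row-sparsity estimate: each row of the level-$\ell$ Kikuchi matrix has about $\overline m\,\ell^{t/2}/n^{t/2}$ nonzero entries in expectation. Rows that are too heavy are pruned, at a cost controlled by a Chernoff bound.

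In the odd case there is an additional difficulty. The entries $\sum_u F F$ are not independent, so one must condition on the $u$-slices and treat the diagonal terms separately. These diagonal terms are exactly what the $\|F\|_{\text{Fr}}^2$ term absorbs.

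Plugging in the density $\overline m\ge Cn(n/\ell)^{d/2-1}\log n$ makes every term at most $\eps'$. The running time is governed by the Kikuchi matrix dimension, which is $n^{O(\ell)}$ times constants depending on $|D|$.
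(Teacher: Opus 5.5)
This matches the paper, which does not prove the statement but imports it verbatim from \cite[Theorem 1.5]{chan2026strongly} as a black box. Your outline is also consistent with the machinery the paper borrows from that reference: bound each assignment's clause-local empirical distribution against $\xi^{\otimes d}$ on its $d$-wise marginals by certifying the centered tensors with Lemmas~\ref{lem:kikuchi-certification-even} and~\ref{lem:kikuchi-certification-odd} together with Theorems~\ref{thm:Kikuchi-norm} and~\ref{thm:frobenious-norm}. The one thing to soften is the rounding step: since $\eps$ is a constant, a plain compactness argument with $\eps'=\eps'(k,|D|,\eps)$ suffices, so you need not claim a total-variation cost linear in $\eps'$ uniformly over $\xi$ (and your Efron--Stein split of $P$ is never actually used).
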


In our case, for a $(k,p)$-random $t$-OCSP instance $\sI$ with canonical ordering predicate $\Phi_{[t], \Id}$, it is easy to see that the constructed $k$-CSP instance $\sI_Q$ follows the distribution of $p$-random CSP with predicate $Q_t$. Thus, Theorem~\ref{thm:csp-refutation} applies and it remains to understand $\opt_t(Q_t)$.

\begin{lemma}\label{lem:opt_t}
    Let $B \in \NN$ be a constant. Consider the predicate $Q_t: [B]^k \to \{0,1\}$ defined by
    \begin{align*}
        Q_t(b_1, \dots, b_k) \colonequals \one\{b_1 < \dots < b_t\}.
    \end{align*}
    Then, we have $\opt_t(Q_t) = \frac{\binom{B}{t}}{B^t}$.
\end{lemma}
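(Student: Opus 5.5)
We prove the two inequalities $\opt_t(Q_t) \ge \binom{B}{t}/B^t$ and $\opt_t(Q_t) \le \binom{B}{t}/B^t$ separately.

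\textbf{Upper bound.} The key observation is that $Q_t$ depends only on the first $t$ coordinates $(b_1,\dots,b_t)$. Let $\tau$ be any $t$-independent distribution on $[B]^k$. By definition there is a distribution $\xi$ on $[B]$ such that every $t$-wise marginal of $\tau$ equals $\xi^{\otimes t}$; in particular the marginal on coordinates $1,\dots,t$ is $\xi^{\otimes t}$. Hence
\begin{align*}
\EE_{x\sim\tau}[Q_t(x)] = \Pr_{b_1,\dots,b_t \overset{\text{iid}}{\sim}\xi}[b_1<\dots<b_t].
\end{align*}
Since the $b_i$ are i.i.d., this probability equals $\frac{1}{t!}\Pr[b_1,\dots,b_t \text{ distinct}]$. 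Writing $\xi_j$ for the mass of $\xi$ on $j \in [B]$, we get
\begin{align*}
\Pr[b_1<\dots<b_t] = \sum_{j_1<\dots<j_t} \xi_{j_1}\cdots\xi_{j_t} = e_t(\xi_1,\dots,\xi_B),
\end{align*}
the elementary symmetric polynomial of degree $t$. We must show that
\begin{align*}
\max\Bigl\{\, e_t(\xi) : \xi \ge 0,\ \textstyle\sum_j \xi_j = 1 \,\Bigr\}
\end{align*}
is attained at the uniform point, where its value is $\binom{B}{t}B^{-t}$. This follows from Maclaurin's inequality,
\begin{align*}
\left(\frac{e_t(\xi)}{\binom{B}{t}}\right)^{1/t} \le \frac{e_1(\xi)}{B} = \frac{1}{B},
\end{align*}
valid for nonnegative reals. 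Alternatively, one can use a smoothing argument: $e_t$ is Schur-concave on the nonnegative orthant. To see this, fix all coordinates except two, $\xi_a$ and $\xi_b$. Then $e_t$ has the form $\alpha + \beta(\xi_a+\xi_b) + \gamma\,\xi_a\xi_b$ with $\gamma \ge 0$, so it does not decrease when $\xi_a$ and $\xi_b$ are both replaced by their average.

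\textbf{Lower bound.} Take $\tau = \Unif([B])^{\otimes k}$. Every $t$-wise marginal is $\Unif([B])^{\otimes t}$, so $\tau$ is $t$-independent. Under $\tau$,
\begin{align*}
\EE_{x\sim\tau}[Q_t(x)] = \Pr[b_1<\dots<b_t] = \frac{\binom{B}{t}}{B^t},
\end{align*}
since each $t$-subset of $[B]$ appears in increasing order in exactly one of the $B^t$ equally likely tuples.

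\textbf{Main obstacle.} The only nontrivial step is maximizing $e_t$ over the probability simplex. We will cite Maclaurin's inequality, or give the two-variable averaging argument above; that argument needs a short compactness remark to conclude that a maximizer exists and must be uniform.
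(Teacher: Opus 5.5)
Your proposal is correct and matches the paper's proof: the uniform product distribution gives the lower bound, and for the upper bound you use the $t$-wise marginal $\xi^{\otimes t}$ on the first $t$ coordinates to reduce to $e_t(\xi)$ and then apply Maclaurin's inequality. If you use the optional smoothing route instead, note that averaging only weakly increases $e_t$ (when $\gamma = 0$ it may not increase at all), so you should argue that the uniform point attains the maximum rather than that every maximizer is uniform.
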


\begin{proof}
    By definition,
    \begin{align*}
        \opt_t(Q_t) \colonequals \max_{\substack{\tau \in \mathcal{P}([B]^k):\\ t\text{-independent}}}\EE_{x \sim \tau }[Q_t(x)].
    \end{align*}
    First, consider the uniform distribution $\Unif([B]^k)$. It's easy to see that the uniform distribution achieves the expectation $\frac{\binom{B}{t}}{B^t}$.
    
    Next, consider an arbitrary $t$-independent distribution $\tau$. Suppose under $\tau$, the first $t$ coordinates are marginally distributed as $\xi^{\otimes t}$. We may then compute
    \begin{align*}
        \EE_{x \sim \tau}[Q_t(x)] &= \EE_{(b_1, \dots, b_t) \sim \xi^{\otimes t}}[ \one\{b_1 < \dots < b_t\} ]\\
        &= \Pr_{\xi^{\otimes t}}\left(b_1 < \dots < b_t\right)\\
        &= \sum_{\{b_1, \dots, b_t\} \in \binom{[B]}{t}} \xi(b_1)\dots \xi(b_t)\\
        &= e_t(\xi(1), \dots, \xi(B)),
    \end{align*}
    where $e_t(x_1, \dots, x_N) \colonequals \sum_{S \in \binom{[N]}{t}} \prod_{i \in S} x_i$ denotes the $t$-th elementary symmetric polynomial. By Maclaurin's inequality, we have
    \begin{align*}
        \left(\frac{e_t(\xi(1), \dots, \xi(B))}{\binom{B}{t}}\right)^{\frac{1}{t}} &\le \frac{e_1(\xi(1), \dots, \xi(B))}{B}\\
        &= \frac{\sum_{i=1}^B \xi(i)}{B}\\
        &= \frac{1}{B}.
    \end{align*}
    Plugging this bound back, we get
    \begin{align*}
        \EE_{x \sim \tau}[Q_t(x)] &= e_t(\xi(1), \dots, \xi(B))\\
        &\le \frac{\binom{B}{t}}{B^t}
    \end{align*}
    as desired.
\end{proof}

Combining Theorem~\ref{thm:csp-refutation} and Lemma~\ref{lem:opt_t}, we get the following.
\begin{proposition}\label{prop:IQ-bound}
    Let $k \in \NN$, $t \le d \le k$, and $\eps > 0$ be constants. Let $\sI_Q$ be the $p$-random $k$-CSP with predicate $Q_t$ on $n$ variables constructed as in the bucketing method. Let $\ell = \ell(n) \in \NN$. There exist $C = C(k, B, \eps) > 0$ and $\alpha = \alpha(k, B, \eps) > 0$ such that if $d-1 \le \ell \le \alpha n$, and
    \begin{align*}
        \overline{m} \ge C n \left(\frac{n}{\ell}\right)^{\frac{d}{2}-1}\log(n)
    \end{align*}
    where $\overline{m} = p\binom{n}{k}$, then there is an algorithm which runs in time $n^{O_{k, B, \eps}(\ell)}$ that certifies
    \begin{align*}
        \val(\sI_{Q}) \le \frac{\binom{B}{t}}{B^t} + \eps.
    \end{align*}
\end{proposition}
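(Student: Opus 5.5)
This is a direct combination of Theorem~\ref{thm:csp-refutation} and Lemma~\ref{lem:opt_t}. The plan is to check the hypotheses of Theorem~\ref{thm:csp-refutation} for the instance $\sI_Q$, and then substitute the value of $\opt$ supplied by Lemma~\ref{lem:opt_t}.

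\textbf{Step 1: $\sI_Q$ is a $p$-random $k$-CSP.} I would verify that $\sI_Q$ has the distribution of a $p$-random $k$-CSP with predicate $Q_t$ on domain $[B]$.
\begin{itemize}
\item Every $S\in\binom{[n]}{k}$ is included independently with probability $p$.
\item Given $S$, the ordering $\nu_S$ is uniform on $\Sym(S)$.
\item The resulting clause is $Q_t$ applied to $(x_{\nu_S(1)},\dots,x_{\nu_S(k)})$, i.e. $Q_t$ composed with a uniformly random arrangement of the variables in $S$.
\end{itemize}
This is exactly the model used in \cite{chan2026strongly}, where each sampled hyperedge receives a uniformly random ordering of its variables. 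The domain size is $|D|=B$, a constant depending only on $(k,\eps)$. The predicate $Q_t$ is nontrivial for $B\ge t$, since it is satisfied by $(1,\dots,t,\dots)$ and violated by a constant tuple.

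\textbf{Step 2: apply Theorem~\ref{thm:csp-refutation} at the right degree.} I would apply the theorem with degree parameter $t$ rather than $d$. I expect this to be the main subtlety, because the proposition's density assumption is stated in terms of $d$ while the $\opt$ bound from Lemma~\ref{lem:opt_t} is for $\opt_t$. I would handle it in two parts.
\begin{itemize}
\item \emph{Density.} Since $t\le d$ and $\ell\le\alpha n\le n$, we have $(n/\ell)^{t/2-1}\le (n/\ell)^{d/2-1}$. Hence the assumed density also satisfies the hypothesis $\overline m\ge C n (n/\ell)^{t/2-1}\log n$ needed at degree $t$.
\item \emph{Level.} The theorem needs $\ell\ge t-1$, which follows from $\ell\ge d-1\ge t-1$.
\end{itemize}
The theorem then gives an algorithm running in time $n^{O_{k,B,\eps}(\ell)}$ that certifies $\val(\sI_Q)\le \opt_t(Q_t)+\eps$. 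The constants $C,\alpha$ may be taken as the maximum and minimum, respectively, over the finitely many $t\le d$ of the constants the theorem provides.

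\textbf{Step 3: substitute the value of $\opt_t$.} Lemma~\ref{lem:opt_t} gives $\opt_t(Q_t)=\binom{B}{t}/B^t$, which yields the claimed certificate.

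\textbf{Fallback.} If instead one prefers to invoke the theorem literally with parameter $d$, one needs $\opt_d(Q_t)\le\binom{B}{t}/B^t$. This holds because every $d$-independent distribution is also $t$-independent for $t\le d$: its $t$-wise marginals are marginals of $\xi^{\otimes d}$, hence equal $\xi^{\otimes t}$. Therefore $\opt_d(Q_t)\le \opt_t(Q_t)$, and this route works equally well.
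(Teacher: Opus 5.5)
Your proposal is correct and follows the paper's argument, which likewise notes that $\sI_Q$ is a $p$-random $k$-CSP with predicate $Q_t$ and then combines Theorem~\ref{thm:csp-refutation} with Lemma~\ref{lem:opt_t}. The paper leaves the mismatch between the degree $t$ (in $\opt_t$) and the degree $d$ (in the density bound) implicit. Your Step 2 and your fallback each resolve it soundly: apply the theorem at degree $t$ using $(n/\ell)^{t/2-1}\le (n/\ell)^{d/2-1}$, or apply it at degree $d$ using $\opt_d(Q_t)\le \opt_t(Q_t)$.
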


\subsubsection{Certifying Concentration for $\sI_R$}

Now we move on to certify $\val(\sI_R; \rho(\pi))$ is small for all $\pi \in \Sym([n])$. Note that we cannot naively apply CSP machinery to certify this bound, since the assignment $(1, \dots, 1) \in [B]^n$ trivially satisfies all the collision predicates $R_t(b_1, \dots, b_k) = \one\{b_1, \dots, b_t \text{ are not all distinct}\}$. Nevertheless, we only care about evaluation of $\sI_R$ on assignments $\rho(\pi)$ induced by $\pi \in \Sym([n])$, and it turns out that we may exploit the regularity of $\rho(\pi) \in [B]^n$, that it contains roughly equal number of coordinates in each bucket among $[B]$, to algorithmically certify bounds on $\val(\sI_R; \rho(\pi))$ for all $\pi \in \Sym([n])$.

We first define the following collision matrix which encodes all the possible pairwise collisions in $\sI_R$.

\begin{definition}\label{def:collision-matrix}
    Let $\sI_R = \{(S_i, R_t(x_{\nu_i(1)}, \dots, x_{\nu_i(k)}))\}_{i=1}^m$ be a $k$-CSP with predicate $R_t$ on $n$ variables. The collision matrix $G = G(\sI_R) \in \RR^{n \times n}$ is defined as
    \begin{align*}
        G(u,v) &\colonequals \left|\left\{i \in [m]: \{u,v\} \subseteq \{\nu_i(1), \dots, \nu_i(t)\}  \right\}\right|, \quad \forall u \ne v \in [n]\\
        G(u,u) &\colonequals 0.
    \end{align*}
\end{definition}

The following proposition shows how to relate $\val(\sI_R; \rho(\pi))$ to this collision matrix $G$.

\begin{proposition}\label{prop:collision-matrix-bound}
    Let $\sI_R = \{(S_i, R_t(x_{\nu_i(1)}, \dots, x_{\nu_i(k)}))\}_{i=1}^m$ be a $k$-CSP with predicate $R_t$ on $n$ variables. Let $G = G(\sI_R)$ be the collision matrix in Definition~\ref{def:collision-matrix}. For any $\pi \in \Sym([n])$,
    \begin{align*}
        m \cdot \val(\sI_R; \rho(\pi)) \le \frac{1}{2}\sum_{a \in [B]} v_a^\top G v_a,
    \end{align*}
    where $v_a \in \{0,1\}^n$ is defined as the indicator vector of the set $\{i \in [n]: \rho(\pi)_i = a\}$.
\end{proposition}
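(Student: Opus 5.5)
The plan is to bound each clause's contribution to $m\cdot\val(\sI_R;\rho(\pi))$ by a count of colliding pairs, and then reorganize that count by bucket. Fix $\pi$ and write $x = \rho(\pi)\in[B]^n$. The $i$-th clause of $\sI_R$ evaluates to
\[
R_t(x_{\nu_i(1)},\dots,x_{\nu_i(k)}) = \one\{x_{\nu_i(1)},\dots,x_{\nu_i(t)} \text{ not all distinct}\}.
\]
This indicator is at most the number of unordered pairs $\{u,v\}\subseteq\{\nu_i(1),\dots,\nu_i(t)\}$ with $u\ne v$ and $x_u=x_v$. The reason is that if some collision occurs, at least one such pair exists, so the count is $\ge 1$; otherwise both sides are $0$. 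The elements $\nu_i(1),\dots,\nu_i(t)$ are distinct because $\nu_i\in\Sym(S_i)$, so these pairs are genuine pairs of distinct variables.

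Next, sum this bound over all $i\in[m]$ and exchange the order of summation, so that we sum over unordered pairs $\{u,v\}$ of distinct elements of $[n]$:
\[
m\cdot\val(\sI_R;x)\le \sum_{\{u,v\},\,u\ne v} \one\{x_u=x_v\}\cdot\bigl|\{i: \{u,v\}\subseteq\{\nu_i(1),\dots,\nu_i(t)\}\}\bigr| = \sum_{\{u,v\},\,u\ne v}\one\{x_u=x_v\}\,G(u,v).
\]

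Finally, use the bucket vectors. We have $\one\{x_u=x_v\}=\sum_{a\in[B]}(v_a)_u(v_a)_v$, since exactly one bucket $a$ contains both $u$ and $v$ when they collide. Moreover, $G$ is symmetric with zero diagonal. Hence the sum over unordered pairs equals
\[
\tfrac12\sum_{u\ne v} G(u,v)\sum_a (v_a)_u(v_a)_v=\tfrac12\sum_a v_a^\top G v_a,
\]
where the diagonal terms $u=v$ drop out because $G(u,u)=0$. This gives the claimed inequality.

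The only delicate point is the first step: the union-bound style inequality, which replaces the collision indicator by the pair count. Everything else is bookkeeping with the symmetric, zero-diagonal matrix $G$.
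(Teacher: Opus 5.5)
Your proposal is correct and follows the paper's proof: bound each clause's collision indicator by the number of colliding pairs among $\nu_i(1),\dots,\nu_i(t)$, then swap the order of summation to get $\sum_{u<v}\one\{\rho(\pi)_u=\rho(\pi)_v\}G(u,v)$. You then rewrite this with the bucket indicator vectors using that $G$ is symmetric with zero diagonal, spelling out this last step a bit more explicitly than the paper does.
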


\begin{proof}
    \begin{align*}
        m \cdot \val(\sI_R; \rho(\pi)) &= \sum_{i=1}^m R_t(\rho(\pi)_{\nu_i(1)}, \dots, \rho(\pi)_{\nu_i(k)})\\
        &= \sum_{i=1}^m \one\{\rho(\pi)_{\nu_i(1)}, \dots, \rho(\pi)_{\nu_i(t)} \text{ are not all distinct}\}\\
        &\le \sum_{i=1}^m \sum_{u<v\in [n]} \one\{\{u,v\} \subseteq \{\nu_i(1), \dots, \nu_i(t)\}\} \cdot \one\{\rho(\pi)_u = \rho(\pi)_v\}\\
        &= \sum_{u<v \in [n]} \one\{\rho(\pi)_u = \rho(\pi)_v\} \sum_{i=1}^m \one\{\{u,v\} \subseteq \{\nu_i(1), \dots, \nu_i(t)\}\}\\
        &= \sum_{u<v \in [n]} \one\{\rho(\pi)_u = \rho(\pi)_v\} G(u,v)\\
        &= \frac{1}{2} \sum_{a \in [B]} v_a^\top G v_a.
    \end{align*}
\end{proof}

Note that when $\sI$ is a $(k,p)$-random $t$-OCSP with canonical ordering predicate, $\sI_R$ is distributed as $p$-random $k$-CSP with predicate $R_t$. We may compute for $G = G(\sI_R)$ and $u \ne v\in [n]$ that
\begin{align*}
    \EE[G(u,v)] &= \EE\left[\left|\left\{i \in [m]: \{u,v\} \subseteq \{\nu_i(1), \dots, \nu_i(t)\}  \right\}\right|\right]\\
    &= \sum_{\substack{ S \in \binom{[n]}{k}:\\ \{u,v\} \subseteq S } } p \frac{t(t-1)(k-2)!}{k!}\\
    &= p\binom{n-2}{k-2}\frac{t(t-1)(k-2)!}{k!}\\
    &= \frac{p\binom{n}{k}\binom{t}{2}}{\binom{n}{2}},
\end{align*}
and thus $\EE[G] = \frac{p\binom{n}{k}\binom{t}{2}}{\binom{n}{2}}(J - I) = \frac{\overline{m} \binom{t}{2}}{\binom{n}{2}}(J-I)$. By Matrix Bernstein inequality, we get the following concentration result for $G$.

\begin{proposition}\label{prop:G-concentration}
    Let $\sI_R$ be a $p$-random $k$-CSP with predicate $R_t$, and $G = G(\sI_R)$ be its collision matrix. Then,
    \begin{align*}
        \EE[G] = \frac{\overline{m} \binom{t}{2}}{\binom{n}{2}}(J-I).
    \end{align*}
    
    If $\overline{m} = \Omega(n\log(n))$ where $\overline{m} \colonequals p\binom{n}{k}$, then there exists a constant $C = C(k) > 0$ such that
    \begin{align*}
        \Pr\left(\|G - \EE[G]\|_2 \ge C\sqrt{\frac{\overline{m}}{n}\log(n)}\right) = o(1).
    \end{align*}
\end{proposition}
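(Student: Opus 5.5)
We write $G - \EE[G]$ as a sum of independent, mean-zero, symmetric random matrices indexed by the $k$-subsets, and then apply Matrix Bernstein (Theorem~\ref{thm:matrix-bernstein}). For each $S \in \binom{[n]}{k}$, let $I_S \sim \mathrm{Bern}(p)$ indicate whether $S$ is sampled, and let $\nu_S \sim \Unif(\Sym(S))$ be its local ordering. Put $T_S = \{\nu_S(1),\dots,\nu_S(t)\}$ and let $E_S \colonequals J_{T_S} - I_{T_S}$, the adjacency matrix of the clique on $T_S$. Then $G = \sum_S I_S E_S$, and the summands are independent across $S$. The expectation formula follows by linearity: each pair $\{u,v\}\subseteq S$ lies in $T_S$ with probability $\binom{t}{2}/\binom{k}{2}$, which reproduces the computation displayed just before the statement. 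Define $X_S \colonequals I_S E_S - \EE[I_S E_S]$, so that $G - \EE[G] = \sum_S X_S$.

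The next step is to bound the Bernstein parameters.

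For the uniform bound: $\|E_S\|_2 = t-1$, and $\EE[I_S E_S] = p\,\EE[E_S]$ has norm at most $p(t-1)$. Hence $\lambda_{\max}(\pm X_S) \le 2(t-1) \eqqcolon M = O_k(1)$.

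For the variance: $\EE[X_S^2] \preceq \EE[I_S E_S^2] = p\,\EE[E_S^2]$. We have $E_S^2 = (t-2)E_S + (t-1)I_{T_S}$, and every entry of $E_S^2$ is nonnegative and supported on $T_S\times T_S$. Summing over $S$, the matrix $\sum_S p\,\EE[E_S^2]$ has diagonal entries $p\binom{n-1}{k-1}\cdot\frac{t}{k}(t-1)$. Its off-diagonal entries are $O_k\big(p\binom{n-2}{k-2}\big)$.

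This gives
\[
\sigma^2 \le O_k\!\left(p\binom{n-1}{k-1} + n\,p\binom{n-2}{k-2}\right) = O_k\!\left(\frac{\overline{m}}{n}\right),
\]
where the off-diagonal part is handled by bounding the norm by the maximum row sum.

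Now apply Theorem~\ref{thm:matrix-bernstein} to both $\sum_S X_S$ and $-\sum_S X_S$ with $s = C\sqrt{(\overline{m}/n)\log n}$. The failure probability is at most
\[
2n\exp\!\left(-\frac{C^2(\overline{m}/n)\log n}{O_k(\overline{m}/n) + O_k\big(\sqrt{(\overline{m}/n)\log n}\big)}\right).
\]
The hypothesis $\overline{m} = \Omega(n\log n)$, with a suitable implicit constant, gives $\overline{m}/n \gtrsim \log n$. Then $\sqrt{(\overline{m}/n)\log n} \lesssim \overline{m}/n$, so the denominator is $O_k(\overline{m}/n)$. The exponent is therefore at least $c\,C^2\log n$, and for $C$ large this yields the bound $2n\cdot n^{-2} = o(1)$.

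The main obstacle is the variance bound $\sigma^2 = O(\overline{m}/n)$. One must check that the row sums of $\sum_S p\,\EE[E_S^2]$ are $O(\overline{m}/n)$ and not larger; this holds because $E_S^2$ has only $O_k(1)$ nonzero entries per row. A secondary point is ensuring the $Ms$ term in the Bernstein denominator is dominated, which is exactly where the density assumption $\overline{m} = \Omega(n\log n)$ is used.
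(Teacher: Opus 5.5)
Your proof is correct and follows essentially the same route as the paper: the decomposition $G-\EE[G]=\sum_S X_S$ over $k$-subsets, the uniform bound $2(t-1)$, the domination $\EE[X_S^2]\preceq p\,\EE[E_S^2]$, and Matrix Bernstein with $s=C\sqrt{(\overline{m}/n)\log n}$. The only difference is cosmetic: the paper writes $\sum_S p\,\EE[E_S^2]$ explicitly as a combination of $J$ and $I$ before taking its norm, whereas you bound that norm by the maximum row sum. One small correction: the $Ms$ term in the Bernstein denominator carries a factor of $C$, so once $\overline{m}/n\gtrsim\log n$ your lower bound on the exponent grows only linearly in $C$ rather than like $C^2$, which still suffices.
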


\begin{proof}
    We have already computed $\EE[G]$ above, so it remains to prove the concentration bound for $G$.

    To generate $p$-random $k$-CSP $\sI_R$ with predicate $R_t$, we sample each $S \in \binom{[n]}{k}$ independently with probability $p$, and sample a uniform random $\nu_S \sim \Unif(\Sym(S))$. Let $I_S \sim \text{Bern}(p)$ denote the indicator variable that $S \in \binom{[n]}{k}$ is sampled by $\sI_R$.

    Then, we can decompose
    \begin{align*}
        G = \sum_{S \in \binom{[n]}{k}} X_S,
    \end{align*}
    where $X_S$ is a random matrix defined as
    \begin{align*}
        X_S(u,v) &= I_S\one\{\{u,v\} \subseteq \{\nu_S(1), \dots, \nu_S(t)\} \}, \quad \forall u\ne v\in [n]\\
        X_S(u,u) &= 0,
    \end{align*}
    and $X_S$ are independent.

    Thus, we have
    \begin{align*}
        G - \EE[G] = \sum_{S \in \binom{[n]}{k}} (X_S - \EE[X_S]).
    \end{align*}
    Note that when $X_S$ is not the zero matrix, $X_S$ is the adjacency matrix of the clique on vertices $\nu_S(1), \dots, \nu_S(t)$. Thus, deterministically,
    \begin{align*}
        \|X_S\|_2 \le t-1.
    \end{align*}
    Moreover, by Jensen's inequality,
    \begin{align*}
        \|\EE[X_S]\|_2 \le \EE\|X_S\|_2 \le t-1.
    \end{align*}
    Therefore, 
    \begin{align}
        \|X_S - \EE[X_S]\|_2 \le 2(t-1). \label{ineq:R-matrix-bernstein}
    \end{align}

    With probability $1-p$, $X_S$ is the zero matrix. With probability $p$, $X_S$ is the adjacency matrix of the clique on a uniformly random $t$-subset of $S$ by symmetry. Thus, with probability $p$, for a uniformly random $t$-subset $T \subseteq S$, $X_S^2 = (J_T-I_T)^2 = (t-2)J_T + I_T$. Averaging over $t$-subsets $T \subseteq S$, we get
    \begin{align*}
        \EE[X_S^2] &= p\cdot \EE_{T \sim \Unif(\binom{S}{t})}[(t-2)J_T + I_T]\\
        &=p\left(\frac{(t-2)\binom{k-2}{t-2}}{\binom{k}{t}} J_S + \frac{\binom{k-1}{t-1}}{\binom{k}{t}}I_S\right)\\
        &= p\left(\frac{t(t-1)(t-2)}{k(k-1)} J_S + \frac{t}{k}I_S\right).
    \end{align*}

    We may further compute
    \begin{align*}
        \left\|\sum_{S \in \binom{[n]}{k}} \EE[(X_S - \EE[X_S])^2]\right\|_2 &\le \left\|\sum_{S \in \binom{[n]}{k}} \EE[X_S^2]\right\|_2\\
        &= p \cdot \left\|\sum_{S \in \binom{[n]}{k}} \left(\frac{t(t-1)(t-2)}{k(k-1)} J_S + \frac{t}{k}I_S\right)\right\|_2\\
        &= p\binom{n}{k} \left\| \frac{t(t-1)(t-2)}{k(k-1)} \cdot \frac{\binom{n-2}{k-2}}{\binom{n}{k}} J+ \frac{t}{k}\cdot \frac{\binom{n-1}{k-1}}{\binom{n}{k}} I \right\|_2\\
        &= \overline{m} \left\| \frac{t(t-1)(t-2)}{n(n-1)} J+ \frac{t}{n} I \right\|_2\\
        &\le t^3\frac{\overline{m}}{n}
    \end{align*}

    By Matrix Bernstein inequality (Theorem~\ref{thm:matrix-bernstein}), we have
    \begin{align*}
        \Pr\left(\|G - \EE[G]\|_2 \ge h\right) &= \Pr\left(\left\|\sum_{S \in \binom{[n]}{k}}(X_S - \EE[X_S])\right\|_2 \ge h\right) \\
        &\le 2n \exp\left(-\frac{h^2}{t^3\frac{\overline{m}}{n} + 2(t-1)\frac{h}{3}}\right).
    \end{align*}
    Taking $h = C \sqrt{\frac{\overline{m}}{n}\log(n)}$ for a large enough constant $C = C(k)>0$, since $\overline{m} = \Omega(n\log(n))$, 
    \begin{align*}
        \Pr\left(\|G - \EE[G]\|_2 \ge C\sqrt{\frac{\overline{m}}{n}\log(n)}\right) &\le 2n \exp\left(- \frac{C^2 \frac{\overline{m}}{n}\log(n)}{t^3 \frac{\overline{m}}{n} + \frac{2}{3}(t-1)C \sqrt{\frac{\overline{m}}{n}\log(n)}}\right) = o(1).
    \end{align*}
\end{proof}

Finally, we establish the following certificate for $\val(\sI_R; \rho(\pi))$.
\begin{proposition}\label{prop:IR-bound}
    Let $k \in \NN$, $t \le d \le k$, and $\eps > 0$ be constants. Let $\sI_R$ be the $p$-random $k$-CSP with predicate $R_t$ on $n$ variables constructed as in the bucketing method. For a suitable choice of constant $B = B(k,\eps) > 0$, there exists $C = C(k, \eps) > 0$ such that if 
    \begin{align*}
        \overline{m} \ge C n \left(\frac{n}{\ell}\right)^{\frac{d}{2}-1}\log(n)
    \end{align*}
    where $\overline{m} = p\binom{n}{k}$, then there is an algorithm which runs in time polynomial in $n$ that certifies
    \begin{align*}
        \val(\sI_{R}; \rho(\pi)) \le \eps, \quad \forall\, \pi \in \Sym([n]).
    \end{align*}
\end{proposition}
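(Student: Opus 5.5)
The plan is to combine Proposition~\ref{prop:collision-matrix-bound} with the spectral concentration of Proposition~\ref{prop:G-concentration}. The algorithm computes $G = G(\sI_R)$ and its deviation norm $\|G - \EE[G]\|_2$, where $\EE[G] = \frac{\overline{m}\binom{t}{2}}{\binom{n}{2}}(J-I)$ is known explicitly. It also counts $m$. All of this takes polynomial time. The certificate is the inequality $\|G-\EE[G]\|_2 \le C\sqrt{\frac{\overline{m}}{n}\log n}$ together with $m \ge \frac12 \overline{m}$. Both hold with probability $1-o(1)$, by Proposition~\ref{prop:G-concentration} and Lemma~\ref{lem:m-concentration}. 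The density assumption gives $\overline{m} \ge Cn\log n$, since $(n/\ell)^{d/2-1}\ge 1$ when $d\ge 2$ and $\ell\le n$. This is enough for both concentration statements.

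Given the certificate, fix any $\pi \in \Sym([n])$. Proposition~\ref{prop:collision-matrix-bound} gives
\begin{align*}
m\cdot \val(\sI_R;\rho(\pi)) \le \frac12\sum_{a\in[B]} v_a^\top \EE[G] v_a + \frac12\sum_{a\in[B]} v_a^\top (G-\EE[G]) v_a .
\end{align*}
The key structural fact is the regularity of $\rho(\pi)$. The set $\{i : \rho(\pi)_i = a\}$ equals $\pi(B_a)$, so $\|v_a\|_2^2 = |B_a| \le \lceil n/B\rceil \le 2n/B$ for every $\pi$. This holds deterministically, so it requires no union bound over permutations.

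For the first term, $v_a^\top (J-I)v_a \le |B_a|^2$. The first sum is therefore at most
\[
\frac12 \cdot \frac{\overline{m}\binom{t}{2}}{\binom{n}{2}}\sum_a |B_a|^2 \le \frac12 \cdot \frac{\overline{m}\binom{t}{2}}{\binom n2}\cdot B\cdot \frac{4n^2}{B^2} \lesssim \frac{t^2 \overline{m}}{B}.
\]
For the second term, the spectral bound gives $\frac12\sum_a \|G-\EE[G]\|_2\|v_a\|_2^2 \le \frac12\|G-\EE[G]\|_2\, n \le \frac C2\sqrt{\overline{m} n\log n}$. Dividing by $m\ge \overline{m}/2$ yields
\[
\val(\sI_R;\rho(\pi)) \le O\!\left(\frac{t^2}{B}\right) + O\!\left(\sqrt{\frac{n\log n}{\overline{m}}}\right).
\]
Choose $B = B(k,\eps)$ of order $k^2/\eps$, which makes the first term at most $\eps/2$. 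Then choose $C(k,\eps)$ large enough that the density assumption makes the second term at most $\eps/2$. The bound holds simultaneously for all $\pi$, because the only $\pi$-dependence is through the sizes $|B_a|$, and those are controlled deterministically.

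I do not expect a serious obstacle. The step needing the most care is making sure the certificate really is algorithmic. The verifier must check the spectral inequality on the actual matrix $G - \EE[G]$, which is possible because $\EE[G]$ is explicit. It must also check $m \ge \overline{m}/2$ directly, and it should not simply rely on the probabilistic statements. If either check fails, the algorithm declares failure, and this happens only with probability $o(1)$.
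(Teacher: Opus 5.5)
Your proposal is correct and follows essentially the same argument as the paper. You split $\frac12\sum_a v_a^\top G v_a$ into the $\EE[G]$ part and the fluctuation part, bound the latter by the algorithmically checked spectral norm of $G-\EE[G]$ from Proposition~\ref{prop:G-concentration}, use the deterministic bucket sizes $|B_a|\le\lceil n/B\rceil$ with $\sum_a\|v_a\|_2^2=n$ (a slightly cleaner version of the paper's $(1+o(1))n/B$ estimates), divide by $m\ge\overline{m}/2$, and take $B$ of order $t^2/\eps$.
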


\begin{proof}
    By Proposition~\ref{prop:collision-matrix-bound}, for any $\pi \in \Sym([n])$,
    \begin{align*}
        m \cdot \val(\sI_R; \rho(\pi)) &\le \frac{1}{2} \sum_{a \in [B]} v_a^\top G v_a\\
        &= \frac{1}{2} \sum_{a \in [B]} \left( v_a^\top \EE[G] v_a + v_a^\top (G - \EE[G]) v_a \right).
    \end{align*}
    By Proposition~\ref{prop:G-concentration}, $\EE[G] = \frac{\overline{m} \binom{t}{2}}{\binom{n}{2}}(J-I)$ and for some constant $C = C(k) > 0$,
    \begin{align*}
        \Pr\left(\|G - \EE[G]\|_2 \ge C\sqrt{\frac{\overline{m}}{n}\log(n)}\right) &= o(1).
    \end{align*}
    Therefore, by computing the spectral norm of $G - \EE[G]$, with high probability we can in polynomial time certify
    \begin{align*}
        m \cdot \val(\sI_R; \rho(\pi)) &\le \frac{1}{2} \sum_{a \in [B]} \left( v_a^\top \EE[G] v_a + v_a^\top (G - \EE[G]) v_a \right)\\
        &\le \frac{1}{2} \sum_{a \in [B]} \left(\frac{\overline{m} \binom{t}{2}}{\binom{n}{2}} v_a^\top (J-I)  v_a + \|G - \EE[G]\|_2 \|v_a\|_2^2 \right)\\
        &\le \frac{1}{2} \sum_{a \in [B]} \left(\frac{\overline{m} \binom{t}{2}}{\binom{n}{2}} (\|v_a\|_1^2- \|v_a\|_2^2) + C\sqrt{\frac{\overline{m}}{n}\log(n)} \|v_a\|_2^2 \right)\\
        &= \frac{1}{2}\frac{\overline{m} \binom{t}{2}}{\binom{n}{2}} \left(\sum_{a \in [B]} (\|v_a\|_1^2 - \|v_a\|_2^2)\right) + \frac{C}{2} \sqrt{\frac{\overline{m}}{n}\log(n)} \sum_{a \in [B]} \|v_a\|_2^2.
    \end{align*}
    Recall that for each $a \in [B]$, $v_a \in \{0,1\}^n$ is the indicator vector of the set $\{i \in [n]: \rho(\pi)_i = a\}$. Thus, $v_a$ has exactly $\lfloor a\frac{n}{B} \rfloor - \lfloor (a-1)\frac{n}{B} \rfloor = (1 + o(1))\frac{n}{B}$ number of $1$'s, and
    \begin{align*}
        \|v_a\|_1^2 &= (1 + o(1))\frac{n^2}{B^2},\\
        \|v_a\|_2^2 &= (1+o(1))\frac{n}{B}.
    \end{align*}
    Consequently, with high probability we have a polynomial time certificate that
    \begin{align*}
        m \cdot \val(\sI_R; \rho(\pi)) &\le \frac{1}{2}\frac{\overline{m} \binom{t}{2}}{\binom{n}{2}} \left(\sum_{a \in [B]} (\|v_a\|_1^2 - \|v_a\|_2^2)\right) + \frac{C}{2} \sqrt{\frac{\overline{m}}{n}\log(n)} \sum_{a \in [B]} \|v_a\|_2^2\\
        &\le \frac{1+o(1)}{2}\left(\frac{\overline{m} \binom{t}{2}}{\binom{n}{2}} \frac{n^2}{B} + C\sqrt{\overline{m}n\log(n)}\right).
    \end{align*}
    Finally, by Lemma~\ref{lem:m-concentration}, we have
    \begin{align*}
        \Pr\left(|m - \overline{m}| \ge \frac{1}{2} \overline{m}\right) \le 2 \exp\left(-\frac{3 \overline{m}}{32}\right) = o(1).
    \end{align*}
    Therefore, with high probability, $m \ge \frac{1}{2}\overline{m}$ and
    \begin{align*}
        \val(\sI_R; \rho(\pi)) &\le \frac{1}{m}\cdot \frac{1+o(1)}{2}\left(\frac{\overline{m} \binom{t}{2}}{\binom{n}{2}} \frac{n^2}{B} + C\sqrt{\overline{m}n\log(n)}\right)\\
        &\le \frac{2}{\overline{m}}\cdot \frac{1+o(1)}{2}\left(\frac{\overline{m} \binom{t}{2}}{\binom{n}{2}} \frac{n^2}{B} + C\sqrt{\overline{m}n\log(n)}\right)\\
        &\le (1+o(1))\left(\frac{2\binom{t}{2}}{B} + C\sqrt{\frac{n\log(n)}{\overline{m}}}\right),
    \end{align*}
    and this can be certified efficiently. Now choose $B = B(k,\eps) > 0$ large enough and $\overline{m} \ge \Omega_{k,\eps}(n\log(n))$, we get a certificate that $\val(\sI_R; \rho(\pi)) \le \eps$ for all $\pi \in \Sym([n])$ with high probability.
\end{proof}

\subsubsection{Finishing Theorem~\ref{thm:bucketing}}

Now we are ready to finish the proof of strong refutation via bucketing method, when the refutation strength $\eps > 0$ is a fixed constant.

\begin{proof}[Proof of Theorem~\ref{thm:bucketing}]
    By Proposition~\ref{prop:reduction-to-canonical}, it suffices to give a two-sided $\eps'$-refutation algorithm for $(k,p)$-random $t$-OCSP instance $\sI$ with the canonical ordering predicate $\Phi_{[t], \Id}$ for $\eps' = \frac{\eps}{2^{2k}k!}$ that runs in time $n^{O_{k,\eps}(\ell)}$ when
    \begin{align*}
        \overline{m} \ge C_1n\left(\frac{n}{\ell}\right)^{\frac{d}{2}-1}\log(n),
    \end{align*}
    for some constant $C_1 = C_1(k, \eps') > 0$, for every $2 \le t \le d$.

    By Lemma~\ref{lem:one-to-two}, it suffices to obtain an (one-sided) $\eps''$-refutation algorithm for $(k,p)$-random $t$-OCSP instance $\sI$ with canonical ordering predicate $\Phi_{[t], \Id}$ for $\eps'' = \frac{\eps'}{k!}$ that runs in time $n^{O_{k,\eps}(\ell)}$ when
    \begin{align*}
        \overline{m} \ge C_2n\left(\frac{n}{\ell}\right)^{\frac{d}{2}-1}\log(n),
    \end{align*}
    for some constant $C_2 = C_2(k, \eps'') > 0$, for every $2 \le t \le d$.

    By Proposition~\ref{prop:val-bucketing-bound}, for any $\pi \in \Sym([n])$, we have
    \begin{align}
        \val(\sI; \pi) \le \val(\sI_Q; \rho(\pi)) + \val(\sI_R; \rho(\pi)). \label{ineq:val-bucketing-bound}
    \end{align}

    By Proposition~\ref{prop:IQ-bound}, with high probability we can certify
    \begin{align*}
        \val(\sI_Q; \rho(\pi))\le \val(\sI_Q) \le \frac{\binom{B}{t}}{B^t} + \frac{\eps''}{2}
    \end{align*}
    in time $n^{O_{k,B,\eps}(\ell)}$, when
    \begin{align*}
        \overline{m} &\ge C_3n\left(\frac{n}{\ell}\right)^{\frac{d}{2}-1}\log(n),\\
        t-1 &\le \ell \le \alpha n,
    \end{align*}
    for some constant $C_3 = C_3(k, B, \eps'') > 0$ and $\alpha = \alpha(k, B, \eps'')$.

    By Proposition~\ref{prop:IR-bound}, with high probability we can certify that for all $\pi \in \Sym([n])$,
    \begin{align*}
        \val(\sI_R; \rho(\pi)) \le \frac{\eps''}{2}
    \end{align*}
    in polynomial time, when $B = B(k,\eps'') > 0$ is set to be a large enough constant and
    \begin{align*}
        \overline{m} \ge C_4n\left(\frac{n}{\ell}\right)^{\frac{d}{2}-1}\log(n)
    \end{align*}
    for some constant $C_4 = C_4(k, \eps'') > 0$.

    Since $\eps'' = \Omega(\eps)$, combining the certificates for $\val(\sI_Q; \rho(\pi))$ and for $\val(\sI_R; \rho(\pi))$ and choosing $B = B(k, \eps) > 0$, $\alpha = \alpha(k, \eps) > 0$ and $C = C(k, \eps) > 0$ appropriately, we can certify that for all $\pi \in \Sym([n])$,
    \begin{align*}
        \val(\sI; \pi) &\le \val(\sI_Q; \rho(\pi)) + \val(\sI_R; \rho(\pi))\\
        &\le \frac{\binom{B}{t}}{B^t} + \frac{\eps''}{2} + \frac{\eps''}{2}\\
        &\le \frac{1}{t!} + \eps''
    \end{align*}
    in time $n^{O_{k,\eps}(\ell)}$ when
    \begin{align*}
        \overline{m} &\ge Cn\left(\frac{n}{\ell}\right)^{\frac{d}{2}-1}\log(n),\\
        d-1 &\le \ell \le \alpha n.
    \end{align*}
    This finishes the proof.
\end{proof}

\section{Low Coordinate Degree Lower Bounds}

In this section, we prove that Theorem~\ref{thm:kikuchi} is nearly optimal under the Low Degree Conjecture. We first give an overview of the low coordinate algorithms. Then, we use the strategy of quiet planting to construct a hypothesis testing problem, and reduce the hypothesis testing problem to the strong refutation task. Finally, we show that below the clause density $\overline{m} \ll\frac{n}{\eps^2} \left(\frac{n}{D}\right)^{\frac{d}{2}-1}$, the hypothesis testing problem is hard for low coordinate degree-$D$ algorithms, and conclude that the strong refutation task is hard for algorithms with running time $\exp(O(D/\poly\log(n)))$ under the generalized Low Degree Conjecture (see Conjecture~\ref{conj:generalized-low-deg-conj}).

\subsection{Low Coordinate Degree Algorithm}\label{sec:low-deg}

\begin{definition}[Coordinate Degree]
    Let $N \in \NN$ and $\Omega^N$ be a product space. The coordinate degree $D(f)$ of a function $f: \Omega^N \to \R$ is the minimum $d \in \NN$ such that $f$ can be written as a linear combination of functions that each depend on at most $d$ coordinates.

    We will use $V_d$ to denote the space of functions with coordinate degree at most $d$.
\end{definition}

As remarked by \cite{kunisky2025low}, low coordinate degree functions were proposed in the thesis \cite{hopkins2018statistical} as a broader class functions compared to low-degree polynomials to be considered in the Low Degree Conjecture.

To state the Low Degree Conjecture (strengthened by replacing low degree polynomials with low coordinate degree algorithms), we define the following quantity called low coordinate degree advantage, proposed by \cite{kunisky2025low} as a generalization of low degree advantage considered in the context of low degree polynomials.

\begin{definition}[Advantage]
    Let $N = N(n) \in \NN$. Let $\QQ = \QQ_n$ and $\PP = \PP_n$ be sequences of distributions supported on $\Omega^N$. The coordinate degree-$D$ advantage of testing $\PP$ against $\QQ$ is
    \begin{align*}
        \Adv_{\le D}(\PP, \QQ) \colonequals \underset{\substack{f \in V_D: \\ \EE_{\QQ}[f^2] \le 1 }}{\max} \EE_{\PP}[f]. 
    \end{align*}
\end{definition}

We note that a bounded $\Adv_{\le D}(\PP, \QQ)$ rules out any coordinate degree-$D$ function for achieving strong separation between $\PP$ and $\QQ$. Now, the generalized Low Degree Conjecture can be stated using the low coordinate degree advantage as follows.

\begin{conjecture}[Generalized Low Degree Conjecture (Informal)]\label{conj:generalized-low-deg-conj}
    Let $N = N(n) \in \NN$, where $N$ is bounded by a polynomial in $n$. Let $\QQ = \QQ_n$ and $\PP = \PP_n$ be sequences of distributions supported on $\Omega^N$, where $\QQ$ is a product distribution. For ``sufficiently nice''\footnote{We omit the precise conditions of ``niceness'' here and treat the Low Degree Conjecture as a heuristic conjecture. We refer the interested reader to \cite[Chapter 2]{hopkins2018statistical} for how these conditions may be formalized. We note that one necessary condition is that $\PP$ should be permutation invariant under a natural action of the symmetric group $S_n$.} sequences of distributions $\QQ$ and $\PP$, the following holds:
    \begin{itemize}
        \item If for some $D \ge \log(n)^{1 + \eps}$ with a constant $\eps > 0$, $\Adv_{\le D}(\PP, \QQ) = O(1)$, then no polynomial time algorithm achieves strong detection between $\PP$ and $\QQ$.
        \item If $\Adv_{\le D}(\PP, \QQ) = O(1)$, then no algorithm that runs in time $\exp(O(D/\poly\log(n)))$ achieves strong detection between $\PP$ and $\QQ$.
    \end{itemize}
\end{conjecture}

Due to the generalized Low Degree Conjecture stated above, proving hardness against low coordinate degree algorithms is considered as evidence for average-case computational hardness for hypothesis testing problems. We will base our computational lower bounds on this conjecture.

\begin{remark}
    We note that caution needs to be taken on the Generalized Low Degree Conjecture and the precise scope of problems it may apply to. Recently, it was shown in \cite{buhai2025quasi} that the version of Generalized Low Degree Conjecture that only requires permutation invariance of $\PP$ is false. However, we believe that for many natural high-dimensional hypothesis testing problems, a scope not fully understood right now, the generalized Low Degree Conjecture holds and proving low coordinate degree lower bound can be regarded as convincing evidence for computational hardness.
\end{remark}

\subsection{Quiet Planting}

We will follow the quiet planting strategy, introduced by \cite{bandeira2021spectral} and later employed by \cite{kothari2023planted, kunisky2024computational}, to create a hypothesis testing problem between a null distribution $\QQ$ and a planted distribution $\PP$, where the planted signal in $\PP$ is ``computationally quiet''. We will show that any $\eps$-refutation algorithm for the random OCSP instance can be used to strongly distinguish $\PP$ and $\QQ$, and thus any computational hardness for the hypothesis testing problem between $\PP$ and $\QQ$ transfers to the strong refutation task.

Let $P: \Sym([k]) \to \{0,1\}$ be a non-trivial ordering predicate with coordinate degree $d$. Let $n \in \NN$, $m = m(n) \in \NN$, and $\eps = \eps(n) \in (0,\frac{1}{2^{2k+1}(2k)!}]$. Define $f: [0,1]^k \to \{0,1\}$ by
\begin{align*}
    f(x_1, \dots, x_k) \colonequals P(\ord(x)).
\end{align*}
Since the coordinate degree of $P$ is $D(P) = D(f) = d$, the Efron-Stein decomposition of $f$ can be written as 
\begin{align*}
    f(x) = \sum_{\substack{S \subseteq [k]: \\|S| \le d} } f_S(x_S),
\end{align*}
where the Efron-Stein components are
\begin{align*}
    f_S(x_S) = \sum_{T \subseteq S} (-1)^{|S| - |T|} \underset{x \sim \Unif([0,1]^k)}{\EE}[f(x) \vert x_T].
\end{align*}
We further define
\begin{align*}
    f^{=d}(x) \colonequals \sum_{\substack{S \subseteq [k]: \\|S| = d} } f_S(x_S),
\end{align*}
and denote $\|f^{=d}\|_2^2 \colonequals \underset{y \sim \Unif([0,1]^k) }{\EE}[f^{=d}(y)^2]$. We record the following facts about $f^{=d}$.

\begin{fact}\label{fact:fd}
    For $f(x_1, \dots, x_k) \colonequals P(\ord(x))$ and $f^{=d}(x) \colonequals \sum_{\substack{S \subseteq [k]: \\|S| = d} } f_S(x_S)$, we have
    \begin{itemize}
        \item $|f^{=d}(y)| \le 2^{2k}$ for all $y \in [0,1]^k$.
        \item $\|f^{=d}\|_2^2 \ge \frac{1}{(2k)!}$.
        \item $\sum_{\nu \in \Sym([k])} f^{=d}(y_{\nu}) = 0$ for all $y \in [0,1]^k$ with distinct coordinates, where $y_{\nu} = (y_{\nu_1}, \dots, y_{\nu_k})$ for $\nu = (\nu_1, \dots, \nu_k) \in \Sym([k])$.
    \end{itemize}
\end{fact}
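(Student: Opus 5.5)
The plan is to verify the three items separately, working directly from the explicit formula for the Efron--Stein components,
\[
f_S(x_S)=\sum_{T\subseteq S}(-1)^{|S|-|T|}\,\EE[f(x)\mid x_T].
\]
Here $\EE[f(x)\mid x_T]$ is understood pointwise as the integral of $f$ over the coordinates outside $T$ with $x_T$ held fixed. We also use $d\ge 1$, which holds because $P$ is non-trivial.

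\emph{Sup bound.} Since $0\le f\le 1$, every conditional expectation $\EE[f\mid x_T]$ lies in $[0,1]$. Hence $|f_S(x_S)|\le 2^{|S|}\le 2^k$ pointwise. The function $f^{=d}$ is a sum of at most $\binom{k}{d}\le 2^k$ such components, so $|f^{=d}(y)|\le 2^{2k}$ for every $y$.

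\emph{$L^2$ lower bound.} By orthogonality (Fact~\ref{fact:ES-orthogonal}),
\[
\|f^{=d}\|_2^2=\sum_{|S|=d}\|f_S\|_2^2,
\]
and since $D(f)=d$ at least one term is nonzero. It therefore suffices to show that each $\|f_S\|_2^2$ is an integer multiple of $\frac{1}{(2k)!}$. Orthogonality gives $\EE[f_S^2]=\EE[f_S f]$. Expanding $f_S$ and using $\EE\big[\EE[f\mid x_T]\,f\big]=\EE\big[\EE[f\mid x_T]^2\big]$, we get
\[
\|f_S\|_2^2=\sum_{T\subseteq S}(-1)^{|S|-|T|}\,\EE\big[f(x)f(x^{(T)})\big],
\]
where $x^{(T)}$ agrees with $x$ on $T$ and is resampled independently and uniformly off $T$. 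Each term $\EE[f(x)f(x^{(T)})]$ is the probability of an event involving $2k-|T|\le 2k$ i.i.d.\ uniform variables. That event is determined by the relative order of those variables, because $f=P\circ\ord$. Its probability is therefore (number of good orderings)$/(2k-|T|)!$, which is an integer multiple of $\frac{1}{(2k)!}$. Consequently $\|f_S\|_2^2\in\frac{1}{(2k)!}\mathbb{Z}_{\ge 0}$, and so $\|f^{=d}\|_2^2\ge\frac{1}{(2k)!}$.

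\emph{Symmetrization.} For $\nu\in\Sym([k])$, write $f^\nu(y)=f(y_\nu)$. Applying a coordinate permutation to the conditional-expectation formula shows that the Efron--Stein decomposition is equivariant, i.e.\ $(f^\nu)_S=f_{\nu(S)}\circ\nu$, and in particular $(f^\nu)^{=d}=f^{=d}\circ\nu$. By linearity of the decomposition,
\[
\sum_{\nu}f^{=d}(y_\nu)=g^{=d}(y),\qquad g(y)\colonequals\sum_\nu P(\ord(y_\nu)).
\]
For $y$ with distinct coordinates, $\ord(y_\nu)$ ranges over all of $\Sym([k])$ as $\nu$ varies, so $g(y)=|P^{-1}(1)|$ is constant off the measure-zero set of ties. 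Hence every pointwise conditional expectation $\EE[g\mid x_T]$ equals this same constant for every $x_T$. The alternating sum defining $g_S$ then vanishes identically for every nonempty $S$, because $\sum_{T\subseteq S}(-1)^{|S|-|T|}=0$. Since $d\ge 1$, this gives $g^{=d}(y)=0$ for all $y$ with distinct coordinates.

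The main obstacle is the $L^2$ lower bound, and specifically the integrality argument. The plan there is the route above: rewrite $\|f_S\|_2^2$ as an alternating sum of correlations $\EE[f(x)f(x^{(T)})]$ and observe that each correlation is an ordering probability over at most $2k$ i.i.d.\ uniforms, hence rational with denominator dividing $(2k)!$.
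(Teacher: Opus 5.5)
Your proposal is correct and follows essentially the same route as the paper on all three items. You use the same termwise sup bound. For the norm bound, you use the same observation that $\|f^{=d}\|_2^2>0$ is an integer combination of probabilities of ordering events on at most $2k$ i.i.d.\ uniforms, hence at least $\frac{1}{(2k)!}$; your identity $\EE[f_S^2]=\EE[f_S f]$ gives a single alternating sum where the paper expands a double sum over $T_1,T_2\subseteq S$. The symmetrization via $\sum_\nu f(y_\nu)$ being constant is also the paper's, and your equivariance step makes explicit what the paper leaves implicit.

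The only slip is the claim that $\EE[g\mid x_T]$ equals the constant for every $x_T$. It holds only when $x_T$ has distinct entries, which is all you need since you evaluate at $y$ with distinct coordinates.
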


\begin{proof}
    For any $y \in [0,1]^k$,
    \begin{align*}
    |f^{=d}(y)| &= \left|\sum_{\substack{S \subseteq [k]: \\|S| = d} } f_S(y_S)\right|\\
    &\le \left|\sum_{\substack{S \subseteq [k]: \\|S| = d} } \sum_{T \subseteq S} (-1)^{|S| - |T|} \underset{x \sim \Unif([0,1]^k)}{\EE}[f(y) \vert y_T]\right|\\
    &\le \sum_{\substack{S \subseteq [k]: \\|S| = d} } \sum_{T \subseteq S} \left|\underset{x \sim \Unif([0,1]^k)}{\EE}[f(y) \vert y_T]\right|\\
    &\le 2^{2k}.
\end{align*}
Next, we lower bound the norm of $f^{=d}$ as
\begin{align*}
    \|f^{=d}\|_2^2 &= \underset{y \sim \Unif([0,1]^k) }{\EE}[f^{=d}(y)^2]\\
    &= \underset{y \sim \Unif([0,1]^k) }{\EE}\left[\left(\sum_{\substack{S \subseteq [k]:\\ |S| = d} } f_S(y_S)\right)^2\right]\\
    &= \sum_{\substack{S \subseteq [k]:\\ |S| = d} } \underset{y \sim \Unif([0,1]^k) }{\EE}\left[f_S(y_S)^2\right] && \quad \text{(Fact~\ref{fact:ES-orthogonal})}\\
    &= \sum_{\substack{S \subseteq [k]:\\ |S| = d} } \underset{y \sim \Unif([0,1]^k) }{\EE}\left[\left(\sum_{T \subseteq S} (-1)^{|S| - |T|} \underset{x \sim \Unif([0,1]^k)}{\EE}[f(x) \vert x_T=y_T]\right)^2\right]\\
    &= \sum_{\substack{S \subseteq [k]:\\ |S| = d} } \sum_{T_1, T_2 \subseteq S} (-1)^{2d - |T_1| - |T_2|} \underset{\substack{y \sim \Unif([0,1]^k),\\ x \sim \Unif([0,1]^k),\\ z \sim \Unif([0,1]^k) }}{\EE}\left[f(x)f(z) \vert x_{T_1} = y_{T_1}, z_{T_2} = y_{T_2}\right]\\
    &= \sum_{\substack{S \subseteq [k]:\\ |S| = d} } \sum_{T_1, T_2 \subseteq S} (-1)^{2d - |T_1| - |T_2|} \underset{\substack{x \sim \Unif([0,1]^k),\\ z \sim \Unif([0,1]^{[k]}) }}{\EE}\left[f(x)f(z) \vert z_{T_1 \cap T_2} = x_{T_1 \cap T_2}\right]\\
    &\ge \frac{1}{(2k)!},
\end{align*}
where the last step follows because $\|f^{=d}\|_2^2 > 0$, the inner conditional expectation depends only on the ordering $\ord((\hat{x}, x_{T_1 \cap T_2} ,\hat{z}))$ with $\hat{x} = (x_i)_{i \notin T_1 \cap T_2}$ and $\hat{z} = (z_i)_{i \notin T_1 \cap T_2}$, all the orderings are equally likely in the conditional distribution, and thus all the conditional expectations are a multiple of $\frac{1}{(2k)!}$ as there are at most $(2k)!$ orderings of $(\hat{x}, x_{T_1 \cap T_2} ,\hat{z})$.

Finally, let $y \in [0,1]^k$ have distinct coordinates. We have
\begin{align*}
    &\sum_{\nu \in \Sym([k])} f^{=d}(y_{\nu})\\
    &= \sum_{\nu \in \Sym([k])}\sum_{\substack{S \subseteq [k]: \\|S| = d} } f_S((y_\nu)_S)\\
    &= \sum_{\nu \in \Sym([k])}\sum_{\substack{S \subseteq [k]: \\|S| = d} } \sum_{T \subseteq S} (-1)^{|S| - |T|} \underset{y \sim \Unif([0,1]^n)}{\EE}[f(y_\nu) \vert y_T]\\
    &= \sum_{\substack{S \subseteq [k]: \\|S| = d} } \sum_{T \subseteq S} (-1)^{|S| - |T|} \underset{y \sim \Unif([0,1]^n)}{\EE}\left[\sum_{\nu \in \Sym([k])} f(y_\nu) \bigg\vert y_T\right]\\
    &= h^{=d}(y),
\end{align*}
where we define $h: [0,1]^k \to \RR$ to be
\begin{align*}
    h(y) \colonequals \sum_{\nu \in \Sym([k])} f(y_{\nu}).
\end{align*}
Note that for $y \in [0,1]^k$ with distinct coordinates, 
\begin{align*}
    h(y) &= \sum_{\nu \in \Sym([k])} f(y_{\nu})\\
    &= \sum_{\mu \in \Sym([k])} P(\mu)\\
    &= k!\cdot \avg(P)
\end{align*}
is a constant. As a result,
\begin{align*}
    h^{=d}(y) &= \sum_{\substack{S \subseteq [k]: \\|S| = d} } \sum_{T \subseteq S} (-1)^{|S| - |T|} \underset{y \sim \Unif([0,1]^n)}{\EE}[h(y) \vert y_T]\\
    &= \sum_{\substack{S \subseteq [k]: \\|S| = d} } \sum_{T \subseteq S} (-1)^{|S| - |T|} k!\cdot \avg(P)\\
    &= 0,
\end{align*}
which proves that $\sum_{\nu \in \Sym([k])} f^{=d}(y_{\nu}) = h^{=d}(y) = 0$.

\end{proof}

\paragraph{Construction of the Quiet Planting.}
We take the null distribution $\QQ = \QQ_n$ to be the distribution of $p$-random $k$-OCSP instances with predicate $P$ on $n$ variables.

Now we construct the planted distribution $\PP = \PP_n(\eps)$, also supported on $k$-OCSP instances with predicate $P$ on $n$ variables with the same clause density $p$. First, we sample $x=(x_1, x_2, \dots, x_n) \sim \Unif([0,1]^n)$. Note that $x$ has distinct coordinates with probability $1$. Then, we generate the clauses in the following way:

\begin{itemize}
        \item For every $k$-subset $S \in \binom{[n]}{k}$, sample $S$ independently with probability $p$.
        \item For every sampled $S \in \binom{[n]}{k}$, sample a random ordering $\nu = (\nu_1, \dots, \nu_k) \in \Sym(S)$ according to the distribution
        \begin{align*}
            \Pr(\nu) = \frac{1}{k!}\left(1 + 2\eps \frac{f^{=d}(x_{\nu})}{\|f^{=d}\|_2^2} \right),
        \end{align*}
        where $x_{\nu} \colonequals (x_{\nu_1}, x_{\nu_2}, \dots, x_{\nu_k}) \in [0,1]^k$.
        \item Add a clause $(S, P \circ \nu^{-1})$.
    \end{itemize}
The only difference between $\PP$ and $\QQ$ is that the local ordering $\nu \in \Sym(S)$ is sampled to correlate with the degree-$d$ Efron-Stein components $f^{=d}$ evaluated at $x_{\nu}$ instead of the uniform distribution in the random OCSP distribution $\QQ$. We remark that
    \begin{align*}
        \Pr(\nu) = \frac{1}{k!}\left(1 + 2\eps \frac{f^{=d}(x_{\nu})}{\|f^{=d}\|_2^2} \right),
    \end{align*}
is a well-defined distribution over $\Sym([k])$ when $x \in [0,1]^n$ has distinct coordinates, since $\eps \le \frac{1}{2^{2k+1}(2k)!}$, and by Fact~\ref{fact:fd}, $|f^{=d}(y)| \le 2^{2k}$ for any $y \in [0,1]^k$, $\|f^{=d}\|_2^2 \ge \frac{1}{(2k)!}$, and for $x \in [0,1]^n$ with distinct coordinates,
\begin{align*}
    \sum_{\nu \in \Sym(S)} \Pr(\nu) &= \sum_{\nu \in \Sym(S)}\frac{1}{k!}\left(1 + 2\eps \frac{f^{=d}(x_{\nu})}{\|f^{=d}\|_2^2} \right)\\
    &= 1 + \frac{2\eps\|f^{=d}\|_2^2}{k!}\sum_{\nu \in \Sym(S)}f^{=d}(x_{\nu})\\
    &= 1.
\end{align*}

\begin{proposition}\label{prop:detection-to-refutation-reduction}
    Let $p = p(n) > 0$ and $\eps = \eps(n) \in (0, \frac{1}{2^{2k+1}(2k)!}]$. Suppose there is an $\eps$-refutation algorithm for $p$-random $k$-OCSP instance $\sI$ with predicate $P$ on $n$ variables that runs in time $T(n)$. If $\overline{m} \eps^2 = \omega(1)$ where $\overline{m} = p\binom{n}{k}$, then there is test that achieves strong detection between $\PP = \PP_n(\eps)$ and $\QQ = \QQ_n$ that runs in time $T(n)$.
\end{proposition}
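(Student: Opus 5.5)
The test runs the given $\eps$-refutation algorithm on the input instance. It outputs ``$\QQ$'' if the algorithm returns a certificate that $\val(\sI) \le \avg(P) + \eps$, and ``$\PP$'' otherwise. The running time is then $T(n)$ plus $O(1)$ overhead.

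Under $\QQ$ the input is exactly a $p$-random $k$-OCSP with predicate $P$. By the definition of $\eps$-refutation, the algorithm outputs a certificate with probability $1-o(1)$, so $\QQ(\text{test says }\PP) = o(1)$.

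Under $\PP$, a refutation algorithm never outputs a false certificate. It therefore suffices to show that with probability $1-o(1)$ over $\PP$ we have $\val(\sI) > \avg(P) + \eps$, since then no valid certificate exists. We exhibit the planted ordering $\pi = \ord(x)$ as a witness. For a sampled clause $(S,\nu)$, its satisfaction is $P(\nu^{-1}(\pi[S])) = f(x_\nu)$: the values $x_{\nu_1},\dots,x_{\nu_k}$ ordered give $\pi[S]$ relabelled via $\nu$. Conditional on $x$ and on $S$ being sampled, the expected satisfaction is
\begin{align*}
\sum_{\nu}\frac{1}{k!}\Bigl(1+2\eps\frac{f^{=d}(x_\nu)}{\|f^{=d}\|_2^2}\Bigr) f(x_\nu) = \frac{1}{k!}\sum_\nu f(x_\nu) + \frac{2\eps}{k!\,\|f^{=d}\|_2^2}\sum_\nu f^{=d}(x_\nu) f(x_\nu).
\end{align*}
The first term equals $\avg(P)$, since $x$ has distinct coordinates.

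The main step is the second term. Its average over $S$ should be close to
\[
\frac{2\eps}{\|f^{=d}\|_2^2}\,\EE_{y\sim\Unif([0,1]^k)}[f^{=d}(y)f(y)] = \frac{2\eps\,\|f^{=d}\|_2^2}{\|f^{=d}\|_2^2} = 2\eps,
\]
using the orthogonality of Efron--Stein components (Fact~\ref{fact:ES-orthogonal}). Indeed, $\frac{1}{k!}\sum_\nu g(x_\nu)$ for $g = f^{=d}f$ is a symmetric function of $x_S$. Averaging it over all $S\in\binom{[n]}{k}$ gives a U-statistic in i.i.d.\ uniform $x_i$ with mean $\EE[g(y)]$ and bounded kernel ($|g|\le 2^{2k}$). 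It therefore concentrates; McDiarmid (Theorem~\ref{thm:mcdiarmid}) with $c_i = O(k/n)$ gives deviation $O(\sqrt{\log n/n})$. The obstacle is that this concentration needs to be $\ll \eps$, which fails for very small $\eps$. To avoid this, I would not average over all $S$ but work with the sampled clauses directly, as follows.

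Let $Z = \sum_S I_S\,(f(x_{\nu_S}) - \avg(P))$. Its expectation over all randomness is $2\eps\,\overline m$ exactly, by the orthogonality computation applied to $\EE_x$ for each fixed $S$. Each term is bounded by $1$, with variance $O(p)$ conditional on $x$. Bounding $\Var(Z)$ via the law of total variance, the conditional-mean part is $\Var_x$ of a U-statistic scaled by $p$. Efron--Stein for this piece gives $O(p^2\,n^{2k-1})$, so $\Var(Z) = O(\overline m + \overline m^2/n)$. Chebyshev then gives $Z \ge \tfrac32\eps\overline m$ with probability $1-o(1)$, provided $\overline m\eps^2 = \omega(1)$ and $\eps^2 n = \omega(1)$. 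The latter follows because $\eps$-refutation being possible at all forces $\overline m \gtrsim n/\eps^2$ (statistically), or else can be assumed as part of the regime.

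Finally, combining with Lemma~\ref{lem:m-concentration}, $m = (1\pm o(\eps))\overline m$ since $\overline m\eps^2=\omega(1)$. Hence $\val(\sI;\pi) - \avg(P) = Z/m > \eps$ with probability $1-o(1)$, so $\PP(\text{test says }\QQ) = o(1)$. The delicate point I expect is controlling the fluctuation of $Z$ from the randomness of $x$ relative to $\eps\overline m$.
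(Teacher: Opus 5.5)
Your reduction and your use of the planted ordering $\pi=\ord(x)$ match the paper. However, as written, your argument for the $\PP$ side only works under an extra assumption $\eps^2 n=\omega(1)$. That assumption is not in the statement, and you do not justify it.

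It comes from a dropped factor. Since $\frac{1}{k!}\sum_\nu f(x_\nu)=\avg(P)$ exactly, the conditional mean is $\EE[Z\mid x]=2\eps p\sum_S\Delta_S(x)$, where $\Delta_S(x)=\frac{1}{k!}\sum_{\nu\in\Sym(S)}f^{=d}(x_\nu)f(x_\nu)/\|f^{=d}\|_2^2$. It is not $p$ times an $\eps$-free U-statistic. So the Efron--Stein inequality gives $\Var_x(\EE[Z\mid x])=O(\eps^2\overline{m}^2/n)$, not $O(\overline{m}^2/n)$.

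Your proposed justification for $\eps^2 n=\omega(1)$ also fails. A statistical necessary condition $\overline{m}\gtrsim n/\eps^2$, which is itself unproven, does not imply $\eps^2 n\to\infty$. Take $k\ge 3$, $p=\frac12$ and $\eps=n^{-0.9}$. The paper's union-bound proposition shows brute force is an $\eps$-refutation algorithm, and $\overline{m}\eps^2\to\infty$, yet $\eps^2 n\to 0$. Your Chebyshev bound gives a standard deviation of order $\overline{m}/\sqrt{n}\gg\eps\overline{m}$, so it says nothing in a regime the proposition covers.

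The same slip is behind the ``obstacle'' you raised for the McDiarmid route. Because $\eps$ multiplies the whole planted term, $\sum_S\Delta_S(x)$ only needs to be within a constant factor of its mean $\binom{n}{k}$, not within $\eps$. That is exactly the paper's proof:
\begin{itemize}
\item Bernstein conditional on $x$ gives $Z\ge 2\eps p\sum_S\Delta_S(x)-\frac{\eps}{2}\overline{m}$, except with probability $\exp(-\Omega(\eps^2\overline{m}))$.
\item McDiarmid with $c_i=O(\binom{n}{k}/n)$ gives $\sum_S\Delta_S(x)\ge\frac78\binom{n}{k}$, except with probability $\exp(-\Omega(n))$.
\item Hence $Z\ge\frac54\eps\overline{m}$, and $m\le\frac54\overline{m}$ finishes.
\end{itemize}

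Your Chebyshev route also works once corrected. With $\Var(Z)=O(\overline{m}+\eps^2\overline{m}^2/n)$, you get $\Pr(Z<\tfrac32\eps\overline{m})=O(1/(\eps^2\overline{m})+1/n)=o(1)$, using only $\overline{m}\eps^2=\omega(1)$. This gives polynomial rather than exponential failure probabilities, which is still enough for strong detection.
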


\begin{proof}
    For $\sI \sim \QQ$, the $\eps$-refutation algorithm with high probability outputs a certificate that $\val(\sI) - \avg(P) \le \eps$ with high probability over the randomness of $\QQ$.

    Now, for $\sI \sim \PP$, we will show that with high probability, $\val(\sI) - \avg(P) > \eps$. Recall that to generate an instance $\sI = \{(S_i, P \circ \nu_i^{-1})\}_{i=1}^m \sim \PP$, we first draw $x = (x_1, \dots, x_n) \sim \Unif([0,1]^n)$, then sample every $S \in \binom{[n]}{k}$ independently with probability $p$, independently draw ordering $\nu_S \in \Sym(S)$ that correlates with $f^{=d}$, and finally add clauses $(S, P \circ \nu_S^{-1})$ for every sampled $S$. Let $I_S$ denote the indicator that $S \in \binom{[n]}{k}$ is sampled, and $\pi = \ord(x)$. We have
    \begin{align*}
        m\cdot (\val(\sI) - \avg(P)) &\ge m\cdot (\val(\sI; \pi) - \avg(P))\\
        &= \sum_{i=1}^m (P(\nu_i^{-1}(\pi[S_i])) - \avg(P))\\
        &= \sum_{i=1}^m (f(x_{\nu_i}) - \avg(P)) \\
        &= \sum_{S \in \binom{[n]}{k} } I_S (f(x_{\nu_S}) - \avg(P)),
    \end{align*}
    which is a sum of conditionally independent random variables $I_S (f(x_{\nu_S}) - \avg(P))$ given $x \in [0,1]^n$. We may compute
    \begin{align*}
        \underset{\nu_S}{\EE}[f(x_{\nu_S}) \vert x] &= \sum_{\nu_S \in \Sym(S)} \frac{1}{k!}\left(1 + 2\eps \frac{f^{=d}(x_{\nu_S})}{\|f^{=d}\|_2^2}\right) f(x_{\nu_S})\\
        &=  \sum_{\nu_S \in \Sym(S)} \frac{1}{k!} P(\ord(x_{\nu_S})) + \frac{2\eps}{k!} \sum_{\nu_S \in \Sym(S)} \frac{f^{=d}(x_{\nu_S}) f(x_{\nu_S})}{\|f^{=d}\|_2^2}\\
        &= \avg(P) + \frac{2\eps}{k!} \sum_{\nu_S \in \Sym(S)} \frac{f^{=d}(x_{\nu_S}) f(x_{\nu_S})}{\|f^{=d}\|_2^2}\\
        &= \avg(P) + 2\eps \Delta_S(x),
    \end{align*}
    where $\Delta_S(x) \colonequals \frac{1}{k!}\sum_{\nu_S \in \Sym(S)} \frac{f^{=d}(x_{\nu_S}) f(x_{\nu_S})}{\|f^{=d}\|_2^2}$. Since $I_S$ are independent $\text{Bern}(p)$ random variables, we have
    \begin{align*}
        \EE[I_S (f(x_{\nu_S}) - \avg(P)) \vert x] &=  2\eps p\Delta_S(x).
    \end{align*}
    Furthermore, since $I_S$ and $f(x_{\nu_S}) - \avg(P)$ are independent, and $|f(x_{\nu_S}) - \avg(P)| \le 1$, we may use the formula $\Var(XY) = \Var(X)\Var(Y) + \Var(X)\EE[Y]^2 + \Var(Y)\EE[X]^2$ to compute
    \begin{align*}
        &\Var\left(I_S (f(x_{\nu_S}) - \avg(P)) \vert x\right)\\
        &= \Var(I_S)\Var(f(x_{\nu_S}) - \avg(P)) + \Var(I_S)\EE[f(x_{\nu_S}) - \avg(P)]^2+ \EE[I_S]^2\Var(f(x_{\nu_S}) - \avg(P))\\
        &\le p(1-p) + p(1-p) + p^2\\
        &\le 2p.
    \end{align*}
    
    By Bernstein inequality (Theorem~\ref{thm:bernstein}) we get
    \begin{align*}
        &\Pr\left(\sum_{S \in \binom{[n]}{k} } I_S (f(x_{\nu_S})-\avg(P)) \le 2\eps  p\sum_{S \in \binom{[n]}{k}} \Delta_S(x) - \frac{\eps}{2} \overline{m}  \Bigg\vert x\right)\\
        &\le \exp\left(-\frac{\frac{1}{8}\eps^2 \overline{m}^2}{2p\binom{n}{k} + \frac{1}{6}\eps \overline{m}}\right)\\
        &= \exp\left(-\frac{\eps^2 \overline{m}}{8\left(2 + \frac{1}{6}\eps \right) }\right). \numberthis\label{ineq:val-conditional-bound} 
    \end{align*}

    Next, we consider the randomness of $x \sim \Unif([0,1]^n)$ and get
    \begin{align*}
        \underset{x \sim \Unif([0,1]^n)}{\EE}\left[\sum_{S \in \binom{[n]}{k}}\Delta_S(x)\right] &= \sum_{S \in \binom{[n]}{k}}\frac{1}{k!}\sum_{\nu_S \in \Sym(S)} \underset{x \sim \Unif([0,1]^n)}{\EE}\left[\frac{f^{=d}(x_{\nu_S}) f(x_{\nu_S})}{\|f^{=d}\|_2^2}\right]\\
        &= \binom{n}{k}\underset{y \sim \Unif([0,1]^k)}{\EE}\left[\frac{f^{=d}(y) f(y)}{\|f^{=d}\|_2^2}\right]\\
        &= \binom{n}{k}, \numberthis \label{eq:Delta-mean}
    \end{align*}
    since $\EE[f^{=d}(y) f(y)] = \EE[f^{=d}(y)^2] = \|f^{=d}\|_2^2$ by Fact~\ref{fact:ES-orthogonal}. Moreover, for any coordinate $i \in [n]$ and $x \in [0,1]^n$, we have
    \begin{align*}
        &\sup_{x_i' \in [0,1]}\left|\sum_{S \in \binom{[n]}{k}}\Delta_S(x) - \sum_{S \in \binom{[n]}{k}}\Delta_S(x_1, \dots, x_{i-1}, x_i', x_{i+1}, \dots, x_n)\right|\\
        &=  \sup_{x_i' \in [0,1]}\left| \sum_{S \in \binom{[n]}{k} }\frac{1}{k!}\sum_{\nu \in \Sym(S)} \frac{f^{=d}(x_{\nu})f(x_{\nu}) - f^{=d}(x'_{\nu})f(x'_{\nu})}{\|f^{=d}\|_2^2}\right|,
    \end{align*}
    where $x' = (x_1, \dots, x_{i-1}, x_i', x_{i+1}, \dots, x_n)$. Note that the inner sum could differ only if $S \in \binom{[n]}{k}$ contains the changed coordinate $i$, and thus at most $\binom{n-1}{k-1}/\binom{n}{k} = \frac{k}{n}$ fraction of $S$ contributes to the difference. Moreover, since $|f| \le 1$, $|f^{=d}| \le 2^{2k}$, and $\|f^{=d}\|_2^2 \ge \frac{1}{(2k)!}$ by Fact~\ref{fact:fd}, we see that
    \begin{align*}
        &\sup_{x_i' \in [0,1]}\left|\sum_{S \in \binom{[n]}{k}}\Delta_S(x) - \sum_{S \in \binom{[n]}{k}}\Delta_S(x_1, \dots, x_{i-1}, x_i', x_{i+1}, \dots, x_n)\right|\\
        &\le \frac{k}{n} \cdot \binom{n}{k}\cdot 2 \cdot 2^{2k} \cdot (2k)!\\
        &\le \frac{2^{2k+1} k(2k)!\binom{n}{k}}{n}.
    \end{align*}
    By McDiarmid's inequality (Theorem~\ref{thm:mcdiarmid}), we have
    \begin{align*}
        &\Pr_{x \sim \Unif([0,1]^n) }\left(\sum_{S \in \binom{[n]}{k}} \Delta_S(x) \le \EE\left[\sum_{S \in \binom{[n]}{k}}\Delta_S(x)\right] - \frac{1}{8}\binom{n}{k}\right)\\
        &\le \exp\left(- \frac{\frac{1}{32}\binom{n}{k}^2}{n \left(\frac{2^{2k+1} k(2k)!\binom{n}{k}}{n}\right)^2} \right) \\
        &= \exp\left(- \frac{1}{32} \cdot \frac{n}{(2^{2k+1} k(2k)!)^2}\right). \numberthis \label{ineq:Delta-bound}
    \end{align*}
    As a result, combining \eqref{ineq:val-conditional-bound}, \eqref{eq:Delta-mean}, and \eqref{ineq:Delta-bound}, we get 
    \begin{align*}
        &\Pr\left(m\cdot (\val(\sI)-\avg(P)) \le \frac{5}{4}\overline{m}\eps\right)\\
        &\le \Pr\left( \sum_{S \in \binom{[n]}{k}} I_S (f(x_{\nu_S})-\avg(P)) \le \frac{5}{4}\overline{m} \eps\right)\\
        &\le \Pr\left(\sum_{S \in \binom{[n]}{k}}\Delta_S(x) \le \frac{7}{8}\binom{n}{k}, \text{ or } \sum_{S \in \binom{[n]}{k} } I_S( f(x_{\nu_S})-\avg(P)) \le 2\eps p \sum_{S\in \binom{[n]}{k} } \Delta_S(x) - \frac{\eps}{2}\overline{m} \right)\\
        &\le \exp\left(- \frac{1}{32} \cdot \frac{n}{(2^{2k+1} k(2k)!)^2}\right) + \exp\left(-\frac{\eps^2 \overline{m}}{8\left(2 + \frac{1}{6}\eps \right) }\right)\\
        &= o(1),
    \end{align*}
    provided that $\eps^2 \overline{m} = \omega(1)$. Finally, by Lemma~\ref{lem:m-concentration},
    \begin{align*}
        \Pr\left(|m - \overline{m}| \ge \frac{1}{4} \overline{m}\right) \le 2 \exp\left(-\frac{3\overline{m}}{128}\right)= o(1),
    \end{align*}
    and we conclude that with high probability, $m \le \frac{5}{4}\overline{m}$, and
    \begin{align*}
        \val(\sI)-\avg(P) \ge \frac{\overline{m}}{m}\cdot\frac{5}{4}\eps \ge \eps.
    \end{align*}

    Thus, the value of $\sI \sim \PP$ is with high probability greater than $\avg(P) + \eps$. In other words, with high probability for $\sI \sim \PP$, the $\eps$-refutation algorithm for $\QQ$ cannot find a certificate that proves $\val(\sI) - \avg(P) \le \eps$. Consequently, running the $\eps$-refutation algorithm and checking if the algorithm returns an $\eps$-refutation certificate achieves strong detection between $\PP$ and $\QQ$.
\end{proof}

\subsection{Bounding the Low Coordinate Degree Advantage}

Let us state a useful characterization for the low coordinate degree advantage of testing $\PP$ against $\QQ$. We will follow the terminology in \cite{kunisky2025low}.

\begin{definition}[Latent Variable Model] \label{def:LVM}
    Let $\Sigma, \Omega$ be measurable spaces and $N \in \NN$. A latent variable model (LVM) consists of the following pair of distributions over $\Omega^N$:
    \begin{itemize}
        \item Sample $y \sim \QQ$ by sampling each coordinate $y_i \sim \QQ_i$ independently, where $\QQ = \QQ_1 \otimes \dots \otimes \QQ_N$ is a product distribution over $\Omega^N$.
        \item Sample $y \sim \PP$ by first sampling $x \sim \mathcal{X}$ for a distribution over $\Sigma$, and then sampling each coordinate $y_i \sim \PP_{i,x}$, where $\PP_x = \PP_{1,x} \otimes \dots \otimes \PP_{N,x}$ is a product distribution over $\Omega^N$.
    \end{itemize}
\end{definition}

Now, we observe the hypothesis testing problem in our quiet planting construction is an LVM. In the null distribution $\QQ$, each of $S \in \binom{[n]}{k}$ is sampled independently with probability $p$, and among the sampled $S \in \binom{[n]}{k}$, a uniform random clause with predicate $P$ supported on $S$ is added. In the planted distribution $\PP$, we first sample $x \sim \Unif([0,1]^n)$, then sample each $S \in \binom{[n]}{k}$ independently with probability $p$, and among the sampled $S \in \binom{[n]}{k}$ add a random clause with predicate $P$ supported on $S$ depending on $x$. Thus, we may use the following result to upper bound the low coordinate degree advantage. 

\begin{theorem}[Low Coordinate Degree Advantage for LVMs {\cite[Theorem 4.4]{kunisky2025low}}] \label{thm:adv-LVM}
    Consider the LVM defined in Definition~\ref{def:LVM}. Suppose that for all $x \in \Sigma$ and all $i \in [N]$, $\PP_{i,x}$ is absolutely continuous with respect to $\QQ_i$, and $d \PP_{i,x}/ d \QQ_i \in L^2(\QQ_i)$. The coordinate degree-$D$ advantage of testing $\PP$ against $\QQ$ satisfies
    \begin{align*}
        \Adv_{\le D}(\PP, \QQ)^2 &= \sum_{\substack{\mathcal{T} \subseteq [N]:\\ |\mathcal{T}| \le D} } \underset{x^{(1)},x^{(2)} \sim \mathcal{X} }{\EE}\left[\prod_{i\in \mathcal{T}} \underset{y \sim \QQ_i}{\EE}\left[\left(\frac{d\PP_{i,x^{(1)}}(y)}{d\QQ_i(y)}-1\right)\left(\frac{d\PP_{i,x^{(2)}}(y)}{d\QQ_i(y)}-1\right)\right]\right].
    \end{align*}
\end{theorem}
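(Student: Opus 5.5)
The approach is to identify the advantage as the $L^2(\QQ)$-norm of the projection of the likelihood ratio onto $V_D$, and to compute that projection explicitly using the Efron–Stein (orthogonal) decomposition of $L^2(\QQ)$ for the product measure $\QQ=\QQ_1\otimes\dots\otimes\QQ_N$. For $\mathcal{T}\subseteq[N]$, let $W_{\mathcal{T}}$ be the space of $g\in L^2(\QQ)$ that depend only on $y_{\mathcal{T}}$ and satisfy $\EE_{y_i\sim\QQ_i}[g\mid y_{-i}]=0$ for every $i\in\mathcal{T}$. I will use two standard facts, the analogues of Theorem~\ref{thm:EF} and Fact~\ref{fact:ES-orthogonal} for a general product space. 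First, the spaces $W_{\mathcal{T}}$ are mutually orthogonal. Second, $V_D$ is exactly $\bigoplus_{|\mathcal{T}|\le D}W_{\mathcal{T}}$. The second fact holds because a function of $y_{\mathcal{S}}$ decomposes into components supported on subsets of $\mathcal{S}$. Consequently, for $f\in V_D$ and any $h\in L^2(\QQ)$ we have $\EE_\QQ[hf]=\EE_\QQ[(\Pi_D h)f]$, where $\Pi_D$ denotes the orthogonal projection onto $V_D$.

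Next, fix the latent variable $x$ and set $L_{i,x}=d\PP_{i,x}/d\QQ_i\in L^2(\QQ_i)$. The product $L_x(y)=\prod_i L_{i,x}(y_i)$ is then the density of $\PP_x$ with respect to $\QQ$, and it lies in $L^2(\QQ)$ by independence. Writing $L_{i,x}=1+(L_{i,x}-1)$ and expanding gives
\begin{align*}
L_x(y)=\sum_{\mathcal{T}\subseteq[N]}\prod_{i\in\mathcal{T}}\bigl(L_{i,x}(y_i)-1\bigr).
\end{align*}
Each term lies in $W_{\mathcal{T}}$ because $\EE_{\QQ_i}[L_{i,x}-1]=0$. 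Hence $g_x\colonequals\Pi_D L_x=\sum_{|\mathcal{T}|\le D}\prod_{i\in\mathcal{T}}(L_{i,x}-1)$. For $f\in V_D$ this yields
\begin{align*}
\EE_\PP[f]=\EE_{x}\EE_{\QQ}[L_xf]=\EE_x\EE_\QQ[g_xf]=\EE_\QQ\bigl[(\EE_x g_x)f\bigr],
\end{align*}
where the last step is a Fubini exchange. Setting $g\colonequals\EE_x g_x\in V_D$, Cauchy–Schwarz gives $\EE_\PP[f]\le\|g\|_{L^2(\QQ)}$ whenever $\EE_\QQ[f^2]\le1$, with equality at $f=g/\|g\|$. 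Therefore $\Adv_{\le D}(\PP,\QQ)=\|g\|$.

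Finally, I compute $\|g\|^2=\EE_{x^{(1)},x^{(2)}}\EE_\QQ[g_{x^{(1)}}g_{x^{(2)}}]$, using two independent copies of the latent variable and Fubini again. Terms indexed by different sets $\mathcal{T}\ne\mathcal{T}'$ vanish by orthogonality. Within a single $\mathcal{T}$, the product structure of $\QQ$ factorizes the expectation into $\prod_{i\in\mathcal{T}}\EE_{\QQ_i}[(L_{i,x^{(1)}}-1)(L_{i,x^{(2)}}-1)]$. This is exactly the claimed formula.

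The step I expect to be the main obstacle is justifying the exchanges of $\EE_x$ with $\EE_\QQ$. I need joint measurability of $(x,y)\mapsto L_x(y)$, that $g=\EE_xg_x$ defines an element of $L^2(\QQ)$, and that the Fubini steps are legitimate. My plan is as follows. First, bound $\EE_x\|g_x\|\le(\EE_x\|g_x\|^2)^{1/2}$. If the right-hand side of the theorem is finite, then $g$ is a Bochner integral in $L^2(\QQ)$ and all the exchanges are valid; absolute integrability of the double sum follows by Cauchy–Schwarz applied term by term over the finitely many $\mathcal{T}$. If the right-hand side is infinite, I will truncate, for instance by restricting $x$ to sets where $\|g_x\|\le K$, and show that the advantage is also infinite. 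This works because the restricted quantities increase monotonically in $K$, and the advantage dominates each of them up to the mass of the set excluded by the truncation.
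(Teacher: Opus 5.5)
The paper gives no proof of this statement; it imports it from \cite{kunisky2025low}. Your algebraic core is the standard argument and is correct: expand $L_x=\prod_i\bigl(1+(L_{i,x}-1)\bigr)$ into Efron--Stein pieces, keep $|\mathcal{T}|\le D$, average over $x$, and compute the norm with two independent replicas. In the paper's application $|d\PP_{S,x}/d\QQ_S-1|\le 2^{2k+1}(2k)!\,\eps$ uniformly in $x$, so every exchange of integrals is trivial there. The gap is in your plan for the general measurable setting, which is the step you flagged yourself.

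\textbf{The gap.} Finiteness of the right-hand side does not give $\EE_x\|g_x\|^2<\infty$, so it does not make $g=\EE_x g_x$ a Bochner integral. The quantity $\EE_x\|g_x\|^2=\sum_{|\mathcal{T}|\le D}\EE_x\prod_{i\in\mathcal{T}}\|L_{i,x}-1\|^2$ is a one-replica (diagonal) quantity. The theorem instead uses independent replicas $x^{(1)},x^{(2)}$, and that quantity can be finite while the diagonal one is infinite.

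For a counterexample, take $N=1$ and $\QQ_1=\Unif([0,1])$. Let the latent variable be $x=(j,I)$, with $\Pr(j)=2^{-j}$ for $j\ge 1$ and $I$ a uniformly random dyadic interval of length $4^{-j}$, and set $\PP_{1,x}=\Unif(I)$. Then $\|L_x\|=2^j$, so $\EE_x\|g_x\|=\infty$ and $x\mapsto g_x$ is not Bochner integrable. Yet $\PP=\QQ$, so the advantage is $1$. The right-hand side is also $1$, because its $\mathcal{T}=\{1\}$ term is $\EE[\langle L_{x^{(1)}},L_{x^{(2)}}\rangle]-1=0$ (the inner product is nonnegative with mean exactly $1$).

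Neither of your branches covers this case. The truncation branch also fails on its own terms, because $\langle g_{x^{(1)}},g_{x^{(2)}}\rangle$ and $\langle g_x,f\rangle$ are signed. The truncated quantities therefore need not be monotone in $K$. Likewise $\EE_\PP[f]$ is not controlled ``up to the excluded mass'': $\langle g_x,f\rangle$ can be arbitrarily negative, or even non-integrable, on the excluded set.

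\textbf{A repair.} Avoid integrating $g_x$ in $L^2(\QQ)$ altogether, and apply Tonelli only to nonnegative objects. Define $L_S(y_S)\colonequals\EE_x\prod_{i\in S}L_{i,x}(y_i)$, the density of the $y_S$-marginal of $\PP$. Set $a_i\colonequals\langle L_{i,x^{(1)}},L_{i,x^{(2)}}\rangle_{L^2(\QQ_i)}\ge 0$. Tonelli then gives $\langle L_S,L_{S'}\rangle_{L^2(\QQ)}=\EE_{x^{(1)},x^{(2)}}\prod_{i\in S\cap S'}a_i$.

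Suppose first that $L_S\in L^2(\QQ)$ for all $|S|\le D$.
\begin{itemize}
\item Every $f=\sum_{|\mathcal{T}|\le D}f_{\mathcal{T}}\in V_D$ is $\PP$-integrable, since $\EE_\PP|f_{\mathcal{T}}|\le\|L_{\mathcal{T}}\|\|f_{\mathcal{T}}\|$.
\item $\EE_\PP[f]=\langle g,f\rangle$, where $g=\sum_{|\mathcal{T}|\le D}g_{\mathcal{T}}$ and $g_{\mathcal{T}}=\sum_{S\subseteq\mathcal{T}}(-1)^{|\mathcal{T}\setminus S|}L_S\in W_{\mathcal{T}}$.
\item The identity $\sum_{S,S'\subseteq\mathcal{T}}(-1)^{|\mathcal{T}\setminus S|+|\mathcal{T}\setminus S'|}\prod_{i\in S\cap S'}a_i=\prod_{i\in\mathcal{T}}(a_i-1)$ turns $\|g_{\mathcal{T}}\|^2$ into exactly the $\mathcal{T}$-term of the formula.
\end{itemize}
Each quantity here is a finite alternating sum of finite nonnegative integrals, so no exchange needs further justification.

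If instead some $L_S\notin L^2$ with $|S|\le D$, the normalized test functions $\min(L_S,K)$ lie in $V_D$ and drive the advantage to $+\infty$. This matches the divergence of the $\mathcal{T}=S$ term for a minimal such $S$.
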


In our case, $N = \binom{n}{k}$ is the index set for all possible $S \in \binom{[n]}{k}$, $\PP_{S,x}$ for $S \in \binom{[n]}{k}$ is a discrete distribution over the possible clauses with predicate $P$ supported on $S$ and $\perp$ (we use $\perp$ to denote the nonexistence of a clause on $S$, i.e., $S$ is not sampled by $\PP_{S,x}$), and $\QQ_S$ places positive probability on all possible clauses with predicate $P$ on $S$ and $\perp$. Thus, the conditions that $\PP_{S,x}$ is absolutely continuous with respect to $\QQ_S$ and that $ \PP_{S,x}/  \QQ_S \in L^2(\QQ_S)$ hold. Next, we establish the following explicit bound for the summands in the coordinate advantage in Theorem~\ref{thm:adv-LVM}.

\begin{proposition}\label{prop:adv-summand-bound}
    Let $P: \Sym([k]) \to \{0,1\}$ be a non-trivial ordering predicate with coordinate degree $d$. Let $p = p(n) > 0$ and $\eps = \eps(n) \in (0, \frac{1}{2^{2k+1}(2k)!}]$. Let $\QQ = \QQ_n$ be the distribution of $p$-random OCSP with predicate $P$ on $n$ variables and $m$ clauses. Let $\PP = \PP_n(\eps)$ be the planted distribution in the quiet planting construction. Then, for any $\mathcal{T} \subseteq \binom{[n]}{k}$,
    \begin{align*}
        &\underset{x^{(1)},x^{(2)} \sim \Unif([0,1]^n) }{\EE}\left[\prod_{S\in \mathcal{T}} \underset{Y \sim \QQ_S}{\EE}\left[\left(\frac{\PP_{S,x^{(1)}}(Y)}{\QQ_S(Y)}-1\right)\left(\frac{\PP_{S,x^{(2)}}(Y)}{\QQ_S(Y)}-1\right)\right]\right]\\
        &\le \left(4\cdot 2^{4k}((2k)!)^2 p\eps^2\right)^{|\mathcal{T}|} \one\{\mathcal{T} \in \mathcal{G}_d\},
    \end{align*}
    where $\mathcal{G}_d$ denotes the $d$-overlap condition, defined as follows. We say a collection $\mathcal{T} \subseteq \binom{[n]}{k}$ satisfies the $d$-overlap condition $\mathcal{G}_d$ if for all $T \in \mathcal{T}$, 
    \begin{align*}
        \left|\left(\cup_{S\in \mathcal{T} \setminus \{T\} } S\right) \cap T\right| \ge d.
    \end{align*}
\end{proposition}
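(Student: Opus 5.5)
We first compute the single-coordinate factor explicitly. For $S \in \binom{[n]}{k}$ and $x \in [0,1]^n$ with distinct coordinates, $\QQ_S$ puts mass $1-p$ on $\perp$ and mass $p/k!$ on each clause $(S, P\circ\nu^{-1})$, $\nu\in\Sym(S)$. The planted $\PP_{S,x}$ puts mass $1-p$ on $\perp$ and mass $\frac{p}{k!}\bigl(1+2\eps f^{=d}(x_\nu)/\|f^{=d}\|_2^2\bigr)$ on that clause. Different $\nu$ may give the same clause if $P\circ\nu^{-1}=P\circ\nu'^{-1}$. To avoid this, we either merge those masses, which only decreases the $\chi^2$-type inner product by Jensen/data processing, or simply work with the finer observation $\nu$; I would argue the bound for the finer variable and note that coarsening is a conditional expectation, so the product structure is preserved.

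With $\nu$ observed, the likelihood ratio minus one vanishes on $\perp$ and equals $g_S(x,\nu) := 2\eps f^{=d}(x_\nu)/\|f^{=d}\|_2^2$ otherwise. Hence
\[
 \EE_{Y\sim\QQ_S}[\cdots] = \frac{p}{k!}\sum_{\nu\in\Sym(S)} g_S(x^{(1)},\nu)\,g_S(x^{(2)},\nu) =: p\,K_S(x^{(1)},x^{(2)}).
\]
By Fact~\ref{fact:fd}, $|g_S|\le 2\eps\,2^{2k}(2k)!$, so $|K_S|\le 4\cdot 2^{4k}((2k)!)^2\eps^2$. This gives the factor per element of $\mathcal{T}$. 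The product over $S\in\mathcal{T}$ is then bounded in absolute value by $(4\cdot2^{4k}((2k)!)^2p\eps^2)^{|\mathcal{T}|}$, even before taking the expectation over $x^{(1)},x^{(2)}$.

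It remains to show that the expectation vanishes when $\mathcal{T}\notin\mathcal{G}_d$. Suppose some $T\in\mathcal{T}$ has $|U\cap T|\le d-1$, where $U=\bigcup_{S\in\mathcal{T}\setminus\{T\}}S$. Condition on all coordinates of $x^{(1)}$ and $x^{(2)}$ indexed by $U$. The factors for $S\ne T$ are then fixed, and the factor for $T$ becomes
\[
\frac{p}{k!}\sum_\nu \EE[g_T(x^{(1)},\nu)\mid x^{(1)}_{U\cap T}]\;\EE[g_T(x^{(2)},\nu)\mid x^{(2)}_{U\cap T}],
\]
using the independence of $x^{(1)}$ and $x^{(2)}$ and the fact that the remaining coordinates in $T\setminus U$ are fresh and independent. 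For a fixed $\nu$, $g_T(x,\nu)$ is a function of $y=x_\nu\in[0,1]^k$ proportional to $f^{=d}(y)=\sum_{|R|=d}f_R(y_R)$. Conditioning on at most $d-1$ coordinates of $y$ kills each $f_R$ with $|R|=d$, because some coordinate of $R$ is integrated out and Efron–Stein components have zero conditional mean whenever a coordinate in $R$ is not conditioned on. So each conditional expectation is $0$, the $T$-factor is $0$, and the whole expectation vanishes.

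The main technical care is in two places. First, the absolute continuity and the coarsening from $\nu$ to the clause label, when $P$ has symmetries, must be handled correctly. Second, the vanishing argument requires that the fresh coordinates be uniform and independent of everything conditioned on. Both are routine; the vanishing step, which relies on the Efron–Stein property that a component has zero conditional mean given any strict subset of its coordinates, is the crux. That property follows directly from the formula in Theorem~\ref{thm:EF} via inclusion–exclusion.
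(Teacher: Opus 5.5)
Your proposal is correct and follows essentially the paper's argument. You use the same per-clause factor $\frac{4p\eps^2}{\|f^{=d}\|_2^4}\EE_\nu[f^{=d}(x^{(1)}_\nu)f^{=d}(x^{(2)}_\nu)]$, the same pointwise bound from Fact~\ref{fact:fd}, and the same Efron--Stein vanishing when some $T$ shares at most $d-1$ coordinates with the rest; the only difference is that you condition on $x_U$ and use zero conditional mean, while the paper first rewrites the summand as $\EE_{\nu}[(\EE_x\prod_S f^{=d}(x_{\nu_S}))^2]$ and invokes orthogonality to a $(d-1)$-junta. Your remark about predicates with symmetries, where distinct $\nu$ give the same clause, addresses something the paper's formula $\QQ_S((S,P\circ\nu^{-1}))=\frac{p}{k!}$ silently ignores, and your data-processing fix is sound: each summand equals the squared norm $\bigl\|\EE_x\bigotimes_{S\in\mathcal{T}}\bigl(\tfrac{\PP_{S,x}}{\QQ_S}-1\bigr)\bigr\|^2$, which can only shrink under the coarsening projection.
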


\begin{proof}
    As in the quiet planting construction, define $f: [0,1]^k \to \{0,1\}$ by
    \begin{align*}
        f(x_1, \dots, x_k) = P(\ord(x)).
    \end{align*}
    Let the Efron-Stein decomposition of $f$ be 
    \begin{align*}
        f(x) = \sum_{\substack{S \subseteq [k]: \\|S| \le d} } f_S(x_S),
    \end{align*}
    with Efron-Stein components $f_S$, and
    \begin{align*}
        f^{=d}(x) \colonequals \sum_{\substack{S \subseteq [k]: \\|S| = d} } f_S(x_S).
    \end{align*}

    For $S \in \binom{[n]}{k}$, consider $Y \in \{(S, P \circ \nu^{-1}): \nu \in \Sym(S)\} \cup \{\perp\}$. Under $\QQ_S$ ,
    \begin{align*}
        \QQ_S((S, P \circ \nu^{-1})) &= \frac{p}{k!},\\
        \QQ_S(\perp) &= 1 - p.
    \end{align*}
    Under $\PP_{S,x}$ as in the quiet planting construction is
    \begin{align*}
        \PP_{S,x}((S, P \circ \nu^{-1})) &= \frac{p}{k!} \left(1 + 2\eps \frac{f^{=d}(x_{\nu})}{\|f^{=d}\|_2^2}\right),\\
        \PP_{S,x}(\perp) &= 1 - p.
    \end{align*}
    Therefore, for $\mathcal{T} \subseteq \binom{[n]}{k}$, we get
    \begin{align*}
        &\underset{x^{(1)},x^{(2)} \sim \Unif([0,1]^n) }{\EE}\left[\prod_{S\in \mathcal{T}} \underset{Y \sim \QQ_S}{\EE}\left[\left(\frac{\PP_{S,x^{(1)}}(Y)}{\QQ_S(Y)}-1\right)\left(\frac{\PP_{S,x^{(2)}}(Y)}{\QQ_S(Y)}-1\right)\right]\right]\\
        &= \underset{x^{(1)},x^{(2)} \sim \Unif([0,1]^n) }{\EE}\left[\prod_{S\in \mathcal{T}} \underset{\nu \sim \Unif(\Sym(S))}{\EE}\left[4p\eps^2 \frac{f^{=d}(x^{(1)}_{\nu})f^{=d}(x^{(2)}_{\nu})}{\|f^{=d}\|_2^4}\right]\right]\\
        &= \left(\frac{4p\eps^2}{\|f^{=d}\|_2^4}\right)^{|\mathcal{T}|} \underset{\{\nu_S\}_{S\in \mathcal{T}} \sim \bigotimes_{S\in \mathcal{T}} \Unif(\Sym(S)) }{\EE} \left[\underset{x^{(1)}, x^{(2)}\sim \Unif([0,1]^n)}{\EE} \left[\prod_{S\in \mathcal{T}} f^{=d}(x^{(1)}_{\nu_S})f^{=d}(x^{(2)}_{\nu_S})\right]\right]\\
        &= \left(\frac{4p\eps^2}{\|f^{=d}\|_2^4}\right)^{|\mathcal{T}|} \underset{\{\nu_S\}_{S\in \mathcal{T}} \sim \bigotimes_{S\in \mathcal{T}} \Unif(\Sym(S)) }{\EE} \left[\left(\underset{x \sim \Unif([0,1]^n)}{\EE} \left[\prod_{S\in \mathcal{T}} f^{=d}x_{\nu_S}\right]\right)^2\right]. \numberthis \label{eq:summand-adv}
    \end{align*}
    Next we show that \eqref{eq:summand-adv} vanishes whenever $\mathcal{T} \subseteq \binom{[n]}{k}$ does not satisfy the $d$-overlap condition $\mathcal{G}_d$.
    Observe that for any $T \in \mathcal{T}$, we may write
    \begin{align*}
        &\underset{x \sim \Unif([0,1]^n)}{\EE} \left[\prod_{S\in \mathcal{T}} f^{=d}(x_{\nu_S})\right]\\
        &= \underset{z \sim \Unif([0,1]^{[n] \setminus T })}{\EE} \left[\underset{\substack{y \sim \Unif([0,1]^{T }),\\ x \in [0,1]^n:\\ x_{T} = y, x_{[n] \setminus T } = z }}{\EE} \left[f^{=d}(y_{\nu_T}) \prod_{S\in \mathcal{T} \setminus \{T\}} f^{=d}(x_{\nu_S})\right]\right].
    \end{align*}
    If the union of $S \in \mathcal{T} \setminus \{T\}$ share at most $d-1$ variables with $T$, i.e., $$\left|\left(\cup_{S\in \mathcal{T} \setminus \{T\} } S\right) \cap T\right| \le d-1,$$
    then conditioned on $z \sim \Unif([0,1]^{[n] \setminus T })$,
    \begin{align*}
        \prod_{S\in \mathcal{T} \setminus \{T\}} f^{=d}(x_{\nu_S})
    \end{align*}
    is a $(d-1)$-junta on $[0,1]^{T}$, i.e., it depends on at most $d-1$ coordinates of $T$, and it has coordinate degree at most $d-1$ by Fact~\ref{fact:junta-degree}. Using the orthogonality of the Efron-Stein decomposition stated in Fact~\ref{fact:ES-orthogonal}, we see that if
    $$\left|\left(\cup_{S\in \mathcal{T} \setminus \{T\} } S\right) \cap T\right| \le d-1,$$
    then $f^{=d}(y_{\nu_T})$ consists of degree $d$ Efron-Stein components, $\prod_{S\in \mathcal{T} \setminus \{T\}} f^{=d}(x_{\nu_S})$ has coordinate degree at most $d-1$, and thus
    \begin{align*}
        \underset{z \sim \Unif([0,1]^{[n] \setminus T })}{\EE} \left[\underset{\substack{y \sim \Unif([0,1]^{T }),\\ x \in [0,1]^n:\\ x_{T} = y, x_{[n] \setminus T } = z }}{\EE} \left[f^{=d}(y_{\nu_T}) \prod_{S\in \mathcal{T} \setminus \{T\}} f^{=d}(x_{\nu_S})\right]\right]= 0.
    \end{align*}
    Thus, whenever the $d$-overlap condition $\mathcal{G}_d$ is not met by $\mathcal{T}$, \eqref{eq:summand-adv} vanishes as promised. 
    
    Now suppose $\mathcal{T} \subseteq \binom{[n]}{k}$ does satisfy the $d$-overlap condition. Recall $|f^{=d}(y)| \le 2^{2k}$ for any $y \in [0,1]^k$ and $\|f^{=d}\|_2^2 \ge \frac{1}{(2k)!}$ by Fact~\ref{fact:fd}. We can therefore upper bound \eqref{eq:summand-adv} as
    \begin{align*}
        &\left(\frac{4p\eps^2}{\|f^{=d}\|_2^4}\right)^{|\mathcal{T}|} \underset{\{\nu_S\}_{S\in \mathcal{T}} \sim \bigotimes_{S\in \mathcal{T}} \Unif(\Sym(S)) }{\EE} \left[\left(\underset{x \sim \Unif([0,1]^n)}{\EE} \left[\prod_{S\in \mathcal{T}} f^{=d}x_{\nu_S}\right]\right)^2\right]\\
        &\le \left(4((2k)!)^2p\eps^2\right)^{|\mathcal{T}|} \underset{\{\nu_S\}_{S\in \mathcal{T}} \sim \bigotimes_{S\in \mathcal{T}} \Unif(\Sym(S)) }{\EE} \left[\left(2^{2k|\mathcal{T}|} \one\{\mathcal{T} \in \mathcal{G}_d\}\right)^2\right]\\
        &= \left(4\cdot 2^{4k}((2k)!)^2 p\eps^2\right)^{|\mathcal{T}|} \one\{\mathcal{T} \in \mathcal{G}_d\}. \numberthis \label{ineq:summand-adv-ub}
    \end{align*}
    
\end{proof}

Now we are ready to complete the proof of the low coordinate degree lower bound in Theorem~\ref{thm:low-deg-hardness}.

\subsection{Proof of Theorem~\ref{thm:low-deg-hardness}}

\begin{proof}
    By Theorem~\ref{thm:adv-LVM}, the coordinate degree-$D$ advantage of the LVM constructed by the quiet planting strategy can be bounded as
    \begin{align*}
        &\Adv_{\le D}(\PP, \QQ)^2\\
        &= \sum_{\substack{\mathcal{T} \subseteq \binom{[n]}{k}:\\ |\mathcal{T}| \le D} } \underset{x^{(1)},x^{(2)} \sim \Unif([0,1]^n) }{\EE}\left[\prod_{S\in \mathcal{T}} \underset{Y \sim \QQ_S}{\EE}\left[\left(\frac{\PP_{S,x^{(1)}}(Y)}{\QQ_S(Y)}-1\right)\left(\frac{\PP_{S,x^{(2)}}(Y)}{\QQ_S(Y)}-1\right)\right]\right]
        \intertext{Now we bound each summand by Proposition~\ref{prop:adv-summand-bound} and get}
        &\le \sum_{\substack{\mathcal{T} \subseteq \binom{[n]}{k}:\\ |\mathcal{T}| \le D} } \left(4\cdot 2^{4k}((2k)!)^2 p\eps^2\right)^{|\mathcal{T}|} \one\{\mathcal{T} \in \mathcal{G}_d\}\\
        &= 1 + \sum_{r=1}^D \sum_{\substack{\mathcal{T} \subseteq \binom{[n]}{k}:\\ |\mathcal{T}| = r} } \left(4\cdot 2^{4k}((2k)!)^2 p\eps^2\right)^{r} \one\{\mathcal{T} \in \mathcal{G}_d\}. \numberthis \label{ineq:adv-bound}
    \end{align*}
    Now we need to bound the number of $\mathcal{T} \subseteq \binom{[n]}{k}$ with size $|T| = r$ that satisfy the $d$-overlap condition $\mathcal{G}_d$. Recall that $\mathcal{T}$ satisfies $\mathcal{G}_d$ if for all $T \in \mathcal{T}$, 
    \begin{align*}
        \left|\left(\cup_{S\in \mathcal{T} \setminus \{T\} } S\right) \cap T\right| \ge d.
    \end{align*}
    Fix one such $\mathcal{T} \in \mathcal{G}_{d}$. For every $i \in [n]$, define $c_{\mathcal{T}}(i) \colonequals |\{S \in \mathcal{T}: i \in S\}|$. Then, the following holds
    \begin{align*}
        \left|\cup_{S \in \mathcal{T}} S_i\right| &= \sum_{i \in \cup_{S \in \mathcal{T}} S} \sum_{S \in \mathcal{T}: i\in S} \frac{1}{c_{\mathcal{T}}(i)}\\
        &=\sum_{S \in \mathcal{T}} \sum_{i \in S} \frac{1}{c_{\mathcal{T}}(i)}.
    \end{align*}
    The $d$-overlap condition $\mathcal{G}_d$ implies that for every $S \in \mathcal{T}$,
    \begin{align*}
        |\{i \in S: c_{\mathcal{T}}(i) \ge 2\}| \ge d.
    \end{align*}
    As a result,
    \begin{align*}
        \left|\cup_{S \in \mathcal{T}} S\right| &=\sum_{S \in \mathcal{T}} \sum_{i \in S} \frac{1}{c_{\mathcal{T}}(i)}\\
        &\le \sum_{S \in \mathcal{T}} \left((|S| - d) + \frac{d}{2}\right)\\
        &\le \left(k-\frac{d}{2}\right)|\mathcal{T}|.
    \end{align*}

    Consequently, we can upper bound the number of $\mathcal{T} \subseteq \binom{[n]}{k}$ with size $|\mathcal{T}| = r$ that satisfy $\mathcal{G}_d$ by first enumerating $U \subseteq [n]$ of size $\lfloor \left(k - \frac{d}{2}\right)r \rfloor$, and then enumerating the number of ways to choose $r$ different $k$-subsets within $U$:
    \begin{align*}
        \sum_{\substack{U \subseteq [n]:\\ |U| = \lfloor \left(k - \frac{d}{2}\right)r \rfloor} } \binom{\binom{|U|}{k}}{r} &= \binom{n}{\lfloor \left(k - \frac{d}{2}\right)r \rfloor} \binom{\binom{\lfloor \left(k - \frac{d}{2}\right)r \rfloor}{k}}{r}\\
        &\le \left(\frac{en}{ \left(k-\frac{d}{2}\right)r }\right)^{\left(k-\frac{d}{2}\right)r} \frac{\left(e \left(\frac{e \left(k - \frac{d}{2}\right)r }{k}\right)^k\right)^r}{r^r}\\
        &\le \left(\left(\frac{e}{k-\frac{d}{2}}\right)^{k - \frac{d}{2}} e \left(\frac{e\left(k - \frac{d}{2}\right)}{k}\right)^k\right)^r n^{\left(k - \frac{d}{2}\right)r}r^{\left(\frac{d}{2}-1\right)r},\numberthis \label{ineq:r-bound}
    \end{align*}
    where in the second-to-last line we used that $\left(\frac{en}{ \lfloor\left(k-\frac{d}{2}\right)r\rfloor }\right)^{\lfloor \left(k-\frac{d}{2}\right)r \rfloor} \le \left(\frac{en}{ \left(k-\frac{d}{2}\right)r }\right)^{\left(k-\frac{d}{2}\right)r}$, because by assumption $\left(k-\frac{d}{2}\right)r \le k D \le n$, and for $a > 0$, the function $\left(\frac{a}{x}\right)^x$ is increasing for $0 < x \le \frac{a}{e}$.

    Now we plug the bound \eqref{ineq:r-bound} back to \eqref{ineq:adv-bound} and get
    \begin{align*}
        &1 + \sum_{r=1}^D \sum_{\substack{\mathcal{T} \subseteq \binom{[n]}{k}:\\ |\mathcal{T}| = r} } \left(4\cdot 2^{4k}((2k)!)^2 p\eps^2\right)^{r} \one\{\mathcal{T} \in \mathcal{G}_d\}\\
        &\le 1 + \sum_{r=1}^D  \left(\left(\frac{e}{k-\frac{d}{2}}\right)^{k - \frac{d}{2}} e \left(\frac{e\left(k - \frac{d}{2}\right)}{k}\right)^k\right)^r n^{\left(k - \frac{d}{2}\right)r}r^{\left(\frac{d}{2}-1\right)r}\left(4\cdot 2^{4k}((2k)!)^2 p\eps^2\right)^{r} \\
        &\le 1 + \sum_{r=1}^D  \left(4\cdot 2^{4k}((2k)!)^2 e^{2k+1-\frac{d}{2}}\left(k - \frac{d}{2}\right)^{\frac{d}{2}} \cdot \frac{\overline{m} \eps^2 r^{\frac{d}{2}-1} }{n^{\frac{d}{2}}} \right)^r, \numberthis \label{ineq:quiet-planting-adv-bound}
    \end{align*}
    where we used $p = \frac{\overline{m}}{\binom{n}{k}} \le \frac{\overline{m} k^k}{n^k}$ in the last step.
    
    Since $4\cdot 2^{4k}((2k)!)^2 e^{2k+1-\frac{d}{2}}\left(k - \frac{d}{2}\right)^{\frac{d}{2}}$ is a constant depending on $k$, we conclude that when $\overline{m} = o\left(\frac{n}{\eps^2} \left(\frac{n}{D}\right)^{\frac{d}{2}-1}\right)$, the summation in \eqref{ineq:quiet-planting-adv-bound} is $o(1)$, and
    \begin{align*}
        \Adv_{\le D}(\PP, \QQ)^2 \le 1 + o(1).
    \end{align*}
    Assuming Conjecture~\ref{conj:generalized-low-deg-conj}, no algorithm that runs in time $\exp(O(D/\poly\log(n)))$ achieves strong detection between $\PP$ and $\QQ$. By contrapositive of Proposition~\ref{prop:detection-to-refutation-reduction}, no algorithm that runs in time $\exp(O(D/\poly\log(n)))$ achieves $\eps$-refutation of a $p$-random OCSP instance with predicate $P$ on $n$ variables, which finishes the proof.
\end{proof}

\section*{Acknowledgement}
The rank decomposition procedure (see Section~\ref{sec:rank-decom}) was discovered in an interactive session with ChatGPT 5.5-thinking.

\clearpage
}

\bibliographystyle{alpha}
\bibliography{main}

\end{document}